\documentclass[a4paper,UKenglish,cleveref, autoref, thm-restate]{lipics-v2019-sample-article}

\title{How Hard is Safe Bribery?\footnote{Accepted for oral presentation at AAMAS 2022}} %TODO Please add

\titlerunning{How Hard is Safe Bribery?} %TODO optional, please use if title is longer than one line

\author{Neel Karia}{Microsoft Research, Bengaluru, India}{t-neelkaria@microsoft.com, neel2karia@gmail.com}{}{}

\author{Faraaz Mallick}{IIT Kharagpur, Kharagpur, India}{faraazrm@iitkgp.ac.in}{}{}

\author{Palash Dey}{IIT Kharagpur, Kharagpur, India}{palash.dey@cse.iitkgp.ac.in}{}{}

\authorrunning{N. Karia,  F. Mallick and P. Dey} %TODO mandatory. First: Use abbreviated first/middle names. Second (only in severe cases): Use first author plus 'et al.'

\Copyright{Open Access} %TODO mandatory, please use full first names. LIPIcs license is "CC-BY";  http://creativecommons.org/licenses/by/3.0/

\ccsdesc[100]{Theory of computation} %TODO mandatory: Please choose ACM 2012 classifications from https://dl.acm.org/ccs/ccs_flat.cfm 

\keywords{Bribery; Voting; Social Choice; Safe Bribery; Shift Bribery; 
Computational Complexity; Parameterized Hardness; Algorithms} %TODO mandatory; please add comma-separated list of keywords

%\category{} %optional, e.g. invited paper

%\relatedversion{} %optional, e.g. full version hosted on arXiv, HAL, or other respository/website
%\relatedversion{A full version of the paper is available at \url{...}.}

%\supplement{}%optional, e.g. related research data, source code, ... hosted on a repository like zenodo, figshare, GitHub, ...

%\funding{(Optional) general funding statement \dots}%optional, to capture a funding statement, which applies to all authors. Please enter author specific funding statements as fifth argument of the \author macro.

%\acknowledgements{I want to thank \dots}%optional

\nolinenumbers %uncomment to disable line numbering

\hideLIPIcs  %uncomment to remove references to LIPIcs series (logo, DOI, ...), e.g. when preparing a pre-final version to be uploaded to arXiv or another public repository

%Editor-only macros:: begin (do not touch as author)%%%%%%%%%%%%%%%%%%%%%%%%%%%%%%%%%%
\EventEditors{}
\EventNoEds{0}
\EventLongTitle{arXiv}
\EventShortTitle{arXiv}
\EventAcronym{arXiv}
\EventYear{2022}
\EventDate{2022}
\EventLocation{}
\EventLogo{}
\SeriesVolume{}
\ArticleNo{0}
%%%%%%%%%%%%%%%%%%%%%%%%%%%%%%%%%%%%%%%%%%%%%%%%%%%%%%

\pdfoutput=1
\usepackage{cleveref}
\usepackage{makecell}
\usepackage{multirow}
\usepackage{mathtools}
\usepackage{eulervm}
\usepackage{wrapfig}
\usepackage{lipsum}
\usepackage{float}
\usepackage[ruled,vlined]{algorithm2e}
\usepackage{placeins}
\usepackage{threeparttable}
\usepackage{hhline}
\usepackage{multirow}

\usepackage{url}

\usepackage{xspace}
\usepackage{amsmath}

\newcommand{\NPC}{\ensuremath{\mathsf{NP}}-complete\xspace}
\newcommand{\NPH}{\ensuremath{\mathsf{NP}}-hard\xspace}

\newcommand{\Pb}{\ensuremath{\mathsf{P}}\xspace}

\newcommand{\coNP}{\ensuremath{\mathsf{co}}-\ensuremath{\mathsf{NP}}\xspace}

\newcommand{\coNPC}{\ensuremath{\mathsf{co}}-\ensuremath{\mathsf{NP}}-complete\xspace}
\newcommand{\WOH}{\ensuremath{\mathsf{W[1]}}\text{-hard}\xspace}
\newcommand{\coWOH}{\ensuremath{\mathsf{co}}-\ensuremath{\mathsf{W[1]}}-hard\xspace}
\newcommand{\WKH}{\ensuremath{\mathsf{W[k]}}-hard\xspace}
\newcommand{\WTH}{\ensuremath{\mathsf{W[2]}}\text{-hard}\xspace}
\newcommand{\WO}{\ensuremath{\mathsf{W[1]}}\xspace}
\newcommand{\WT}{\ensuremath{\mathsf{W[2]}}\xspace}
\newcommand{\WP}{\ensuremath{\mathsf{W[P]}}\xspace}
\newcommand{\XP}{\ensuremath{\mathsf{XP}}\xspace}
\newcommand{\FPT}{\ensuremath{\mathsf{FPT}}\xspace}

\newcommand{\el}{\ensuremath{\ell}\xspace}

\newcommand{\suc}{\ensuremath{\succ}\xspace}

\newcommand{\YES}{{\sc{yes}}\xspace}
\newcommand{\NO}{{\sc{no}}\xspace}
\newcommand{\no}{{\sc{no}}\xspace}
\newcommand{\yes}{{\sc{yes}}\xspace}

\newcommand{\IF}{{\sc{If}}\xspace}
\newcommand{\ELSE}{{\sc{Else}}\xspace}

%for smileys :)
%\usepackage{MnSymbol,wasysym}

\newcommand{\IS}{\textsc{Is Safe}\xspace}
\newcommand{\DIS}{\textsc{\$Bribery Is Safe}\xspace}
\newcommand{\SDB}{\textsc{Safe \$Bribery}\xspace}
\newcommand{\SBIS}{\textsc{Shift Bribery Is Safe}\xspace}
\newcommand{\SSB}{\textsc{Safe Shift Bribery}\xspace}
\newcommand{\DB}{\textsc{\$Bribery}\xspace}

\newcommand{\SHB}{\textsc{Shift Bribery}\xspace}
\newcommand{\SWB}{\textsc{Swap Bribery}\xspace}
\newcommand{\XTC}{\textsc{Exact Cover by 3-Sets}\xspace}
\newcommand{\XTFC}{\textsc{Exact Cover by (3,4)-Sets}\xspace}
\newcommand{\MCI}{\textsc{Multicoloured Independent Set}\xspace}
\newcommand{\MCC}{\textsc{Multicoloured Clique}\xspace}

\renewcommand{\AA}{\ensuremath{\mathcal A}\xspace}

\newcommand{\CC}{\ensuremath{\mathcal C}\xspace}
\newcommand{\DD}{\ensuremath{\mathcal D}\xspace}
\newcommand{\EE}{\ensuremath{\mathcal E}\xspace}

\newcommand{\GG}{\ensuremath{\mathcal G}\xspace}

\newcommand{\LL}{\ensuremath{\mathcal L}\xspace}

\newcommand{\OO}{\ensuremath{\mathcal O}\xspace}
\newcommand{\PP}{\ensuremath{\mathcal P}\xspace}
\newcommand{\QQ}{\ensuremath{\mathcal Q}\xspace}
\newcommand{\RR}{\ensuremath{\mathcal R}\xspace}
\renewcommand{\SS}{\ensuremath{\mathcal S}\xspace}

\newcommand{\UU}{\ensuremath{\mathcal U}\xspace}
\newcommand{\VV}{\ensuremath{\mathcal V}\xspace}

\newcommand{\XX}{\ensuremath{\mathcal X}\xspace}

\newcommand{\bbb}{\ensuremath{\mathfrak b}\xspace}

\newcommand{\sss}{\ensuremath{\mathfrak s}\xspace}

\newcommand{\vvv}{\ensuremath{\mathfrak v}\xspace}

\usepackage{nicefrac}

\newcommand{\nfrac}{\nicefrac}

\newcommand{\eps}{\varepsilon}
\renewcommand{\epsilon}{\eps}

\newcommand{\ignore}[1]{}

\renewcommand{\leq}{\leqslant}
\renewcommand{\geq}{\geqslant}
\renewcommand{\ge}{\geqslant}

\usepackage{cleveref}
\usepackage {enumerate}

\crefname{property}{Property}{Properties}
\crefname{example}{Example}{Examples}
\crefname{theorem}{Theorem}{Theorems}
\crefname{observation}{Observation}{Observations}
\crefname{lemma}{Lemma}{Lemmas}
\crefname{corollary}{Corollary}{Corollaries}
\crefname{proposition}{Proposition}{Propositions}
\crefname{definition}{Definition}{Definitions}
\crefname{claim}{Claim}{Claims}
\crefname{reductionrule}{Reduction rule}{Reduction rules}
\crefname{ineq}{inequality}{Inequalities}

% \newcounter{property}
% \newcounter{example}
% \newcounter{proposition}
% \newcounter{claim}
% \newcounter{theorem}
% \newcounter{corollary}
% \newcounter{observation}
% \newcounter{lemma}
% \newcounter{definition}
% \newcounter{reductionrule}

% \numberwithin{property}{section}
% \numberwithin{example}{section}
% \numberwithin{proposition}{section}
% \numberwithin{claim}{section}
% \numberwithin{theorem}{section}
% \numberwithin{corollary}{section}
% \numberwithin{observation}{section}
% \numberwithin{lemma}{section}
% \numberwithin{definition}{section}
% \numberwithin{reductionrule}{section}

% \NewEnviron{myequation}{%
%     %\begin{equation}
%     \scalebox{1.3}{$\BODY$}
%     %\end{equation}
%     }

\sloppy
\begin{document}

\maketitle
%\def\thefootnote{$^\dagger$}\footnotetext{Both authors contributed equally to the paper}

%TODO mandatory: add short abstract of the document
\begin{abstract}
Bribery in an election is one of the well-studied control problems in computational social choice. In this paper, we propose and study the safe bribery problem. Here the goal of the briber is to ask the bribed voters to vote in such a way that the briber never prefers the original winner (of the unbribed election) more than the new winner, even if the bribed voters do not fully follow the briber's advice. Indeed, in many applications of bribery, campaigning for example, the briber often has limited control on whether the bribed voters eventually follow her recommendation and thus it is conceivable that the bribed voters can either partially or fully ignore the briber's recommendation. We provide a comprehensive complexity theoretic landscape of the safe bribery problem for many common voting rules in this paper.
\end{abstract}

\section{Introduction}
\label{sec:typesetting-summary}
Voting has always served as a fundamental tool for aggregating varied preferences in many applications in real-life and artificial intelligence~\cite{PennockHG00,jackson2008consensus}. In a typical voting setting, we have a set of candidates, a set of voters each having a preference over the set of candidates, and a voting rule which decides a winner based on the preferences of the voters. Any such voting scenario is susceptible to various kinds of control attacks -- voters or candidates or some other third party may influence the outcome of the election in their way through some unfair means~\cite{DBLP:reference/choice/FaliszewskiR16}. One of the most well-studied attacks of such type is {\em bribery,} where an external agent, called a briber, offers monetary rewards to some voters to vote as the briber suggests so that the favourite candidate of the briber wins the election~\cite{faliszewski2006complexity,faliszewski2009hard,faliszewski2009llull}. This bribery problem not only models the monetary bribing but also other situations like campaigning in an election where the monetary reward corresponds to the amount of time and energy one needs to spend to campaign for some candidate. Depending on how the briber needs to pay the voters to change their votes, various models have been studied. In the \DB problem, each voter has a fixed cost that the briber needs to pay to make the voter cast the vote of the briber's choice~\cite{faliszewski2006complexity,faliszewski2009hard}. In \SWB, the briber has to pay for each swap of consecutive candidates in a voter's preference and the price also depends on the pair of candidates being swapped~\cite{elkind2009swap}. In \SHB, the briber can only shift her favourite candidate by some positions and the cost depends on the number of positions shifted~\cite{bredereck2016complexity,kaczmarczyk2016algorithms,bredereck2014prices,maushagen2018complexity}.

To the best of our knowledge, all the existing work on bribery assumes that the bribed voters cast the exact same vote that the briber has asked them. Although this may be a reasonable assumption for some applications, in many other applications of bribery, say campaigning, the briber can never be sure that the voters will eventually cast the vote of the briber's choice. Indeed, it may very well be the case that some subset of voters follows the briber's recommendation exactly, some other subset of voters follows them partially, and the remaining voters complete ignore the briber. Moreover, the briber may not have any knowledge of these sets of voters. Now the situation will be worse for the briber if she prefers the original winner more than the winner of the resulting election, where not all bribed voters cast the votes of the briber's choice. For example, let us consider a plurality election where $10$ voters vote for candidate $a$, $8$ voters vote for candidate $b$, and $4$ voters vote for candidate $c$. Assume that the briber's preference is $c\suc a\suc b$ and the briber bribes $6$ voters voting originally for $a$ to vote for $c$. However, only $3$ of the $6$ bribed voters eventually follow the briber's recommendation, while the other $3$ bribed voters simply ignore the briber. In the resulting election, candidates $a$ and $c$ receive $7$ votes each whereas the candidate $b$ wins the election with $8$ votes. We observe that the briber would prefer the original winner $a$ over the new winner $b$. To model this kind of applications more suitably, we propose the {\em safe bribery} problem.

In the safe bribery problem, the briber has a preference $\suc_B$ which is a complete order over the candidates. Given a preference profile of a set of voters, a cost function for each voter, a favourite candidate $c$ of the briber, and a budget of the briber, the goal of the safe bribery problem is to compute if there exists a subset of voters who can be bribed in such a way that (i) if all the bribed voters follow the briber's recommendation, then $c$ wins, and (ii) for every subset of bribed voters who follows the briber's recommendation exactly, every subset of bribed voters who follows the briber's recommendation partially, and the other bribed voters ignoring the briber's recommendation, the winner of the election is not less preferred than the winner of the ``unbribed'' election according to $\suc_B$. We study the safe version of the \DB and \SHB problems in this paper. We also study the computational problem, called {\em is safe,} of deciding if a given bribed profile and a given unbribed profile is safe for the briber with respect to a preference of the briber. We refer the reader to \Cref{sec:prob} for the formal definitions of the four problems. \Cref{sec:algo} covers some algorithms for the polynomial-time results, \Cref{sec:hard} covers the hardness results, and \Cref{sec:param} deals with the parameterized complexity results.

\subsection{Contribution} We provide a comprehensive complexity theoretic landscape of the \DIS, \SDB, \SBIS, and \SSB problems. We summarise our main results in \Cref{tab:Res}. Other than these results, we also show that all the four problems are polynomial time solvable for every anonymous and efficient voting rule when we have a constant number of candidates [\Cref{theorem const is safe,theorem const safe}]. We also look at safety in \SHB from a parameterized hardness perspective, and summarise our results in \Cref{tab:ResParamIsSafe,tab:ResParamSafe}.

\begin{table}[!htbp]
	\centering
	\begin{tabular}{ |p{1.7cm}|p{2.7cm}|p{2.3cm}|p{2.4cm}|p{2.3cm}| }
		\hline
		Voting Rule & \DIS & \SDB & \SBIS & \SSB\\
		\hline
		
		Plurality & \Pb (\Cref{plu is safe}) & \Pb (\Cref{plu safe}) & \Pb (\Cref{plu is safe}) & \Pb (\Cref{plu safe}) \\
		\hline
		$k$-approval & \coNPC (\Cref{Theorem k-approval dol}) & \NPH (\Cref{Coro Obs 1}) & \Pb (\Cref{k-approval shift}) & \Pb (\Cref{k-approval shift})\\
		\hline
		Veto & \Pb (\Cref{Theorem veto shift}) & \Pb (\Cref{theorem veto dol}) & \Pb (\Cref{Theorem veto shift}) & \Pb (\Cref{theorem veto safe shift})\\
		\hline
		$k$-veto & \coNPC (\Cref{theorem k-veto}) & \NPH (\Cref{Coro Obs 1}) &  \Pb (\Cref{theorem k-veto}) & \Pb (\Cref{theorem k-veto})\\
		\hline
		Borda & \coNPC (\Cref{theorem borda dol}) & \NPH (\Cref{Coro Obs 1}) & \coNPC (\Cref{theorem borda shift}) & \NPH (\Cref{Coro Obs 1})\\
		\hline
		S. Bucklin & \coNPC (\Cref{theorem bucklin dol})  & \NPH (\Cref{Coro Obs 1}) & \Pb (\Cref{theorem bucklin shift}) & \Pb (\Cref{theorem bucklin safe shift}) \\
		\hline
		Copeland & \coNPC (\Cref{theorem condorcet dol})  & \NPH (\Cref{Coro Obs 1}) & \coNPC (\Cref{Theorem Copeland}) & \NPH (\Cref{Coro Obs 1})\\
		\hline
		Maximin & \coNPC (\Cref{theorem condorcet dol}) & \NPH (\Cref{Coro Obs 1}) & \coNPC (\Cref{theorem maximin shift}) & \NPH (\Cref{Coro Obs 1}) \\
		\hline
	\end{tabular}
	\caption{Complexity Results for Safe Bribery}
	\label{tab:Res}
\end{table}

\begin{table*}[!htbp]
	\centering
	\begin{tabular}{ |c|c|c|c| }
		\hline
		
		Voting Rule & \#shifts & \#bribed voters & \#candidates\\
		\hline
		Borda & \FPT (\Cref{param is safe shifts}) & \coWOH (\Cref{borda param voters}) & \XP (\Cref{theorem const is safe})\\
		\hline
		Copeland & \FPT (\Cref{param is safe shifts}) & \coWOH (\Cref{copeland param voters}) & \XP (\Cref{theorem const is safe})\\
		\hline
		Maximin & \FPT (\Cref{param is safe shifts}) & ? & \XP (\Cref{theorem const is safe})\\
		\hline
	\end{tabular}
	\caption{Parameterized Complexity Results for \SBIS}
	\label{tab:ResParamIsSafe}
\end{table*}

\begin{table*}[!htbp]
	\centering
	\begin{tabular}{ |p{1.7cm}|p{2.5cm}|p{2.5cm}|p{2cm}|p{2.1cm}|}
		\hline
		
		Voting Rule & \#shifts & \#candidates & \#voters & \#bribed voters OR budget\\
		\hline
		Borda & \XP (\Cref{param safe shifts}) & \XP (\Cref{param safe shifts}) & \WOH (\Cref{Coro 2}) & \WTH (\Cref{Coro 2})\\
		\hline
		Copeland & \WOH (\Cref{Coro 2}) & \XP (\Cref{param safe shifts}) & \WOH (\Cref{Coro 2}) & \WTH (\Cref{Coro 2})\\
		\hline
		Maximin & \XP (\Cref{param safe shifts}) & \XP (\Cref{param safe shifts}) & \WOH (\Cref{Coro 2}) & \WTH (\Cref{Coro 2})\\
		\hline
	\end{tabular}
	\caption{Parameterized Complexity Results for \SSB}
	\label{tab:ResParamSafe}
\end{table*}

\subsection{Related Work}

Faliszewski et al.~\cite{faliszewski2006complexity} propose the first bribery problem where the briber's goal is to change the minimum number of preferences to make some candidates win the election. Then they extend their basic model to more sophisticated models, including \DB~\cite{faliszewski2009hard,faliszewski2009llull}. Elkind et al.~\cite{elkind2009swap} extend this model further and study the \SWB problem (where there is a cost associated with every swap of candidates), and its special case, the \SHB problem. Dey et al.~\cite{frugalDeyMN16} show that the bribery problem remains intractable for many common voting rules for an interesting special case which they call frugal bribery. The bribery problem has also been studied in various other preference models, for example, truncated ballots~\cite{BaumeisterFLR12}, soft constraints~\cite{pini2013bribery}, approval ballots~\cite{schlotter2017campaign}, campaigning in societies~\cite{faliszewski2018opinion}, CP-nets~\cite{dorn2014hardness}, combinatorial domains~\cite{mattei2012bribery}, iterative elections~\cite{maushagen2018complexity}, committee selection~\cite{bredereck2016complexity}, probabilistic lobbying~\cite{erdelyi2009complexity}, local distance restricted bribery~\cite{DBLP:journals/tcs/Dey21, baumeister2019generalized, yang2016hard} etc. Erdelyi et al.~\cite{erdelyi2014bribery} study the bribery problem under voting rule uncertainty. Faliszewski et al.~\cite{faliszewski2014complexity} 
study bribery for the simplified Bucklin and the Fallback voting rules, and in doing this they explore some novel proof techniques. Xia~\cite{xia2012computing}, Kaczmarczyk and Faliszewski~\cite{kaczmarczyk2016algorithms}, as well as Yang ~\cite{yang2020complexity} study the destructive variant of bribery. Dorn and Schlotter~\cite{dorn2012multivariate} and Bredereck et al.~\cite{bredereck2014prices} explore parameterized complexity of various bribery problems. Chen et al.~\cite{chen2018protecting} provide novel mechanisms to protect elections from bribery. Knop et al.~\cite{Knop} provide a unified framework for various control problems. In control, there is addition/deletion of voters or candidates. However, in case of bribery, the votes are being modified. This is a key difference, leading to slightly differing techniques. Although most of the bribery problems are intractable, few of them, \SHB for example, have polynomial time approximation algorithms~\cite{elkind2010approximation,DBLP:conf/aaai/KellerHH18}. Manipulation, a specialization of bribery, is another fundamental attack on election~\cite{DBLP:reference/choice/ConitzerW16}. In the manipulation problem, a set of voters (called manipulators) wants to cast their preferences in such a way that (when tallied with the preferences of other candidates with matching preferences) makes some candidate win the election. Obraztsova and Elkind~\cite{DBLP:conf/aaai/ObraztsovaE12,DBLP:conf/aamas/ObraztsovaE12} initiate the study of optimal manipulation in that context. However, this is different from bribery in the sense that the voters are not incentivized to vote to match their preferences with some other voters, but are in fact incentivized to modify their vote according to the briber's choice.

The concept of safety in electoral control problems has been studied before as well. Slinko and White initiate this line of work by proposing the notion of safety in the context of manipulation and studying the class of social choice functions that are safely manipulable~\cite{slinko2008nondictatorial,slinko2014ever}. Hazon and Elkind~\cite{hazon2010complexity} and Ianovski et al.~\cite{ianovski2011complexity} study computational complexity of safely manipulating popular voting rules.

\section{Preliminaries}
	
An election is a pair $(\CC, \VV)$, where $\CC = \{c_1,\ldots,c_m\}$ is a set of candidates and $\VV = \{v_1 , \ldots, v_n\}$ is a set of voters. If not mentioned otherwise, we use $n$ and $m$ to respectively denote the number of voters and candidates. Each voter $v_i$ has a 
preference order (vote) $\succ_i$, which is a linear order over $\CC$.
We denote the set of all complete orders over $\CC$ by $\LL(\CC)$. We call a list of $n$ preference orders $\{\succ_1, \succ_2, \ldots,\succ_n\} \in \LL(\CC)^n$ an $n$-voter preference profile. We denote the $i^\textsuperscript{th}$ preference order of a preference profile \PP as $\succ_i^\PP$. A mapping $r : \LL(\CC)^n \rightarrow \CC$ is called a resolute voting rule (as we assume the unique-winner model); in case of ties, the winner is decided by a \textit{lexicographic} tie-breaking mechanism $\succ_t$, which is some pre-fixed ordering over the candidates. For a set of candidates $X$, let $\overrightarrow{X}$ be an ordering over them. Then, $\overleftarrow{X}$ denotes the reversed ordering of $\overrightarrow{X}$.

Let $[\el]$ represent the set of positive natural numbers up to $\el$ for any positive integer \el. A voting rule $r$ is called \textit{anonymous} if, for every preference profile $(\succ_i )_{i \in [n]} \in \LL(\CC)^n$ and permutation $\sigma$ of $[n]$, we have $r((\succ_i )_{i \in [n]} ) = r((\succ_{\sigma(i)} )_{i \in [n]} )$. A voting rule is called \textit{efficient} if the winner can be computed in polynomial time of the input length. 

A scoring rule is induced by an $m$-dimensional vector, $(\alpha_1, \ldots, \alpha_m) \in \mathbb{Z}^m$ with $\alpha_1 \geq \alpha_2 \geq \ldots \geq \alpha_m$ and $\alpha_1 > \alpha_m$. A candidate gets a score of $\alpha_i$ from a voter if she is placed at the $i^{th}$ position in the voter's preference order. The score of a candidate from a voter set is the sum of the scores she receives from each of the voters. If $\alpha_i$ is $1$ for $i \in [k]$ and $0$ otherwise, we get the $k$-approval rule. If $\alpha_i$ is $0$ for $i \in [m-k]$ and $-1$ otherwise, we get the $k$-veto rule. 
The scoring rules for score vectors $(1, 0, \ldots , 0)$ and $(0, \ldots , 0, -1)$ are called plurality and veto rules respectively. The scoring rule for score vector $(m - 1, m - 2, \ldots , 1, 0)$ is known as the Borda rule. 

The scores for the other voting rules studied in the paper (apart from scoring rules) are defined as follows. Let $vs(a,b)$ be the difference in the number of votes in which $a$ precedes $b$ and the number of votes in which $a$ succeeds $b$, for $a,b \in \CC$.
The maximin score of a candidate $a$ is $\min_{b \neq a} vs(a, b)$. The candidate with the maximum maximin score (after tie-breaking) is the winner.
Given $\alpha \in [0, 1]$, the Copeland$^\alpha$ score of a candidate $a$ is $|\{b \neq a : vs(a, b) > 0\}| + \alpha \times |\{b \neq a : vs(a, b) = 0\}|$. The candidate with the maximum Copeland$^\alpha$ score (after tie-breaking) is the winner. We will assume $\alpha$ to be zero, if not mentioned separately. A candidate $a$ that has a positive pairwise score $vs(a, b)$ against all other candidates $b \in \CC$ is called a Condorcet winner. Rules which select a Condorcet winner as the winner (whenever it exists) are called Condorcet-consistent rules. Copeland and maximin voting rules are Condorcet-consistent.
The simplified Bucklin score of a given candidate $a$ is the minimum number $k$ such that more than half of the voters rank $a$ in their top $k$ positions. The candidate with the lowest simplified Bucklin score (after tie-breaking) is the winner and her score is called the Bucklin winning round.

We use $s(c)$ to denote the total score that a candidate $c \in \CC$ gets in an election. Similarly, if $Y \subseteq \CC$, we use $s(Y)$ to refer to the score of each candidate from $Y$; e.g. $s(Y)=5$ means that each candidate from $Y$ has a score of $5$. The voting rule under consideration will be clear from the context. Note that we assume that the briber is aware of the votes cast by each voter. 
	
\subsection{Parameterized Complexity} 
A parameterized problem $\Pi$ is a 
subset of $\Gamma^{*}\times
\mathbb{N}$, where $\Gamma$ is a finite alphabet. A central notion is \emph{fixed parameter 
	tractability} (FPT) which means, for a 
given instance $(x,k)$, solvability in time $f(k) \cdot p(|x|)$, 
where $f$ is an arbitrary function of $k$ and 
$p$ is a polynomial in the input size $|x|$. There exists a hierarchy of complexity classes above FPT, and showing that a parameterized problem is hard for one of these classes is considered
evidence that the problem is unlikely to be fixed-parameter tractable. The main classes in this hierarchy are: $ \FPT  \subseteq \WO \subseteq \WT \subseteq \cdots \subseteq \WP \subseteq \XP.$ We now define the notion of parameterized reduction~\cite{cygan2015parameterized}.

\begin{definition}
	Let $A,B$ be parameterized problems.  We say that $A$ is {\bf \em fpt-reducible} to $B$ if there exist functions 
	$f,g:\mathbb{N}\rightarrow \mathbb{N}$, a constant $\alpha \in \mathbb{N}$ and 
	an algorithm $\Phi$ which transforms an instance $(x,k)$ of $A$ into an instance $(x',g(k))$ of $B$ 
	in time $f(k) |x|^{\alpha}$
	so that $(x,k) \in A$ if and only if $(x',g(k)) \in B$.
\end{definition}

\subsection{Problem Definition}\label{sec:prob}
	
Let $r$ be any voting rule. We first define when a bribed profile is ``safe'' in the context of \DB. Let $\succ_B$ be the preference of the briber. Intuitively speaking, we say that a bribed profile is safe if no candidate preferred less than the original winner (in $\succ_B$), wins the election when a subset of bribed voters does not follow the briber's suggestion but casts their original votes. Formally, we define the notion of safety for \DB as follows.

\begin{definition}
	(Safety for \DB): Given a voting rule $r$, a set $\CC$ of $m$ candidates, a set $\VV$ of $\ n$ voters, an $n$-voter profile $\PP = (\succ_i^{\PP})_{i \in [n]} \in {\LL(\CC)}^n$, the most preferred candidate $c \in \CC$, a preference order $\succ_B \in \LL(\CC)$ of the briber, and another profile $\QQ = (\succ_i^{\QQ})_{i \in [n]} \in {\LL(\CC)}^n$ (representing a bribed profile), we say that $\QQ$ is safe for $\PP$ with respect to $\succ_B$ if the following conditions hold. Let us define $w = r(\PP)$ (where $c \succ_B w$), and $\VV_b = \{v_i \ \mid \ v_i \in \VV, \ \succ_i^{\PP} \neq \succ_i^{\QQ}\}$. We call the voters in $\VV_b$ the bribed voters, and define $\VV_u = \VV \setminus \VV_b$.
	
	\begin{itemize}
		\item \textbf{[Safety]} For every subset $\VV_b' \subseteq \VV_b$, if we have $x = r((\succ_i^{\QQ})_{v_i \in \VV_b'} ,(\succ_i^{\PP})_{v_i \in \VV \setminus \VV_b'})$, then we have $x \succ_B w$ or $x = w$.
		\item \textbf{[Success]} We have $c = r((\succ_i^{\QQ})_{v_i \in \VV_b} ,(\suc_i^{\PP})_{v_i \in \VV_u})$.
	\end{itemize}
\end{definition} 

We now define the computation problem of finding if a bribed profile is safe.

\begin{definition} (\DIS): Given a voting rule $r$, a set $\CC$ of $m$ candidates, an $n$-voter profile $\PP = (\succ_i^{\PP})_{i \in [n]} \in \LL(\CC)^n$ over $\CC$,the most preferred candidate $c \in \CC$, a preference order $\succ_B \in \LL(\CC)$ of the briber, and another profile $\QQ = (\succ_i^{\QQ})_{i\in[n]} \in \LL(\CC)^n$ (representing the bribed profile), compute if $\QQ$ is safe for $\PP$ with respect to $\succ_B$ for $r$. We denote any arbitrary instance of \DIS by $(\CC, \PP, c, \succ_B, \QQ)$.
\end{definition}

We next define the computational problem of safely bribing the voters in an election.

\begin{definition} (\SDB): Given a voting rule $r$, a set $\CC$ of $m$ candidates, a set \VV of $n$ voters, an $n$-voter profile $\PP = (\succ_i^{\PP})_{i \in [n]} \in
	\LL(\CC)^n$ corresponding to \VV, the most preferred candidate $c \in \CC$, a preference $\succ_B \in \LL(\CC)$ of the briber, a family $\Pi = (\pi_i)_{i \in [n]} \in \mathbb{N}^n$ of cost functions corresponding to every voter, and a budget $b \in \mathbb{R}$, compute
	if there exists a set $\VV_b \subseteq \VV$ of voters along with a profile $\QQ = ((\succ_i^{\QQ})_{v_i\in \VV_b},(\succ_i^{\PP})_{v_i \in \VV \setminus \VV_b}) \in \LL(\CC)^n$ such that the following conditions hold:
	(1) $\Sigma_{v_i \in \VV_b} \pi_i \leq b$,
	(2) The profile $\QQ$ is safe for $\PP$ with respect to $\succ_B$ for $r$. An arbitrary instance of \SDB is denoted by $(\CC, \PP, c, \succ_B, \Pi, b)$.
\end{definition}

We next define the concept of safety for \SHB. Here, the concept of safety is more fine-grained. Suppose a voter is bribed to shift the favourite candidate $c$ of the briber by $t$ positions. However, it is possible that the bribed voter follows the suggestion of the briber only partially and shifts $c$ to left by a lesser number of positions than $t$. Hence, the briber needs to bribe in this case in such a way that no unfavourable candidate for the briber (compared to the original winner) wins the election even if any subset of voters follow the briber's suggestion only partially. Formally, it is defined as follows.

\begin{definition} (Safety for \SHB): Given a voting rule $r$, a set $\CC$ of $m$ candidates, a set of $\VV$ of $n$ voters, an $n$-voter profile $\PP = (\succ_i^\PP)_{i \in [n]} \in \LL(\CC)^n$ corresponding to \VV, the most preferred candidate $c \in \CC$, a preference order $ \succ_B \ \in \LL(\CC)$ of the briber, and a shift vector $\sss = (s_1, \dots , s_n) \in \mathbb{N}_0^n$, we say that the shift vector $\sss$ is safe for $\PP$ with respect to $\succ_B$ if the following conditions hold. Let us define $w = r(\PP)$ (where $c \succ_B w$) and $\QQ = \{\suc_i^\QQ \ \mid \ i \in [n]$, $\suc_i^\QQ$ is obtained from $\suc_i^\PP$ by shifting $c$ to the left by $s_i$ positions$\}$
	
	\begin{itemize}
		\item \textbf{[Safety]} For every shift vector $\sss' = (s'_1, \dots , s'_n)$ with $s'_i \leq s_i, \forall i \in [n]$, if we have $\QQ' = \{ \suc_i^\QQ \ \mid \ i \in [n]$, $\suc_i^\QQ$ is obtained from $\suc_i^\PP$ by shifting $c$ to the left by $s'_i$ positions\} and $x = r(\QQ')$, then we have $x \succ_B w$ or $x = w$.
		
		\item \textbf{[Success]} We have $c = r(\QQ)$.
	\end{itemize}
\end{definition}

We next define the problem of computing if a \SHB is safe.

\begin{definition} (\SBIS): Given a voting rule $r$, a set $\CC$ of $m$ candidates, an $n$-voter profile $\PP = (\succ_i^{\PP})_{i \in [n]} \in \LL(\CC)^n$ over $\CC$, the most preferred candidate $c \in \CC$, a preference order $\succ_B \in \LL(\CC)$ of the briber, and a shift
	vector $\sss = (s_1, \dots , s_n) \in \mathbb{N}_0^n$ (corresponding to a bribing strategy), compute if $\sss$ is safe for $\PP$ with respect to $\succ_B$ for $r$. We denote any arbitrary instance of \SBIS by $(\CC, \PP, c, \succ_B, \sss)$. Sometimes it will be convenient to define an instance of \SBIS as $(\CC,\PP,c,\succ_B,\QQ)$ where \QQ is obtained by applying $\sss$ to \PP.
\end{definition}

We next define the computational problem of bribing the voters in an election, safely and according to the rules of \SHB.

\begin{definition} (\SSB): Given a voting rule $r$, a set $\CC$ of $m$ candidates, a set $\VV$ of $n$ voters, an $n$-voter profile $\PP = (\succ_i^{\PP})_{i \in [n]} \in \LL(\CC)^n$ over $\CC$, the most preferred candidate $c \in \CC$, a preference $\succ_B \in \LL(\CC)$ of the briber, a family $\Pi = (\pi_i : [m - 1] \rightarrow \mathbb{N})_{i \in [n]}$ of cost functions (where $\pi_i(0) = 0, \forall i \in [n]$) corresponding to every voter, and a budget $b \in \mathbb{R}$, compute if there exists a shift vector $\sss = (s_1, \dots, s_n) \in \mathbb{N}_0^n$ such that:
	(1) $\Sigma_{v_i \in \VV} \pi_i(s_i) \leq b$
	(2) The shift vector $\sss$ is safe for $\PP$ with respect to $\succ_B$. An arbitrary instance of \SSB is denoted by $(\CC, \PP, c, \succ_B, \Pi, b)$.
\end{definition}

\section{Results}

We now present our results. Given an election with its winner $w$ and the preference $\suc_B$ of the briber, we define a set $G$ of ``good candidates'' as $\{a\in\CC: a\suc_B w\}\cup\{w\}$ and the set $B$ of ``bad candidates'' as $\{a\in\CC: w\suc_B a\}$. For \IS, let $\VV_b$ be the set of bribed voters, and $\VV_u$ the rest of the voters.

We begin with showing a connection of \SSB and \SDB with the classical problems \SHB and \DB respectively.

\begin{observation}\label{Obs 1}
	There is a polynomial-time many-to-one reduction from \DB to \SDB and from \SHB to \SSB.
\end{observation}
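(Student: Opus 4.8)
The plan is to give, for each of the two pairs $(\DB,\SDB)$ and $(\SHB,\SSB)$, a single Karp-style embedding that keeps the voting rule $r$ (together with its lexicographic tie-break) and the profile $\PP$, the distinguished candidate $c$, the cost functions $\Pi$ and the budget $b$ all unchanged; the only new ingredient a safe-bribery instance requires is the briber's preference $\succ_B$, and I would pick it so that the \emph{Safety} requirement becomes vacuous. The observation driving this is that, writing $w = r(\PP)$, the Safety clause only asks that each ``partial'' winner $x$ satisfy $x \succ_B w$ or $x = w$, i.e.\ that $x$ lie in the good set $G = \{a : a\succ_B w\}\cup\{w\}$; so if $\succ_B$ ranks $w$ \emph{last}, then $B=\emptyset$, $G=\CC$, and Safety holds automatically for every subset of bribed voters (resp.\ every shift vector dominated by $\sss$). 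What then remains of safe bribery is exactly the \emph{Success} clause $r(\QQ)=c$ (under the full bribery, resp.\ the full shift), which is precisely the acceptance condition of classical \DB (resp.\ \SHB).

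Concretely, I would take an instance $(\CC,\PP,c,\Pi,b)$ of \DB and compute $w = r(\PP)$ (polynomial time, since every rule considered is efficient). If $w=c$, then \DB is a yes-instance (bribe no one), so I output a fixed yes-instance of \SDB. Otherwise I output $(\CC,\PP,c,\succ_B,\Pi,b)$, where $\succ_B$ is any linear order with $c$ first and $w$ last; note that $c\succ_B w$, as the definition of safety presupposes. The reduction from \SHB to \SSB is verbatim the same: carry over $\CC,\PP,c$, the shift-cost functions $\Pi$ and $b$, send the case $w=c$ to a fixed yes-instance of \SSB, and otherwise use the same $\succ_B$.

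For correctness I would argue both directions. If \DB is a yes-instance, witnessed by $\VV_b$ and $\QQ$ with $\sum_{v_i\in\VV_b}\pi_i\le b$ and $r(\QQ)=c$, then Success holds for $\QQ$; and since $w$ is $\succ_B$-last, every candidate $x$ satisfies $x\succ_B w$ or $x=w$, so Safety holds for every $\VV_b'\subseteq\VV_b$. Hence $\QQ$ is safe and the \SDB instance is a yes-instance. Conversely, any solution of the \SDB instance satisfies Success, i.e.\ yields $\VV_b,\QQ$ with the budget respected and $r(\QQ)=c$, which is exactly a witness for \DB. The argument for \SHB/\SSB is identical once one notes that there the Success clause is $c=r(\QQ)$ for the fully shifted profile and the Safety clause is again vacuous because $w$ is $\succ_B$-last; the corollaries deriving \NPH of \SDB and \SSB from \NPH of \DB and \SHB for specific rules then follow because the reduction preserves the rule.

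I do not expect a genuine obstacle here; the only points needing care are (i) the degenerate case $c=r(\PP)$, which must be routed to a fixed yes-instance because the definition of safety is only stated under $c\succ_B w$, and (ii) checking that $\succ_B$ is produced in polynomial time, which amounts to computing $r(\PP)$ and is immediate for every efficient rule in the paper. The one idea that makes everything collapse to a triviality is recognising that placing the unbribed winner last in the briber's order empties the bad set and thereby trivialises the Safety condition.
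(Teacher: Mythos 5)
Your proposal is correct and matches the paper's own reduction exactly: both set $\succ_B$ to be $c \succ \cdots \succ w$ with the unbribed winner $w$ last, so the bad set is empty and the Safety condition is vacuous, leaving only the Success clause of classical \DB (resp.\ \SHB). Your additional handling of the degenerate case $c = r(\PP)$ is a reasonable bit of extra care the paper leaves implicit.
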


%\begin{proof}
	Let $w$ be the winner of the unbribed election. To reduce any instance of \DB (\SHB respectively) to \SDB (\SSB respectively), we set the preference $\succ_B$ of the briber to be $c \succ \ldots \succ w$ and keep everything else the same. The reduction is clearly correct and runs in polynomial time.
%\end{proof}

Many hardness results of \SDB and \SSB are obtained as useful corollaries.

\begin{corollary}\label{Coro Obs 1}
	%If the \DIS problem is \coNPC and 
	If \DB (\SHB respectively) is \NPC for any anonymous and efficient voting rule, \SDB (\SSB respectively) is \NPH for that voting rule. 
\end{corollary}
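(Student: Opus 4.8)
The plan is to obtain the statement as an immediate consequence of \Cref{Obs 1} together with the definition of \NP-hardness. Recall the reduction exhibited in \Cref{Obs 1}: given an instance $(\CC,\PP,c,\Pi,b)$ of \DB for the rule $r$, one computes $w = r(\PP)$, takes $\succ_B$ to be any linear order of the form $c \succ \cdots \succ w$ (with the remaining $m-2$ candidates ordered arbitrarily between $c$ and $w$), and outputs $(\CC,\PP,c,\succ_B,\Pi,b)$ as an instance of \SDB; the \SHB-to-\SSB reduction is identical, carrying the shift-cost functions along unchanged.

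First I would check that this map is computable in polynomial time. The only non-syntactic step is evaluating the winner $w = r(\PP)$, which takes time polynomial in the input length precisely because $r$ is efficient. If $c = r(\PP)$ the \DB instance is a trivial \YES-instance and can be sent to any fixed \YES-instance of \SDB; hence we may assume $c \neq r(\PP)$, which also supplies the hypothesis $c \succ_B w$ that the definition of safety presupposes.

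Next I would verify correctness of the reduction. Since $w$ is ranked last in $\succ_B$, the set $B = \{a \in \CC : w \succ_B a\}$ of bad candidates is empty, so for every candidate $x$ we automatically have $x \succ_B w$ or $x = w$; consequently the Safety condition holds vacuously for every bribed profile $\QQ$. Hence the constructed \SDB instance is a \YES-instance exactly when there is a set of voters of total cost at most $b$ whose bribery produces a profile on which $c$ wins under $r$, which is precisely the question posed by \DB; the same reasoning handles \SHB-to-\SSB. Thus \DB (resp.\ \SHB) many-to-one reduces to \SDB (resp.\ \SSB) in polynomial time, and since \DB (resp.\ \SHB) is assumed \NPC, hence \NP-hard, for $r$, so is \SDB (resp.\ \SSB). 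There is no real obstacle here: the only points to be careful about are that efficiency of $r$ is what keeps the reduction polynomial-time, and that placing $w$ last in $\succ_B$ collapses the Safety requirement to a tautology, leaving only the Success requirement, which coincides with the winning condition of the corresponding classical bribery problem. (Anonymity of $r$ plays no role in this particular argument and is stated only for uniformity with the remaining results.)
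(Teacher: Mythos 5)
Your proof is correct and follows essentially the same route as the paper: it instantiates the reduction of \Cref{Obs 1} (setting $\succ_B$ to be $c \succ \cdots \succ w$ with $w$ ranked last, so that the safety condition becomes vacuous and only the success condition, i.e.\ the classical bribery winning condition, remains) and then invokes \NP-hardness of the source problem. You merely spell out details the paper leaves implicit, namely that efficiency of $r$ makes computing $w=r(\PP)$ polynomial-time, the degenerate case $c=r(\PP)$, and the fact that anonymity plays no role here.
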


Since \DB is \NPC for $k$-approval, $k$-veto, Borda, simplified Bucklin, Copeland  and maximin  \cite{faliszewski2006complexity, brelsford2008approximability, faliszewski2015complexity, faliszewski2009llull}, \SDB is \NPH for these voting rules. Similarly, since \SHB is \NPC for Borda, Copeland and maximin (using results from \cite{edithswap}), \SSB is \NPH for these voting rules.

\begin{corollary}\label{Coro 2}
If \SHB is \WKH for any anonymous and efficient voting rule, when parameterized by any parameter, \SSB is also \WKH for that voting rule, when parameterized by the same parameter.
\end{corollary}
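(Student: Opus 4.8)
The plan is to observe that the polynomial-time many-to-one reduction from \SHB to \SSB given in \Cref{Obs 1} is in fact a parameterized reduction that leaves every natural parameter of the instance unchanged; the corollary then follows from \Cref{Obs 1} in exactly the way \Cref{Coro Obs 1} does, with ``\NP-hardness'' replaced throughout by ``\WKH''.

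First I would unpack the reduction of \Cref{Obs 1}. Starting from an \SHB instance $(\CC,\PP,c,\Pi,b)$ for an anonymous and efficient rule $r$, one computes $w=r(\PP)$ in polynomial time (using efficiency of $r$), fixes any $\succ_B\in\LL(\CC)$ of the form $c\suc\cdots\suc w$, and outputs the \SSB instance $(\CC,\PP,c,\succ_B,\Pi,b)$. Everything other than $\succ_B$ is copied verbatim, and $\succ_B$ is produced in polynomial time, so the whole transformation runs in polynomial time; its correctness was already argued for \Cref{Obs 1}.

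Next I would do the parameter bookkeeping against the definition of an fpt-reduction. Because $\CC$, the voter set, the cost family $\Pi$, and the budget $b$ all reappear unchanged in the target instance, the number of candidates $m$, the number of voters $n$, the total number of permitted shifts, the number of bribed voters, and the budget $b$ are all identical in the source and the target; indeed any parameter that is a function of data already present in the \SHB instance takes the same value on the produced \SSB instance. Hence, for an arbitrary parameterization $\kappa$ of \SHB, the map above is an fpt-reduction from $(\SHB,\kappa)$ to $(\SSB,\kappa)$ with $g=\mathrm{id}$ and $f\equiv 1$. Since fpt-reductions compose and a problem that admits an fpt-reduction from a \WKH problem is itself \WKH, assuming \SHB is \WKH under $\kappa$ we conclude that \SSB is \WKH under $\kappa$.

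Since the reduction changes nothing beyond adjoining $\succ_B$, there is essentially no obstacle here; the only point worth stating explicitly is that when $\kappa$ is the (real-valued) budget, the hardness transfer should be read with respect to its encoding length (or its integer value), which is legitimate precisely because $b$ is passed through without modification. This is what lets \Cref{tab:ResParamSafe} inherit its \WOH and \WTH entries from the corresponding known hardness of \SHB.
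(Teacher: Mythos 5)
Your proposal is correct and follows essentially the same route as the paper: it observes that the reduction of \Cref{Obs 1} only adjoins $\succ_B$ (with $w$ ranked last, so safety is vacuous), leaves all instance data and hence every parameter value unchanged, and is therefore an fpt-reduction with $g=\mathrm{id}$, from which \WKH-ness transfers. Your extra bookkeeping about which parameters are preserved and the remark on the budget's encoding are harmless elaborations of the paper's one-line justification.
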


This is because the reduction in \Cref{Obs 1} is also an fpt-reduction, since the instances of the \SHB and \SSB problems are the same (preserving the parameter), with the only difference being the addition of $\succ_B$ to the \SSB instance, which can be done in polynomial time.
Since \SHB is \WTH for Borda, Copeland and maximin when parameterized by the number of bribed voters or the budget~\cite{bredereck2014prices, bredereck2016complexity}, \SSB is also \WTH for these voting rules when parameterized by either of these two parameters.
Since \SHB is \WOH for Borda, Copeland, and maximin when parameterized by the number of voters, \SSB is also \WOH for these three voting rules with number of bribed voters as the parameter~\cite{bredereck2014prices}. By a similar reduction, \SSB for Copeland is \WOH with respect to the number of shifts, given that \SHB for Copeland is \WOH with respect to the number of shifts~\cite{bredereck2014prices}.

\subsection{Algorithmic Results}\label{sec:algo}

We present here our algorithmic results. We show that all our four problems are polynomial-time solvable for every anonymous and efficient voting rule when we have a constant number of candidates. Our results on the scoring rules follow via an algorithm that uses min-cost flow problem as a crucial subroutine, in a similar manner as in \cite{faliszewski2008nonuniform}.

\begin{theorem}\label{theorem const is safe}
	When we have a constant number of candidates, both \DIS and \SBIS are in \Pb for every anonymous and efficient voting rule. That is, both \DIS and \SBIS belong to \XP with respect to the number of candidates as the parameter.
\end{theorem}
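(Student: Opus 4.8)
The plan is to exploit the fact that with the number $m$ of candidates held constant, $\LL(\CC)$ has at most $m!$ elements, and that an anonymous rule depends only on the multiset of votes cast. Hence an $n$-voter profile is, up to anonymity, encoded by a vector in $\{0,1,\dots,n\}^{m!}$, of which there are only $(n+1)^{m!}$, polynomial in $n$. For any such vector I can build in polynomial time an explicit $n$-voter profile realising it and, since $r$ is efficient, compute its winner in polynomial time. So the whole task reduces to enumerating in polynomial time the set of profiles that must be tested --- one for each way the bribed voters may deviate from the briber's advice --- and checking for each whether its winner lies in the ``good'' set $G=\{a\in\CC: a\succ_B w\}\cup\{w\}$, where $w=r(\PP)$.

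For \DIS, given $(\CC,\PP,c,\succ_B,\QQ)$ I would first compute $w=r(\PP)$, the bribed set $\VV_b=\{v_i:\ \succ_i^\PP\neq\succ_i^\QQ\}$, and $\VV_u=\VV\setminus\VV_b$. Partition $\VV_b$ into at most $(m!)^2$ classes according to the ordered pair $(\succ_i^\PP,\succ_i^\QQ)$, and let $n_\tau$ be the size of class $\tau$. By anonymity, the profile $((\succ_i^\QQ)_{v_i\in\VV_b'},(\succ_i^\PP)_{v_i\in\VV\setminus\VV_b'})$ determined by a subset $\VV_b'\subseteq\VV_b$ depends only on the vector $(k_\tau)_\tau$ with $k_\tau=|\VV_b'\cap\tau|$, and conversely every vector with $0\le k_\tau\le n_\tau$ arises from some $\VV_b'$. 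There are $\prod_\tau(n_\tau+1)\le(n+1)^{(m!)^2}$ such vectors; the algorithm enumerates all of them, for each builds the corresponding $n$-voter profile (the $\VV_u$-votes, plus $k_\tau$ copies of the new vote and $n_\tau-k_\tau$ copies of the old vote for each class $\tau$), computes its winner $x$ via $r$, and checks $x\in G$. If every vector passes, the Safety condition holds; the algorithm then checks Success, $c=r(\QQ)$, directly, and returns \yes iff both hold. Correctness is immediate from the definition of safety together with anonymity, and the running time is $(n+1)^{(m!)^2}$ times one winner computation, polynomial for fixed $m$.

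For \SBIS the argument is identical, with ``subset of $\VV_b$'' replaced by ``shift vector dominated by $\sss$''. Given $(\CC,\PP,c,\succ_B,\sss)$ with $\sss=(s_1,\dots,s_n)$, I may assume each $s_i\le m-1$ (shifting $c$ past the first position does nothing, so $s_i$ can be capped at the position of $c$ in $\succ_i^\PP$ minus one). Partition the voters into at most $m!\cdot m$ classes according to the pair $(\succ_i^\PP,s_i)$; for a voter in a class with original vote $p$ and budget $s$, the vote produced by shifting $c$ left by any $s'_i\le s_i$ is one of the $s+1$ votes $p^{(0)},p^{(1)},\dots,p^{(s)}$, and by anonymity the profile $\QQ'$ produced by a dominated shift vector $\sss'$ depends only on how, within each class, the voters are distributed among these possibilities. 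For a class of size $n_\tau$ this is a weak composition of $n_\tau$ into $s+1\le m$ parts, of which there are at most $(n_\tau+1)^{m-1}$, so the number of reachable profiles is at most $(n+1)^{(m-1)\cdot m!\cdot m}$, again polynomial. The algorithm then proceeds exactly as for \DIS: enumerate these profiles, check that the winner of each lies in $G$ (Safety), check $c=r(\QQ)$ for the full shift $\sss$ (Success), and answer \yes iff both pass.

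Tracking the dependence on $m$, both procedures run in time $n^{f(m)}$ for some function $f$, which is precisely membership in \XP with the number of candidates as parameter. The only non-routine step is the opening observation that anonymity collapses the exponentially many possible deviations --- subsets of $\VV_b$ in the \DB case, shift vectors dominated by $\sss$ in the \SHB case --- into polynomially many distinct vote-multisets, enumerable by grouping the relevant voters into a constant number of types; everything afterwards is bookkeeping plus calls to the efficient winner-determination routine. I expect this grouping/counting step to be the main thing to get right, and I do not anticipate any real obstacle beyond it.
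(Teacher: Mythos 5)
Your proof is correct, and it reaches the same $n^{f(m)}$ bound as the paper, but the mechanics are genuinely different. The paper's algorithm enumerates \emph{all} anonymous preference profiles $\RR$ (there are $\binom{m!+n-1}{m!-1}$ of them, polynomially many for fixed $m$), keeps only those whose winner is worse than $r(\PP)$ under $\succ_B$, and then decides whether each such $\RR$ is \emph{reachable} from the given deviation structure by solving a bipartite maximum-flow instance: one node per voter on the left, one node per slot of $\RR$ on the right, with an edge whenever that voter's admissible votes (the pair $\{\succ_i^\PP,\succ_i^\QQ\}$ for \DIS, or the $s_i+1$ partial-shift votes for \SBIS) include that slot's preference order; $\RR$ is reachable iff there is an $s$--$t$ flow of value $n$. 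You instead observe that by anonymity the voters fall into a constant number of types (pairs $(\succ_i^\PP,\succ_i^\QQ)$, resp.\ $(\succ_i^\PP,s_i)$), so the set of \emph{reachable} profiles is itself only polynomially large and can be enumerated directly as vectors of per-type counts, after which one just calls the winner-determination routine on each and tests membership in $G$. This eliminates the flow subroutine entirely and is the more elementary argument; what the paper's flow-based formulation buys is uniformity with the companion result (\Cref{theorem const safe}), where the briber must additionally minimise cost and the natural tool really is min-cost flow. Both arguments are sound and both establish membership in \XP; your exponent $(m!)^2$ (resp.\ $(m-1)\cdot m!\cdot m$) is a function of $m$ only, which is all that is needed.
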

\begin{proof}
	Let $(\CC, \PP, c, \succ_B , \QQ)$ and $(\CC, \PP, c, \succ_B , \sss)$ be any instances of \DIS
	and \SBIS respectively. If $r(\QQ)$ is not $c$, we output \no since the bribery is not successful in this case. The number of possible anonymous preference profiles is $\binom{m!+n-1}{m!-1} = \OO\left((n+m)^m\right) = \OO\left(n^{\OO(1)}\right)$ when we have $m=\OO(1)$. Let \RR be any anonymous preference profile such that $r(\PP)\suc_B r(\RR)$; if no such \RR exists, we output \yes. We now construct a flow network $\GG_\RR (V, E, W)$ to verify if one can obtain the profile \RR from \PP using some bribed voters ignoring the briber's suggestion fully (or partially for shift safe bribery).
	\begin{align*}
	    V &= \{a_i , b_i \mid i \in [n]\} \cup \{s, t\}\\
	    E &= \{(s, a_i ) \mid i \in [n]\} \cup \{(b_i, t) \mid i \in [n]\} \cup F
	\end{align*}
	We now describe the set $F$ of edges. There is an edge $(a_i , b_j)$ in $F$ if for \DIS, $\succ_i^\QQ = \succ_j^\RR or \succ_i^\PP = \succ_j^\RR$ and for \SBIS if $\succ_j^\RR$ can be obtained from $\succ_i^\PP$ by moving $c$ left by at most $s_i$ positions. We finally define the capacity of every edge to be $1$. It is easy to check that one can obtain the profile \RR from \PP considering some bribed voters who ignore the briber's suggestion fully (or partially for shift safe bribery) if and only if there is an $s-t$ flow of value $n$. We output \yes if there is no \RR such that there exists an $s-t$ flow of value $n$. Otherwise, we output \no. Since maximum $s-t$ flow can be computed in polynomial time, our algorithm runs in $\OO\left((n+m)^m\right)\text{poly}(m,n)$ time.\end{proof}

\begin{theorem}\label{theorem const safe}
When we have a constant number of candidates, both \SDB and \SSB are in \Pb for every anonymous and efficient voting rule. That is, both \SDB and \SSB belong to \XP with respect to the number of candidates as the parameter.
\end{theorem}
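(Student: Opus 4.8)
The plan is to reduce the exponential search over bribed profiles (or shift vectors) to an enumeration over a polynomial number of \emph{anonymized bribery configurations}, to test each configuration for safety and success exactly as in \Cref{theorem const is safe}, and to test budget-feasibility by a min-cost flow computation. Fix the number of candidates $m=\OO(1)$, so $|\LL(\CC)|=m!=\OO(1)$.

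First consider \SDB. Because $r$ is anonymous, whether a bribed profile $\QQ$ together with its set $\VV_b$ of bribed voters is safe and successful for $\PP$ with respect to $\succ_B$ depends only on the multiset $\{(\succ_i^\PP,\succ_i^\QQ):i\in[n]\}$ of (original vote, new vote) pairs: the set of anonymous profiles reachable when a subset of $\VV_b$ reverts to its original vote is determined by this multiset, and $r$ only sees anonymous profiles. Such a multiset is encoded by counts $k_{p,q}\in\{0,\dots,n\}$, one per ordered pair $(p,q)\in\LL(\CC)^2$, so there are at most $(n+1)^{(m!)^2}=n^{\OO(1)}$ configurations; I enumerate all of them. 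Call a configuration \emph{$\PP$-feasible} if $\sum_q k_{p,q}$ equals the number of voters casting $p$ in $\PP$ for every $p$, and (since the briber never gains by paying a voter to keep her vote) treat the entries $k_{p,q}$ with $p\ne q$ as genuinely bribed voters. For a $\PP$-feasible configuration I (i) form the anonymous profile $\QQ$ in which vote $q$ has multiplicity $\sum_p k_{p,q}$ and check $r(\QQ)=c$ (success); and (ii) apply the test of \Cref{theorem const is safe}: for every anonymous profile $\RR$ with $r(\RR)$ a bad candidate, build the flow network in which one unit must leave every ``voter slot'' of type $(p,q)$ into a vote-slot of $\RR$ labelled $p$ or $q$, and mark the configuration unsafe iff some such $\RR$ admits an $s$-$t$ flow of value $n$. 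As there are $n^{\OO(1)}$ profiles $\RR$ and each flow takes polynomial time, steps (i)--(ii) cost polynomial time per configuration.

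It remains, for a $\PP$-feasible, successful, safe configuration, to decide realizability within budget $b$. In \DB the cost of changing $v_i$'s vote is the fixed number $\pi_i$, independent of the target; hence the minimum cost of realizing the configuration is $\sum_p$ of the sum of the $b_p$ smallest values $\pi_i$ among voters casting $p$ in $\PP$, where $b_p=\sum_{q\ne p}k_{p,q}$, and the configuration is realizable iff this is at most $b$. Outputting \yes exactly when some enumerated configuration is $\PP$-feasible, successful, safe, and budget-realizable decides \SDB in polynomial time. For \SSB the structure is identical, with (original vote, new vote) pairs replaced by (original vote, shift amount) pairs $(p,t)$, $t\in\{0,\dots,m-1\}$ (only $m!\cdot m=\OO(1)$ of them, so still $n^{\OO(1)}$ configurations); the full-shift profile $\QQ$ and the profiles $\RR$ reachable by lowering each voter's shift are again functions of the configuration, so success and safety are tested by the same flow network as in \Cref{theorem const is safe}. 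The only new ingredient is the budget test: since $\pi_i(t)$ depends on both the voter and the shift amount, the minimum cost of assigning the type-$p$ voters to shift amounts so as to meet the counts $(k_{p,t})_t$ is the optimum of a transportation problem (type-$p$ voters on one side, shift amounts $0,\dots,m-1$ with capacities $k_{p,t}$ on the other, edge cost $\pi_i(t)$), found by min-cost flow; summing these optima over $p$ and comparing with $b$ gives the test.

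The main obstacle is the interplay between the anonymous view and the non-anonymous cost functions: anonymity collapses the search over bribed profiles to a polynomial enumeration of configurations and lets us reuse the flow-based safety test of \Cref{theorem const is safe} verbatim, but since the $\pi_i$ are per-voter we must solve an auxiliary min-cost assignment problem inside each configuration to recover the cheapest realization (trivial for \DB, a genuine transportation problem for \SHB). One also has to dispatch the easy degenerate cases --- $c=w$, shift amounts capped by the position of $c$ in a voter's vote, bribed voters nominally assigned their own vote --- so that the enumerated configurations correspond exactly to the realizable bribery strategies.
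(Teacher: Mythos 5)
Your proposal is correct and follows the same high-level strategy as the paper's proof --- exploit anonymity and $m=\OO(1)$ to reduce everything to a polynomial enumeration, reuse the flow network of \Cref{theorem const is safe} for the safety test, and use a min-cost flow/assignment computation for the budget test --- but the object you enumerate is different, and the difference is worth noting. The paper enumerates anonymous target profiles $\QQ$ with $r(\QQ)=c$, computes the cheapest voter-to-vote assignment realizing $\QQ$ by a single min-cost flow over all $n$ voters, and then tests the resulting bribed profile for safety. You instead enumerate anonymized \emph{configurations}, i.e.\ multisets of (original vote, new vote) pairs (resp.\ (original vote, shift amount) pairs), and only afterwards minimize the cost of realizing a fixed configuration. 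Your granularity is the more robust one: safety is a property of the multiset of pairs, not of the anonymous profile $\QQ$ alone, because two different pairings of the same $\PP$ with the same anonymous $\QQ$ generally yield different sets of reachable intermediate profiles, so the cheapest realization of a given $\QQ$ could be unsafe while a costlier but still budget-feasible realization of the same $\QQ$ is safe. Enumerating configurations decouples the safety test (which depends only on the configuration) from the cost minimization (a greedy choice of the cheapest voters within each original-vote class for \SDB, and a small transportation problem per class for \SSB), and thereby avoids having to argue that testing only the min-cost realization suffices. Since there are only $(n+1)^{(m!)^2}=n^{\OO(1)}$ configurations for constant $m$ and each is processed in polynomial time, the overall algorithm is polynomial, i.e.\ in \XP with respect to the number of candidates, as claimed.
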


\begin{proof}
	Let $(\CC,\PP,c,\succ_B,\Pi,b)$ be an instance of \SDB (or \SSB). %Let $\{o_1 , o_2 ,\dots, o_{m!}\}$ be the set of all possible preference orderings over $\CC$ and $\VV_i$ the set of voters who vote $o_i$ in $\PP$. We now guess the number $a_{ij}$ of voters who are bribed to	change their order from $o_i$ to $o_j$ in some successful safe \$bribery (if it exists). Let $\left( a_{ij}\right)_{i,j\in[m!],i\ne j}$ be the tuple of such guesses; the number of such possible guesses is at most $n^{m!^2}$ which is $\OO\left(n^{\OO(1)}\right)$ when we have $m=\OO(1)$. 
	Let $\QQ=(\succ_i^{\QQ})_{i\in[n]}$ be any anonymous preference profile such that $r(\QQ)=c$; if no such \RR exists, we output \NO. We now construct a minimum cost flow network $\GG_\QQ(V,E,C,W)$ (where $C$ represents the edge costs) to compute the minimum cost of obtaining \QQ as bribed profile from \PP.
	\begin{align*}
	    V &= \{a_i , b_i \mid i \in [n]\} \cup \{s, t\}\\
	    E &= \{(s, a_i ) \mid i \in [n]\} \cup \{(b_i, t) \mid i \in [n]\} \cup \{(a_i,b_j):i,j\in[n]\}
	\end{align*}
	The capacity $w$ of every edge is $1$. The cost of every edge in $\{(s, a_i ) \mid i \in [n]\} \cup \{(b_i, t) \mid i \in [n]\}$ is $0$. The cost of the edge $(a_i,b_j), \{i,j\}\in[n]^2$ is defined to be the cost of obtaining the preference $\succ_j^{\QQ}$ from $\succ_i^{\PP}$ as given in the cost function $\Pi$. It is clear that the minimum cost of any $s-t$ flow of value $n$ is the minimum cost required by the briber to obtain \QQ as the bribed profile. If that cost is more than $b$, we discard \QQ. Otherwise, we check using \Cref{theorem const is safe} in time $\OO\left((n+m)^m\right)\text{poly}(m,n)$ if \QQ is safe. If we find any safe \QQ which is budget feasible, then we output \yes; otherwise we output \no. Since the number of anonymous profile \QQ is $\OO\left((n+m)^m\right)$, our algorithm runs in time $\OO\left((n+m)^{2m}\right)\text{poly}(m,n)$.
% 	and for each of these $\langle a_{ij}\rangle$ we can
% 	compute the winner and check whether the bribery is safe in polynomial-time using Theorem \ref{theorem const is safe}.
% 	For each $\VV_i$, we do not bribe $a_{ii}$ of the most expensive voters and we bribe the rest $a_{ij}$ voters to
% 	change their preference order from $o_i$ to $o_j$ accordingly. Hence we select the cheapest bribery for a
% 	given $\langle a_{ij} \rangle$. Iterating over all possible $\langle a_{ij} \rangle$, we can check if the lowest of these costs exceeds $b$
% 	in polynomial-time. The proof for \SSB follows from a similar argument.
\end{proof}

We now present our results for specific voting rules. We begin with the plurality voting rule.
  
\begin{theorem}\label{plu is safe}
	For plurality, both \DIS and \SBIS are in \Pb.
\end{theorem}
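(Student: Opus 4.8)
The plan is to treat \DIS first and obtain \SBIS as an easier variant. Write $w = r(\PP)$, $G = \{a \in \CC : a \succ_B w\} \cup \{w\}$, and $B = \CC \setminus G$, and recall that a bribed profile is ``safe'' iff both the Safety and the Success conditions in the definition hold. For an instance $(\CC, \PP, c, \succ_B, \QQ)$ of \DIS the Success condition is literally ``$r(\QQ) = c$'', since $\QQ$ agrees with $\PP$ outside $\VV_b$; this is a linear-time check and we answer \no if it fails. For the Safety condition, note that under plurality only the top entry of a vote counts, so a bribed voter $v_i$ is relevant only when its top candidate actually changes; let $p_i, q_i$ be the tops of $\succ_i^{\PP}$ and $\succ_i^{\QQ}$, and let $U = \{v_i \in \VV_b : p_i \ne q_i\}$. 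Plurality-wise, every adversarial choice $\VV_b' \subseteq \VV_b$ is captured by the subset $S = \VV_b' \cap U$, and conversely every $S \subseteq U$ arises this way; putting $v_i$ into $S$ moves one plurality point from $p_i$ to $q_i$ relative to $\PP$. Hence the profile is unsafe iff for some $b \in B$ there is an $S \subseteq U$ making $b$ the (lexicographically tie-broken) winner, so it suffices to test each $b \in B$ separately.

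Fix $b \in B$. First I claim it is without loss of generality to include in $S$ every $v_i \in U$ with $q_i = b$ and to exclude every $v_i \in U$ with $p_i = b$: starting from any $S$ that makes $b$ win and applying these changes only raises $b$'s score and only lowers the score of every other candidate, so $b$ still wins. After these forced moves $b$'s score is a fixed value $M_b$, every $a \ne b$ has some score $s_1(a)$, and for $b$ to win we need each $a \ne b$ to finish with score at most $T_a := M_b - [\, a \succ_t b \,]$, where $\succ_t$ is the tie-break order. The remaining free voters $R_b = \{ v_i \in U : p_i \ne b,\ q_i \ne b \}$ shuffle points only among $\CC \setminus \{b\}$; regard each as a unit-capacity arc $p_i \to q_i$ of a digraph on $\CC \setminus \{b\}$. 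Choosing $T \subseteq R_b$ changes $a$'s score by $\mathrm{indeg}_T(a) - \mathrm{outdeg}_T(a)$, so feasibility reads: can arcs be chosen so that every node $a$ has $\mathrm{outdeg}_T(a) - \mathrm{indeg}_T(a) \ge s_1(a) - T_a$? This is answered by a single maximum-flow computation: add a super-source with an arc of capacity $\max\{0, s_1(a) - T_a\}$ into each $a$, a super-sink with an arc of capacity $\max\{0, T_a - s_1(a)\}$ out of each $a$, keep the voter arcs at capacity $1$, and check whether the maximum flow value equals $\sum_a \max\{0, s_1(a) - T_a\}$; integrality of max-flow then yields an explicit $T$. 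If this succeeds for some $b \in B$ the profile is unsafe and we answer \no; if it fails for every $b \in B$ then (Success having already passed) the profile is safe and we answer \yes. There are at most $m$ choices of $b$, each tested in polynomial time, so \DIS $\in \Pb$.

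For \SBIS with instance $(\CC, \PP, c, \succ_B, \sss)$ the analysis is simpler, because a partial shift of $\sss$ can only move a vote's plurality point onto $c$ or leave it in place, and this is possible exactly for the voters whose permitted shift $s_i$ equals ``position of $c$ in $\succ_i^{\PP}$ minus one''; call these the effective voters. Since $c \succ_B w$ we have $c \in G$, and it is clearly useless for the adversary to take an effective voter whose original top is a bad candidate (that lowers a bad candidate while raising $c$), so we may assume the adversary only uses effective voters whose original top is a good candidate other than $c$; under any such $S$ the scores of all bad candidates are unchanged. Therefore, if any bad candidate wins, it is the score-maximal (tie-broken) bad candidate $b^\ast$, and $b^\ast$ wins under some $S$ iff every good candidate can be driven to score at most $T_a := s(b^\ast) - [\, a \succ_t b^\ast \,]$. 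Lowering a good candidate $a \ne c$ by one costs the inclusion of one effective voter with original top $a$ and raises $c$ by exactly one, and the requirements for distinct $a$ concern disjoint sets of voters, so safety fails iff every requirement is met by the available voters and $s(c) + \sum_{a \in G \setminus \{c\}} \max\{0, s(a) - T_a\} \le T_c$. Together with the Success check (``$c$ wins when all effective voters move $c$ to the top''), this is an immediate computation, so \SBIS $\in \Pb$.

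The score bookkeeping and the two Success checks are routine; the step needing genuine care is the \DIS Safety test, where each bribed vote simultaneously helps one candidate and hurts another, so naive greedy fixes can create cascading violations and a flow formulation is really needed --- in particular one must justify that the ``forced'' inclusions/exclusions are dominant and that the max-flow value exactly characterizes when the free voters can keep all other candidates below their thresholds.
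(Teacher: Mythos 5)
Your proposal is correct and, for \DIS, follows essentially the same strategy as the paper: verify Success directly, then for each bad candidate $b$ decide by one maximum-flow computation whether some subset of defecting bribed voters can make $b$ win, answering \no iff some $b$ succeeds. The executions differ: the paper builds a voter-to-candidate assignment network (each voter node has unit edges to its original and bribed tops, and the candidate-to-sink edges are capacitated by $b$'s score and the tie-break), whereas you first run an exchange argument to fix all voters whose original or bribed top is $b$ and then pose a transshipment feasibility problem on $\CC\setminus\{b\}$ with one unit arc per remaining defectable voter. For \SBIS the paper just reduces to \DIS (a partial shift under plurality either brings $c$ to the top or changes nothing), while your direct greedy count is more elementary; it is also correct, though in your exchange step (``discard defectors whose original top is bad'') the right conclusion is that \emph{some} bad candidate still wins after the swap --- the discarded voter's bad top gains a point and may overtake the original bad winner --- which is all you need.

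The one step you flag but do not supply is the backward direction of your flow characterization, and it genuinely requires an argument: if a feasible arc set $T$ exists, the flow naturally induced by $T$ can fail to saturate the super-source, because a node $a$ with $s_1(a)-T_a<0$ may have positive net out-degree in $T$ while its source arc has capacity $0$ and its sink arc cannot absorb a negative amount. The fix is a pruning lemma: while some node $a$ has net out-degree exceeding $\max\{0,\,s_1(a)-T_a\}$, delete one of its outgoing arcs of $T$; this keeps $a$ feasible, only increases the net out-degree of the arc's head (so preserves feasibility everywhere), and strictly shrinks $T$, hence terminates. At termination every node $a$ satisfies $s_1(a)-T_a\le \mathrm{outdeg}_T(a)-\mathrm{indeg}_T(a)\le\max\{0,\,s_1(a)-T_a\}$, and the flow of value $\sum_a\max\{0,\,s_1(a)-T_a\}$ exists. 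With that lemma inserted, completeness of your \DIS test (and hence the whole proof) goes through.
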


\begin{proof}
	Let $(\CC, \PP, c, \succ_B, \QQ)$  be any instance of \DIS for plurality. For every bad candidate $\bbb \in B$, we define a set $W_\bbb = \{i \mid i \in [n],
	\bbb \text{ is the top candidate in }\suc_i^\PP\}$. Let $|W_\bbb|=n_\bbb$. If $r(\QQ)$ is not $c$, we output \no, as the bribed profile is not successful. Else, we try to find a preference profile, where a bad candidate can win, subject to the constraints of \DIS. To model this, we construct a flow network for every $\bbb\in B$, named $\GG_{\bbb} = (V, E, W)$, to check if it is possible for $\bbb$ to win the election by making some bribed voters not to fully follow the briber's suggestion.
	\begin{align*}
	    V &= \{x_i \mid i \in [n]\} \cup \{y_a \mid a \in \CC\} \cup \{s,t\}\\
	    E &= \{( s,x_i ) \mid i \in [n]\} \cup \{(y_a, t) \mid a \in \CC\} \cup F \cup \{( x_i,y_{\bbb} ) \mid i \in W_{\bbb}\}
	\end{align*}
	The capacities $W$ are as follows. For every $i\in[n]\setminus W_\bbb$, we have an edge $(x_i,y_a)\in F$ if $a$ is the top-ranked candidate in $\succ_i^{\QQ}$ or $\succ_i^{\PP}$. We define the capacity of edge $(y_a,t)$ to be $(n_\bbb-1)$ for every $a\in\CC\setminus\{\bbb\}$ such that $a$ is preferred over $\bbb$ in the tie-breaking rule and $n_\bbb$ for every $a\in\CC\setminus\{\bbb\}$ such that $\bbb$ is preferred over $a$ in the tie-breaking rule. The capacity of the edge $(y_\bbb,t)$ is $n_b$. The capacity of all other edges is $1$. We show that $\bbb$ can be made winner if and only if there is an $s-t$ flow of value $n$ in $\GG_\bbb$. This is because, if there is an $s-t$ flow of value $n$ in $\GG_{\bbb}$, there are exactly $n$ edges between $\{x_i | i \in [n]\}$ and $\{y_a | a \in \CC\}$ having a flow of $1$, which models a legal preference profile. In other words, if edge $(x_i,y_a)$ has flow $1$, it means that in the corresponding preference profile, voter $i$ keeps $a$ at the top of her preference order. By the nature of $\GG_{\bbb}$'s construction, the preference profile would ascertain that $\bbb$ gets enough votes to make her the winner. On the other hand, if $\bbb$ is the winner in some preference profile $\RR$, where $R_i = \succ_i^{\QQ}$, $\ \forall v_i \in \VV_b'$, where $\VV_b' \subseteq \VV_b$ ($\VV_b$ is the set of bribed voters) and $R_i = \succ_i^{\PP}, \forall v_i \in \VV \setminus \VV_b'$. Now, let $T(R_i)$ denote the top candidate in $R_i$. In the corresponding flow instance, there is a flow of $1$ possible through all the edges $(x_i, T(R(i)) \forall i \in[n]$. This is true because only the capacities of the edges incident on $t$ could possibly be violated. However, if this were true for any node $y_a$, then the corresponding candidate $a$ would defeat $x$ (either by having a score of $(n_\bbb+1)$ or by having a score of $n_\bbb$ but by defeating $\bbb$ in tie-breaking). This is a contradiction. Hence we have that there exists an $s-t$ flow of value $n$ in $\GG_{\bbb}$ if and only if $b$ can be made the winner. As the flow cannot exceed $n$, this is in fact the maximum $s-t$ flow.
	
	If for every bad candidate $\bbb\in B$, the value of maximum $s-t$ flow is less than $n$, then we output \yes. Otherwise, we output \no (as in this case, no bad candidate can ever win). Since maximum $s-t$ flow can be computed in $\OO((m+n)n^2)$ using Edmonds-Karp algorithm \cite{edmonds1972theoretical}, and we run at most $m$ instances of it, our algorithm runs in $\OO(m(m+n)n^2)$.
	
 	Any instance of \SBIS for plurality $(\CC, \PP, c, \succ_B, \sss)$ can be mapped to a \DIS instance by considering the fact that if $c$ appears in the top $\sss_i+1$ positions in $\suc_i^\PP$, then $c$ should be placed at the top position in $\suc_i^\QQ$, otherwise $\suc_i^\QQ$ = $\suc_i^\PP$. This allows us to use the above construction to solve an \SBIS instance of plurality in $\OO(m(m+n)n^2)$ time.
\end{proof}

We next show that \SDB and \SSB are polynomial-time solvable for the plurality voting rule.

\begin{theorem}\label{plu safe}
	For plurality, both \SDB and \SSB are in \Pb.
\end{theorem}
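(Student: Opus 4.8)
The plan is to reduce \SDB for plurality to a small family of min-cost bribery computations, each solvable by (convex-cost) min-cost flow, very much in the spirit of \Cref{plu is safe} but now charging the voters' prices. First I would normalize the space of bribed profiles. For plurality only the top of each vote matters, and a voter's price $\pi_i$ is paid irrespective of the new vote, so I would argue that it is without loss of generality to assume that no bribed voter originally ranks $c$ first and that every bribed voter ranks $c$ first in \QQ: moving the top of a bribed vote to $c$ never hurts \textbf{[Success]}, and in every safety sub-scenario it only ever adds a vote to a good candidate (when that voter follows), which can never help a bad candidate win. Thus a bribery is described by the numbers $n_a$ of voters currently ranking each $a\ne c$ first that are redirected to $c$, with $0\le n_a\le s_\PP(a)$; since the per-candidate price function $C_a(\cdot)$ (the sum of the $n_a$ cheapest such voters' prices) is convex, the cost is $\sum_{a\ne c}C_a(n_a)$. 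Writing $N=\sum_a n_a$, the all-followers tally is $s_\PP(c)+N$ for $c$ and $s_\PP(a)-n_a$ for $a\ne c$, so \textbf{[Success]} becomes ``$N+n_a\ge R_a$ for every $a\ne c$'', with $R_a=s_\PP(a)-s_\PP(c)+\mathbf 1[a\succ_t c]$ fixed.

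Next I would pin down \textbf{[Safety]} exactly as in \Cref{plu is safe}: a bad candidate $\bbb$ can be made to win \emph{some} sub-scenario iff, keeping $\bbb$ at its full original strength $s_\PP(\bbb)$ and dragging each other candidate $a$ down just enough — this needs $t_a^\bbb:=\max\{0,\,s_\PP(a)-s_\PP(\bbb)+\mathbf 1[a\succ_t\bbb]\}$ reverting $a$-voters — lets $\bbb$ beat everyone; concretely, iff $n_a\ge t_a^\bbb$ for all $a\ne\bbb,c$ and $\bbb$ beats $s_\PP(c)+\sum_{a\ne\bbb,c}t_a^\bbb$ under plurality with $\succ_t$. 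Call $\bbb$ \emph{dangerous} when the second (choice-independent) condition holds and the first is realizable (i.e.\ $t_a^\bbb\le s_\PP(a)$ for all $a\ne\bbb,c$); then $(n_a)$ is safe iff for every dangerous $\bbb$ there is a \emph{witness} $a$ with $n_a<t_a^\bbb$. A priori this demands one witness per dangerous candidate — an exponential search — but I would observe that ``beats'' (plurality score, ties by $\succ_t$) is a strict total order on candidates; that the original winner $w=r(\PP)$, having the maximum score, satisfies $t_w^\bbb\ge1$ for every dangerous $\bbb$ and is therefore a witness for all of them (so dangerous candidates, if any, have a ``beats''-maximum $\bbb^\dagger$ and $A_{\bbb^\dagger}:=\{a\ne\bbb^\dagger,c:t_a^{\bbb^\dagger}\ge1\}\ne\varnothing$); and, more sharply, that any witness $a^\dagger\in A_{\bbb^\dagger}$ is \emph{automatically} a witness for every dangerous $\bbb$, because $a^\dagger$ beats $\bbb^\dagger$ beats $\bbb$ gives $a^\dagger\in A_\bbb$, and the explicit formula for $t$ gives $t_{a^\dagger}^{\bbb^\dagger}\le t_{a^\dagger}^\bbb$. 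Consequently a safe-and-successful bribery of cost $\le b$ exists iff either there are no dangerous candidates and a successful $(n_a)$ of cost $\le b$ exists, or for some $a^\dagger\in A_{\bbb^\dagger}$ a successful $(n_a)$ of cost $\le b$ exists with the single extra cap $n_{a^\dagger}\le t_{a^\dagger}^{\bbb^\dagger}-1$.

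Each of the $\OO(m)$ resulting subproblems — minimize $\sum_aC_a(n_a)$ subject to the box constraints, the single cap, and $N+n_a\ge R_a$ for all $a$ — I would solve by iterating over the value $\nu\in\{0,\dots,n\}$ of $N$: for fixed $\nu$ the success constraints become per-candidate lower bounds $n_a\ge\max\{0,R_a-\nu\}$ together with $\sum_a n_a=\nu$, a separable convex-cost resource-allocation problem solvable in polynomial time (equivalently, a min-cost flow exactly as in \Cref{theorem const safe}): set each $n_a$ to its lower bound and then greedily distribute the remaining $\nu-\sum_a\max\{0,R_a-\nu\}$ units along the cheapest available price increments, which is optimal by convexity of $C_a$. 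Taking the minimum over $\nu$ and over the $\OO(m)$ cases and comparing with $b$ settles \SDB for plurality in polynomial time. Finally, I would reduce \SSB for plurality to \SDB: letting $p_i$ be the position of $c$ in $\succ_i^\PP$, shifting $c$ left by $s$ changes the top of that vote to $c$ precisely when $s\ge p_i-1$ and otherwise leaves the top unchanged — and the same holds for every partial shift — so (as in \Cref{plu is safe}) a voter with $p_i=1$ contributes $c$ for free while every other voter $i$ behaves as a \SDB-bribeable voter with price $\min\{\pi_i(s):p_i-1\le s\le m-1\}$; running the above algorithm finishes the proof.

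The step I expect to be the main obstacle is the second one: showing that, despite the ``for every sub-scenario'' quantifier, the ``for every bad candidate'' quantifier, and the fact that our choice of bribery and the adversary's choice of which bribed voters revert interact, \textbf{[Safety]} collapses to the single explicit cap above. The reason this works for plurality (but not for Borda-like rules) is precisely that only one ballot position matters, which makes each bad candidate's most favourable sub-scenario an explicitly computable quantity and lets the per-candidate witness conditions be consolidated into one.
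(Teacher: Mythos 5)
Your proposal follows the same overall strategy as the paper's proof: guess the ``protected'' candidate and the final score of $c$, and solve each resulting subproblem as a min-cost assignment of bribed voters (the paper uses a min-cost flow with edge demands forcing one good candidate $x$ to retain enough original supporters to beat the strongest bad candidate; your single-cap constraint $n_{a^\dagger}\le t_{a^\dagger}^{\bbb^\dagger}-1$ is the same device). What you add is an explicit combinatorial characterization of \textbf{[Safety]} -- the ``dangerous'' condition (ii), which accounts for the fact that dragging down every candidate that beats $\bbb$ necessarily boosts $c$, and the consolidation of all per-$\bbb$ witnesses into a single witness for the beats-maximal dangerous candidate. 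This is cleaner and in fact sharper than the paper's criterion, which protects against the highest-scoring bad candidate regardless of whether that candidate can ever actually win a sub-scenario; your consolidation argument (including the observation that a witness $a^\dagger\in A_{\bbb^\dagger}$ beats $\bbb^\dagger$ and hence cannot itself be a dangerous bad candidate) is correct. The treatment of \SSB via the per-voter price $\min\{\pi_i(s):s\ge p_i-1\}$ matches the paper.

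The one step I would push back on is the opening normalization, that every bribed voter may be assumed to place $c$ on top in \QQ. Your justification -- that switching a bribed vote's target to $c$ ``only ever adds a vote to a good candidate'' in each sub-scenario -- is not sound as stated: the switch also \emph{removes} a vote from the original target $g$ in every sub-scenario where that voter complies, and $g$ may be exactly the good candidate whose retained strength was blocking a bad candidate. Concretely, a bribery that redirects voters from $a$ to $g$ ``pins'' those ballots to the set $\{a,g\}$ in every sub-scenario, and such pinning can defeat the adversary in cases where the all-to-$c$ version of the same bribed set does not; so safety of \QQ does not transfer to safety of the all-to-$c$ profile by the one-line argument you give. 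I believe the normalization is nevertheless true (and note that the paper asserts the identical assumption with no justification at all), but it needs a real argument -- e.g., showing that if every $n_a\ge t_a^{\bbb^\dagger}$ and the voters redirected to $c$ already cover $t_a^{\bbb^\dagger}$ followers for each $a$, the adversary can realize the bad outcome against \QQ itself, and otherwise extracting a witness candidate from the voters redirected to $c$. Everything downstream of the normalization (the characterization, the consolidation, the separable convex optimization over $\nu$) is fine.
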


\begin{proof}
	Let $(\CC, \PP, c, \succ_B, \Pi, b)$ be an arbitrary instance of \SDB for plurality. Let the plurality winner according to $\PP$
	be $w$. We may assume without loss of generality that the briber asks all the bribed voters to have $c$ as their most preferred candidate. 
	$G$ and $B$ are as defined in \Cref{plu is safe}. Let $s(\bbb)$ denote the initial score of candidate $\bbb$. $\bbb_{max} = max\{s(\bbb) \mid w \succ_t \bbb\}$, according to the tie breaking rule, $\succ_t$. Let $X
	= \{a \in G \mid s(a) \geq \bbb_{max} , a \succ_t \bbb$ or
	$s(a) > \bbb_{max}\}$. Initially $X$ is non-empty since $w \in X$. Let $\lambda$ be the score of $c$ in the constructed bribed profile. For all candidates $a \in \CC$, let us define $\beta_a$ to be $\lambda$ if $c \suc_t a$ and $\beta_a$ to be $\lambda-1$ if $a \suc_t c$. For
	every $x \in X$ and every final score $\lambda$ of $c$ which is in the range $s(c)$ to $n$, we construct a flow
	network $\GG_{x, \lambda}(V, E, C, D, W)$, defining $\gamma_x$ to be $\bbb_{max}$ if $x \succ_t \bbb$, else $(\bbb_{max} + 1)$. Here $D$ is the set of edge demands, $w$ is the set of edge capacities, and $C$ is the set of edge costs. The construction is as follows:
	\begin{align*}
	    V &= \{s, t\} \cup \{ u_i \mid i \in [n] \} \cup \{ y_a \mid a \in \CC \}\\
	    E &= \{(s, u_i) \mid i \in [n]\} \cup \{(y_a , t) \mid a \in \CC\} \cup \{(u_i , y_a) \mid i \in [n], a = c \text{ or} \succ_i^\PP = a \succ \ldots \}
	    %C &=\{\beta_a, \forall (y_a , t) \mid a \in \CC \} \cup \{1 \text{for other edges in E}\}\\
	    %W &= \{\pi_i \text{ for } (u_i, y_c) \mid i \in [n] \text{and} \succ_i^{\PP} \neq c \succ \ldots\} \cup \{0 \text{ for other edges}\}\\
	   % D &= \{\lambda \text{ for } (y_c, t) \} \cup \{\gamma_x \text{ for } (y_x, t)\} \cup \{0 \text{ for other edges}\}
	\end{align*}
	Let the capacities of the edges be as follows. For every candidate $a \in \CC$, the edges $(y_a, t)$ have a capacity of $\beta_a$, and the rest of the edges have capacity $1$ each. Let the costs of the edges be defined as follows. The edges $(u_i,y_c)$ have cost $\pi_i$, only if $c$ does not appear at the top of $\succ_i^{\PP}$.
	The demands (lower bound of flow) on the edges are as follows. The edge $(y_c, t)$ has a demand of $\lambda$, the edge $(y_x, t)$ has a demand of $\gamma_x$ and the rest of the edges have a demand of $0$ each. The basic intuition of the construction revolves around the fact that only the good candidates (in $X$), who initially beat the strongest bad candidate, can ensure safety of the bribed preference profile. This is ensured by edge demands. Also, to ensure that the bribed preference profile is successful, the flow network prevents all other candidates from defeating $c$, with the help of appropriately set edge capacities. The edge costs are used to ensure that the briber's budget is not exceeded
	If there is an $s-t$ flow of value $n$ and cost at most $b$ for some $\GG_{x, \lambda}$, then we output \YES, and the corresponding edge flow values help us construct a safe and successful bribed profile. Otherwise we output \NO. \SSB can be reduced to this problem by defining
	the price of each voter $\pi_i$ to be the cost required to shift $c$ to the top in their ordering. This algorithm is polynomial-time solvable, by running $\OO(mn)$ iterations of the out-of-kilter algorithm \cite{fulkerson1961out}.
\end{proof}

\Cref{plu is safe} and \Cref{plu safe} directly lead to the following result for the shift bribery problems for the $k$-Approval voting rule.

\begin{corollary}\label{k-approval shift}
	For $k$-approval, both \SBIS and \SSB are in \Pb, for every integer $k \in [2, m-1]$.
\end{corollary}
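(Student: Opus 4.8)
The plan is to show that $k$-approval \SHB hides exactly the same combinatorics as plurality \SHB, and then invoke the flow and min-cost-flow constructions already built for plurality in \Cref{plu is safe} and \Cref{plu safe}. The bridge is a structural observation. Fix a voter $i$ and let $p_i$ be the position of $c$ in $\succ_i^{\PP}$. If $p_i\le k$, shifting $c$ left by any amount leaves every $k$-approval score unchanged, since the top-$k$ set of that ballot is not disturbed. If $p_i>k$, shifting $c$ left by $s_i$ positions changes nothing when $s_i<p_i-k$, whereas for \emph{every} $s_i\ge p_i-k$ the effect is identical: $c$ gains one approval and the candidate $d_i$ occupying position $k$ of $\succ_i^{\PP}$ (necessarily distinct from $c$) loses one approval, with all other scores untouched. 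This follows directly from the definition of $k$-approval: only whether $c$ is inside the top $k$ matters, and the candidate pushed out is precisely the one originally ranked $k$-th.

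Consequently, for a shift vector $\sss$ only the set of \emph{activated} voters $A(\sss)=\{i:p_i>k,\ s_i\ge p_i-k\}$ matters, and as $\sss'$ ranges over all vectors componentwise below $\sss$, the set $A(\sss')$ ranges over exactly the subsets of $A(\sss)$. Hence $\sss$ is successful iff activating all of $A(\sss)$ makes $c$ win, and $\sss$ is safe iff for every $T\subseteq A(\sss)$ the $k$-approval winner of the profile obtained from $\PP$ by adding one approval to $c$ for each $i\in T$ and removing one approval from $d_i$ for each $i\in T$ is a good candidate. The other $k-1$ approvals each ballot casts never move, so they can be folded once and for all into the base scores $s(\cdot)$ of $\PP$. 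Up to this relabeling, the resulting instance is structurally identical to a plurality instance with ``the top candidate of $\succ_i^{\PP}$'' replaced by ``$d_i$, the $k$-th candidate of $\succ_i^{\PP}$'', the ``bribable'' voters replaced by the ``activatable'' ones, and each base score $s(a)$ augmented by the number of ballots placing $a$ among positions $1,\dots,k-1$.

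For \SBIS one computes $A=A(\sss)$, checks Success by comparing $s(c)+|A|$ with $s(a)-|\{i\in A:d_i=a\}|$ over all $a$, and checks Safety by running, for each bad candidate $b$, the max-flow feasibility test from the proof of \Cref{plu is safe}, now with $d_i$ in place of $b$'s original supporters and with the ballots having $p_i\le k$ contributing a cost-free unit to $c$; the profile is unsafe precisely when one of these networks certifies that $b$ can win some subprofile. (Equivalently, one checks by hand that the adversary targeting $b$ never activates a voter with $d_i=b$ and, to overtake a good candidate $a$, is forced to activate at least $s(a)-s(b)$ voters with $d_i=a$, one more if $a$ beats $b$ in the tie-break, so that the whole test is arithmetic.) For \SSB, give each voter $i$ with $p_i>k$ the activation price $\phi_i=\min\{\pi_i(s):p_i-k\le s\le p_i-1\}$; voters with $p_i\le k$ are irrelevant and a non-activating shift is never worth paying for, so a minimum-cost safe successful strategy is just a minimum-cost safe successful activation set, which is produced by the demand-and-capacity min-cost-flow scheme of \Cref{plu safe} --- iterating over the designated protecting good candidate and over the final score of $c$ --- after the same relabeling ($d_i$ for the top candidate, $\phi_i$ for the bribe cost, and the $(y_a,t)$ capacities and $(y_c,t)$, $(y_x,t)$ demands shifted by the fixed position-$1,\dots,k-1$ counts).

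The main obstacle is carrying out the \SSB verification: one must re-run the correctness argument of \Cref{plu safe} and confirm that, once the $k-1$ immovable approvals per ballot are absorbed into the base scores, the flow demands still exactly characterize safety (with ``some good candidate permanently outscores the strongest bad candidate'' remaining the right certificate) and the capacities still exactly characterize success, taking care of the tie-breaking thresholds and of the free-approval voters with $p_i\le k$. The structural observation makes this a bookkeeping exercise rather than a new argument, but that bookkeeping is where essentially all the work sits.
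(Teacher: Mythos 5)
Your proposal is correct and follows essentially the same route as the paper: reduce $k$-approval shift bribery to the plurality machinery of \Cref{plu is safe} and \Cref{plu safe} by observing that each voter's shift is effectively a binary ``activation'' that moves $c$ into the top $k$ at the cheapest sufficient shift cost. The paper states this equivalence in a single sentence; your version is more careful, in particular in identifying the displaced candidate as the voter's $k$-th-ranked candidate rather than her top candidate and in folding the $k-1$ immovable approvals of each ballot into the base scores, which is exactly the bookkeeping the paper's sketch leaves implicit.
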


%\begin{proof}
	A bribery instance of $k$-approval is equivalent to a corresponding bribery instance of plurality, if the cost of moving the briber's preferred candidate to the top in plurality is equal to the cost of moving the briber's preferred candidate to any of the top $k$ positions in $k$-approval (for every voter).
%\end{proof}
\begin{theorem}\label{Theorem veto shift}
For veto, both \DIS and \SBIS are in \Pb.
\end{theorem}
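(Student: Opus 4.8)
The plan is to settle \DIS first and then reduce \SBIS to it, following the template of the plurality argument in \Cref{plu is safe} and exploiting the fact that under veto every voter contributes exactly one veto, to the candidate she ranks last.

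\emph{\DIS.} Given $(\CC,\PP,c,\succ_B,\QQ)$, I would first compute $r(\QQ)$ and reject unless it equals $c$, which disposes of the Success condition. For Safety it suffices to decide, for each bad candidate $\bbb\in B$, whether some subset $\VV_b'\subseteq\VV_b$ of the bribed voters (who follow the briber while the rest keep their $\PP$-votes) makes $\bbb$ win; then $\QQ$ is safe iff this fails for every $\bbb$. Fix $\bbb$ and, for voter $v_i$, let $p_i,q_i$ be the last-ranked candidates of $\succ_i^\PP$ and $\succ_i^\QQ$. In any mixed profile the veto of an unbribed voter, or of a bribed voter with $p_i=q_i$, is fixed (to $p_i$), whereas a bribed voter with $p_i\neq q_i$ vetoes $p_i$ or $q_i$ at our choice. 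Since $\bbb$'s score is minus its veto count and rerouting a veto off $\bbb$ onto its other endpoint only lowers some rival's score, we may assume every such ``flexible'' voter routes its veto away from $\bbb$; then $\bbb$ receives exactly the number $t_\bbb$ of vetoes forced onto it (unbribed voters with $p_i=\bbb$, plus bribed voters with $p_i=q_i=\bbb$). With $\bbb$'s score pinned, $\bbb$ wins iff every other candidate $a$ ends with at least $\tau_a$ vetoes, where $\tau_a=t_\bbb$ if $\bbb\succ_t a$ and $\tau_a=t_\bbb+1$ if $a\succ_t\bbb$ --- and this is the only requirement, since extra vetoes on candidates other than $\bbb$ never hurt $\bbb$.

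It remains to test whether the flexible vetoes can meet these lower bounds, which I would cast as a max-flow computation exactly as in \Cref{plu is safe} but with demands in place of capacities. Let $b_a$ be the number of vetoes each $a\neq\bbb$ receives after all \emph{forced} choices are made (including any flexible voter whose two options are $\bbb$ and $a$, routed to $a$), put $d_a=\max\{0,\tau_a-b_a\}$, and let $F$ be the set of still-free flexible voters (those with $p_i\neq q_i$ and $\bbb\notin\{p_i,q_i\}$). Build a source $s$ joined to a node $y_a$ for each $a\neq\bbb$ by an edge of capacity $d_a$, a node $v_i$ for each $i\in F$ joined to the sink by a unit edge, and unit edges $y_{p_i}\to v_i$, $y_{q_i}\to v_i$. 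Then $\bbb$ can be made to win iff the maximum $s$--$t$ flow has value $\sum_{a\neq\bbb}d_a$: such a flow assigns $d_a$ distinct free voters to each $a$, and the remaining free voters can be routed to either of their (non-$\bbb$) endpoints, which only adds vetoes to rivals; conversely a winning assignment yields such a flow by selecting $d_a$ of the free voters it routes to each $a$. Iterating over the at most $m$ bad candidates gives a polynomial-time algorithm.

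\emph{\SBIS.} Under veto, shifting $c$ left in $\succ_i^\PP$ changes $v_i$'s vetoed candidate only when $c$ is ranked last in $\succ_i^\PP$, and then \emph{any} shift $s_i'\in[1,s_i]$ produces the same new last candidate (the one originally second-to-last). Hence I would map an instance $(\CC,\PP,c,\succ_B,\sss)$ of \SBIS to the \DIS instance $(\CC,\PP,c,\succ_B,\QQ)$ where $\succ_i^\QQ$ is $\succ_i^\PP$ with $c$ moved up one position when $c$ is last in $\succ_i^\PP$ and $s_i\geq 1$, and $\succ_i^\QQ=\succ_i^\PP$ otherwise. The profiles of veto targets reachable by shift vectors $\sss'\leq\sss$ are then precisely those reachable by subsets of the bribed voters, so $\sss$ is safe iff $\QQ$ is, and the two Success conditions coincide; invoking the \DIS algorithm completes the proof. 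The step I expect to be most delicate is the correctness of the per-candidate flow reduction: arguing that routing all flexible vetoes off $\bbb$ is without loss of generality, nailing down the tie-break thresholds $\tau_a$, and checking that the free voters left unused by a maximum flow can always be placed on a non-$\bbb$ endpoint, keeping $\bbb$'s veto count minimal while only strengthening the lower bounds on its rivals. The Success test and the \SBIS-to-\DIS reduction are routine.
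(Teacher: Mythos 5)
Your proposal is correct and follows essentially the same route as the paper: for \DIS it runs a per-bad-candidate max-flow feasibility check after observing that flexible vetoes can be routed away from the bad candidate without loss of generality (your deficit/capacity formulation is an equivalent restatement of the paper's lower-bound/demand network), and for \SBIS it uses the same reduction to \DIS via the observation that under veto any positive shift of a last-ranked $c$ yields the same new veto target. No gaps.
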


\begin{proof}
	Let $(\CC, \PP, c, \succ_B, \QQ)$ be an arbitrary instance of \DIS for veto.
	Let $\VV_b$, $G$, and $B$ be as defined in \Cref{plu is safe}. For every $\bbb \in B$, define a set $W_\bbb = \{i \mid i \in [n], \bbb$ is the last candidate in both $\succ_i^\PP and \succ_i^\QQ$ in \DB\}. Let $n_\bbb = |W_\bbb|$. If $r(\QQ)$ is not $c$, we output \NO. Else, we construct a
	flow network for every $\bbb$ in $B$, named $\GG_\bbb = (V,E,D,W)$ as follows: 
	\begin{align*}
    V &= \{x_i \mid i \in [n]\} \cup \{y_a \mid a \in \AA\} \cup \{s,t\}\\
	E &= \{(s, x_i) \mid i \in [n]\} \cup \{(y_a, t) \mid a \in \AA\} \cup \{(x_i, y_a) \mid a \in A \setminus \{b\}, \text{ if } y_a \text{ is the last ranked}\\& \text{candidate in} \succ_i^\PP \text{or} \succ_i^\QQ \text{for } i \in [n]\setminus W_b\} \cup \{(x_i ,y_b) \mid i \in W_b\}
	\end{align*}
	Let the demands $D$ (lower bound of flow) on the edges be as follows. For all candidates $a \in \CC \setminus \{\bbb\}$, such that $\bbb$ loses to $a$ in tie-breaking, the edge demand is $n_{\bbb}+1$. Similarly, for all candidates $a \in \CC \setminus \{\bbb\}$, such that $a$ loses to $\bbb$ in tie-breaking, the edge demand is $n_{\bbb}$. Finally the edge $(y_b,t)$ has a demand of $n_\bbb$. Let the capacities $W$ of the edges be as follows. For all candidates $a \in \CC$, the capacities of the edges $(y_a, t)$ is $\infty$. The capacities of the rest of the edges is $1$ each. 
	
	For every bad candidate $\bbb$, $W_\bbb$ represents the absolute value of the best possible score of $\bbb$. Assuming that the bribery is successful, we try to find a partially bribed preference profile in which \bbb is the winner. This is achieved by setting the appropriate edge demands, as mentioned in the construction. If any such partially bribed profile exists, then clearly, we get an $s-t$ flow of $n$. Conversely, if we have a max $s-t$ flow of $n$, we can construct a partially bribed profile in which some $\bbb$ is the winner. Hence, for some $\bbb \in B$, if the max flow in $\GG_\bbb$ is $n$, we return \NO. Else we return \YES. The above algorithm is polynomial-time solvable \cite{cormen2009introduction}.
	
	Any \SBIS instance of veto can be mapped to a \DIS instance by considering the fact that if $c$ appears in the bottom most position in $\suc_i^{\PP}$ and $\sss_i \neq 0$, then $c$ should be raised to the $(m-1)\textsuperscript{th}$ position to get $\suc_i^\QQ$, otherwise $\suc_i^\QQ$ = $\suc_i^\PP$. This allows us to use the above construction to solve an \SBIS instance of plurality in $\OO(m(m+n)n^2)$ time.
\end{proof}
%%% Appendix

\begin{theorem}\label{theorem veto dol}
	For veto, \SDB is in \Pb.
\end{theorem}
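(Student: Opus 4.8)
The plan is to adapt the min-cost-flow argument of \Cref{plu safe} to veto, exploiting that under veto an election outcome depends only on which candidate each voter ranks last. Thus a bribed profile $\QQ$ is, for our purposes, a choice of a new last-ranked candidate for each voter, and the bribery cost is $\sum_{i}\pi_i$ over the voters whose last-ranked candidate differs from its original value $o_i$. I would set up a flow network with a source $s$, a sink $t$, a node $u_i$ per voter, and a node $y_a$ per candidate: a capacity-$1$ edge $(s,u_i)$; for each $i,a$ an edge $(u_i,y_a)$ of capacity $1$ and cost $0$ if $a=o_i$, else $\pi_i$; and, for each $a$, an edge $(y_a,t)$ whose demand and capacity encode the requirements below. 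An integral $s$--$t$ flow of value $n$ corresponds exactly to a bribed profile, its cost being the bribery cost, so the algorithm enumerates polynomially many such networks (one per guess) and checks each for a feasible flow of cost at most $b$.

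For success I would guess $\lambda:=\ell_\QQ(c)\in\{0,\dots,n\}$, writing $\ell_\QQ(a)$ for the number of last-place votes of $a$ in $\QQ$, and put demand and capacity both equal to $\lambda$ on $(y_c,t)$. For every other candidate $a$, $c$ beats $a$ in $\QQ$ iff $\ell_\QQ(a)\ge\tau_a$, where $\tau_a=\lambda$ when $c\succ_t a$ and $\tau_a=\lambda+1$ otherwise, so I put demand $\tau_a$ and capacity $n$ on $(y_a,t)$. A feasible flow then yields exactly the bribed profiles in which $c$ wins with $\ell_\QQ(c)=\lambda$.

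The crux is safety. Reverting a bribed voter merely toggles its last-ranked candidate between $o_i$ and the recommended one $d_i$, so the reachable intermediate profiles are exactly those in which each bribed voter independently vetoes $o_i$ or $d_i$; two monotonicity facts follow, namely a bad candidate $\bbb$ cannot be driven below $\ell_\PP(\bbb)$ last-place votes in any reachable profile (achieved by reverting precisely the voters redirected onto $\bbb$), and no good candidate can be driven below the count contributed by its un-bribed original supporters. I would then prove that $\QQ$ is safe iff, for every bad $\bbb$, there is a good ``guardian'' $g$ whose last-place count stays strictly below $\ell_\PP(\bbb)$, or equal with $g\succ_t\bbb$, in every reachable profile; since the binding case is the bad candidate $\bbb^\star$ minimising $\ell_\PP(\cdot)$ (with $\succ_t$), and since $w=r(\PP)$ already beats $\bbb^\star$ in $\PP$, a single guardian $g$, guessed among the good candidates that beat $\bbb^\star$ in $\PP$ (a set that contains $w$), can be forced to dominate all of $B$. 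Encoding this in the flow: cap $(y_g,t)$ at $\ell_\PP(\bbb^\star)-\mathbf 1[\bbb^\star\succ_t g]$, so the recommendation itself keeps $g$ low, and delete every edge $(u_i,y_g)$ with $o_i\ne g$, so no reversion can raise $g$'s count above $\ell_\PP(g)$, which is at most that cap. Enumerating over $\lambda$ and $g$ gives $\OO(nm)$ networks; answering \yes iff some network admits a feasible flow of cost at most $b$ solves \SDB for veto in polynomial time, as min-cost flow with demands is polynomial-time solvable.

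I expect the genuine obstacle to be exactly this safety characterisation: proving that it suffices to guard against the single worst bad candidate with a single guessed good candidate, and identifying precisely the ``extremal'' reversion the adversary would use. This is more delicate than the plurality case because reverting a bribed voter in veto simultaneously strengthens one candidate (its original vetoee) and weakens another (the recommended one), so redirecting last-place votes onto candidates in order to meet the success demands $\tau_a$ can create reversion-vulnerabilities; the argument must show these are either harmless or avoidable, and it must get the $\succ_t$ tie-breaking in the demands $\tau_a$ and in the guardian's cap exactly right. The flow scaffolding and the remaining routine case analysis follow the pattern already established in \Cref{plu is safe,plu safe,Theorem veto shift}.
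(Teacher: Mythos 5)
Your approach is genuinely different from the paper's. The paper does not use a flow network for veto: it gives a direct greedy algorithm built on exchange observations (an optimal safe bribery never redirects a veto away from a bad candidate, nor away from a good candidate other than $c$; consequently only $c$-vetoers are bribed, every bad candidate already has more than $\ell_\QQ(c)$ vetoes in $\PP$, and one can iterate over the final veto count of $c$ and greedily buy the cheapest $c$-vetoers, topping up each deficient candidate only until it reaches that count). Your min-cost-flow scaffolding in the style of \Cref{plu safe} is a legitimate alternative, and you correctly locate the difficulty in the safety encoding --- but the encoding you propose fails.

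Concretely, deleting every edge $(u_i,y_g)$ with $o_i\neq g$ for the guardian $g$ forbids routing any redirected veto onto $g$, while your success demand requires $\tau_g\geq\lambda$ units into $y_g$; so every network with $\ell_\PP(g)<\lambda$ is infeasible. Take one bad candidate $\bbb$ with $\ell_\PP(w)=3$, $\ell_\PP(c)=12$, $\ell_\PP(\bbb)=9$, unit prices and budget $6$, with $c$ winning all ties. Redirecting six $c$-vetoers, three onto $w$ and three onto $\bbb$, is successful and safe ($w$'s count lies in $[3,6]$ and $\bbb$'s in $[9,12]$ in every reachable profile, so $\bbb$ never wins), but the only good candidate beating $\bbb$ in $\PP$ is $w$, your network forbids sending flow onto $y_w$ from $c$-vetoers, and meeting the demands without doing so costs at least $9$. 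Your algorithm answers \no on a \yes instance. The constraint you actually need is an upper bound on the number of \emph{new} vetoes routed onto the guardian, namely $\ell_\PP(\bbb^\star)-\mathbf{1}[\bbb^\star\succ_t g]-\ell_\PP(g)$, not a prohibition; this is precisely what the paper's ``add vetoes to a candidate only until its count reaches $\ell_\QQ(c)$'' rule accomplishes. Two further holes: your network still lets the flow bribe a voter whose original veto lies on a bad candidate, which drives that candidate's minimum reachable count below $\ell_\PP(\bbb)$ and invalidates the threshold against which your guardian is measured (you must prove the exchange argument making this harmless, or delete those edges); and the single-guardian condition is sufficient but not a characterisation of safety (different reachable profiles can be rescued by different good candidates), so you still owe an argument that some budget-feasible safe bribery, if one exists, admits a single guardian.
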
             

\begin{proof}
	We use the following observations to construct a polynomial-time algorithm to solve \SDB for veto: 
	
	\begin{enumerate}
		\item The briber will never bribe a voter already vetoing some bad candidate $b_1$, to veto some other bad candidate $b_2$. Had there existed an optimal safe and successful bribery instance such that the briber had bribed a voter to veto $b_2$ instead of $b_1$, if we remove that veto, the new bribery is lower or equal in cost while being safe and successful.
		
		\item Also, it can be shown that the briber will never bribe a voter vetoing some good candidate $g$ (except her top preferred candidate $c$), to veto a bad candidate $b_2$.
		
		\item The number of vetoes received by any bad candidate before bribery is greater than the number of vetoes received by the briber's preferred candidate in the bribed profile. The proof follows from the previous two observations.
	\end{enumerate}
	
	Let $c$ be the briber's most preferred candidate. Since there are only a polynomial number of possible scores for $c$ after bribery (by Observation 3), for each possible score, we construct the least cost bribery by greedily choosing the cheapest voters such that the number of vetoes received by every good candidate is at least as many as the number of vetoes received by $c$, and the number of vetoes received by every bad candidate is greater than the number of vetoes received by $c$. We then return the cheapest possible bribery, if it exists. The algorithm runs in $\OO\left(n^2m\log(n)\log(m)\right)$ time.
\end{proof}

\begin{corollary}\label{theorem veto safe shift}
	For veto, \SSB is in \Pb.
\end{corollary}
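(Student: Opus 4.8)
The plan is to reduce \SSB for veto to the machinery behind \Cref{theorem veto dol}. The crucial structural observation is that, under veto, shifting $c$ to the left in a vote $\succ_i^\PP$ changes the tally only if $c$ currently occupies the last position: if $c$ is last, a shift by \emph{any} $s_i\ge 1$ removes the veto from $c$ and hands it to the candidate $d_i$ sitting immediately above $c$ in $\succ_i^\PP$, and the exact value of $s_i$ is irrelevant to the resulting scores (it affects only the cost). Hence the briber may be assumed to shift only voters with $c$ last, and to shift each such voter by the cheapest positive amount, at cost $\pi_i^\ast=\min_{s\in[m-1]}\pi_i(s)$. Writing $R$ for this set of ``relevant'' voters, a bribing strategy is just a subset $S\subseteq R$: activating $i\in S$ moves one veto off $c$ onto the fixed candidate $d_i$, and a voter that follows only partially ($s_i'=0$) leaves its vote untouched. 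Thus \SSB for veto is a constrained form of \SDB for veto --- the bribed voters must come from $R$ and bribing $i$ is forced to redirect the freed veto onto $d_i$ --- so, unlike the plurality reduction of \Cref{plu safe}, one cannot invoke \Cref{theorem veto dol} as a black box but must re-run its algorithm.

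Because veto scores depend only on how many vetoes each candidate receives, a strategy $S$ is characterised, for the purposes of \textbf{Success} and \textbf{Safety}, by the counts $k_a=|\{i\in S:d_i=a\}|$ for $a\ne c$, subject to $0\le k_a\le|\{i\in R:d_i=a\}|$ and $\sum_{a\ne c}k_a=|S|$; inside each group one then picks the cheapest $k_a$ voters. I would iterate over the $\OO(n)$ possible final veto counts $x$ of $c$ (so $|S|=n_c-x$, where $n_c$ is the initial veto count of $c$), exactly as in \Cref{theorem veto dol}. For a fixed $x$: \textbf{Success} forces, for each $a\ne c$, a lower bound $k_a\ge\ell_a$ making $n_a+k_a$ beat $x$ under lexicographic tie-breaking. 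For \textbf{Safety} one must quantify over all sub-activations $S'\subseteq S$, and the key claim is that this collapses to a polynomial-time-checkable condition: the adversary's worst sub-activation feeds no bad candidate at all (so the least-vetoed bad candidate sits at $\beta:=\min_{b\in B}n_b$) and inflates the good candidates and $c$; the quantity $\max_{k'_g\le k_g}\min\big(n_c-\sum_g k'_g,\ \min_{g\in G\setminus\{c\}}(n_g+k'_g)\big)$ is a one-dimensional water-filling value, and safety holds iff it stays below $\beta$ (modulo tie-breaking). This condition is \emph{downward closed} in the good-candidate counts --- raising any $k_g$ can only hurt safety --- so the greedy intuition is to keep each good $k_g$ at its success bound $\ell_g$ and route the leftover slack to the bad candidates, breaking ties by cost. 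Computing the cheapest admissible $(k_a)$ is then a polynomial min-cost selection/flow problem (separable costs, a cardinality constraint, and a downward-closed feasibility region), mirroring \Cref{theorem veto dol}; if any $x$ yields a budget-feasible admissible strategy we output \yes, otherwise \no.

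The step I expect to be the main obstacle is the safety analysis: a priori it ranges over the $2^{|S|}$ sub-activations, and the work is in proving that the single ``feed no bad candidate, water-fill the good ones'' scenario dominates all of them --- so that safety is equivalent to the displayed inequality --- while also handling the lexicographic tie-breaking carefully in both the success and safety bounds. Everything else is routine bookkeeping: $\OO(n)$ values of $x$, each resolved by a polynomial-time greedy/flow computation of the cheapest safe and successful strategy, just as in the proof of \Cref{theorem veto dol}. This establishes that \SSB for veto is in \Pb.
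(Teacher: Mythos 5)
Your argument is essentially correct, but it takes a genuinely different and much more laborious route than the paper. The paper dispatches this corollary in two lines: veto is $(m-1)$-approval, and by \Cref{k-approval shift} a $k$-approval shift-bribery instance is equivalent to a plurality instance after re-pricing (the cost of bringing $c$ into the approved block plays the role of the cost of bringing $c$ to the top), so the min-cost-flow machinery of \Cref{plu is safe} and \Cref{plu safe} applies directly. Your key structural observation --- that under veto a shift matters only when $c$ is last, that any positive shift transfers the veto to the fixed candidate $d_i$ immediately above $c$, and that each bribed voter is therefore binary for both the briber and a partially complying voter --- is exactly the reason the $k$-approval/plurality equivalence works; you just did not take the final step of recognizing it as an instance of the already-solved plurality problem, and instead re-derived a veto-specific algorithm in the style of \Cref{theorem veto dol} (enumerate $c$'s final veto count $x$, characterize strategies by the transfer counts $k_a$, collapse the safety quantifier, then do a min-cost selection). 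What your route buys is a self-contained, explicitly combinatorial argument that does not lean on the somewhat terse equivalence chain veto $\to$ $k$-approval $\to$ plurality; what the paper's route buys is brevity and reuse of the flow networks already built for plurality.

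One detail in your safety analysis deserves tightening. Your claim that the adversary's worst sub-activation feeds no bad candidate is correct (feeding a bad candidate both fails to lower $\min_{b\in B} n_b$ and reduces $c$'s veto count, which only helps $c$). But the resulting safe region is not a single box: the adversary wins exactly when $k_g \geq (\beta-n_g)^+$ (with tie-break adjustments) for \emph{every} good $g$ and a $k$-independent inequality $n_c-\sum_g(\beta-n_g)^+\geq\beta$ holds, so safety for the briber means violating at least one per-candidate threshold --- a union of $\OO(m)$ boxes, not a downward-closed box. The greedy ``pin every good $k_g$ at its success bound $\ell_g$'' can conflict with the cardinality constraint $\sum_a k_a=n_c-x$ when the bad candidates' capacities are exhausted; the fix is to iterate over which good candidate certifies safety and solve a box-constrained min-cost selection for each, which keeps everything polynomial. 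This is a gap in the write-up, not in the approach.
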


%\begin{proof}
The veto rule is equivalent to $k$-approval voting for $k=m-1$. Now, the result follows from \Cref{k-approval shift}.
%\end{proof}

Next, we take a look at the greedy algorithm for solving \SBIS for simplified Bucklin. Although for simplified Bucklin both \SBIS and \SSB are polynomial-time solvable, their \DB counterparts are not (as we will see in the next section). This is the only rule that we have studied, which is not a scoring rule, but exhibits such behaviour.

\begin{theorem}\label{theorem bucklin shift}
	For simplified Bucklin, \SBIS is in \Pb.
\end{theorem}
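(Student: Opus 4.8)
The plan is to give a direct greedy algorithm. First the algorithm verifies the \textbf{Success} condition by building $\QQ$ from $\PP$ according to $\sss$ and checking that $r(\QQ)=c$; if this fails we output \no. For \textbf{Safety} we must decide whether there is a sub-shift vector $\sss'=(s'_1,\dots,s'_n)$ with $s'_i\le s_i$ for every $i$ under which some bad candidate wins $r(\QQ')$. Since a candidate wins under simplified Bucklin exactly when its winning round is smallest (with ties broken by $\succ_t$), I would iterate over every bad candidate $\bbb\in B$ and every candidate value $k\in[m]$ for its winning round, and decide in polynomial time whether there is a legal $\sss'$ that makes $\bbb$ win with Bucklin winning round exactly $k$. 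There are $\OO(m^2)$ such pairs, and $\sss$ is safe (given Success) iff every one of these tests fails.

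The observation that makes each test tractable is the monotonicity of shifting $c$: inside a single vote, pushing $c$ further to the left raises $c$ and lowers every other candidate by at most one position. Hence, as $s'_i$ ranges over $\{0,1,\dots,s_i\}$, the predicate ``$c$ is among the top $j$ positions of vote $i$'' is non-decreasing, the predicate ``$a$ is among the top $j$ positions of vote $i$'' is non-increasing for every $a\neq c$, and each such predicate flips value at most once; so the contribution of vote $i$ is a piecewise-constant function of $s'_i$ with $\OO(m)$ breakpoints, all computable in polynomial time. Using this I would translate ``$\bbb$ wins with round $k$'' into counting conditions on the shifted profile: (i) $\bbb$ lies in the top $k$ positions in more than $n/2$ of the votes and in the top $k-1$ positions in at most $n/2$ of them; (ii) every $a$ with $a\succ_t\bbb$ lies in the top $k$ positions in at most $n/2$ votes; and (iii) every $a$ with $\bbb\succ_t a$ (this covers $c$ whenever $\bbb\succ_t c$, and always covers the original winner $w$) lies in the top $k-1$ positions in at most $n/2$ votes.

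It remains to decide feasibility of (i)--(iii). Raising $s'_i$ helps every upper-bound constraint in (ii)--(iii) for competitors $a\neq c$, but hurts the ``more than $n/2$'' part of (i) and the single upper-bound constraint that mentions $c$. So, vote by vote, I would set $s'_i$ as large as possible subject to not evicting $\bbb$ from the relevant top segment and not dragging $c$ into it, and argue by an exchange argument that this choice is simultaneously optimal for all the competitor constraints; if the resulting shifted profile already satisfies (i)--(iii) we answer \yes for $(\bbb,k)$, and otherwise we relax by permitting $c$ to be dragged into the relevant top segment in at most $\ell$ votes, iterating $\ell$ from $0$ to $n$ and, for each $\ell$, choosing greedily the $\ell$ votes in which paying that price demotes the most of the good candidates that still block $\bbb$. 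Each such run is polynomial, there are $\OO(n)$ runs per pair $(\bbb,k)$, and the whole procedure is repeated over the $\OO(m^2)$ pairs, so the algorithm is polynomial-time. I expect the technical crux --- and the main obstacle --- to be exactly this coupling through $c$: a single shift inside one vote simultaneously promotes $c$ and demotes the whole block of candidates $c$ passes over, so ``demote the blocking good candidates'' cannot be tuned independently of ``keep $c$'s Bucklin score high enough that $\bbb$ still beats $c$''. Making the greedy/exchange argument rigorous --- in particular justifying that it is without loss of generality never to demote $\bbb$ and always to use the least shift achieving the required demotions, subject to the global cap on how often $c$ may be promoted --- is where the real work of the proof lies.
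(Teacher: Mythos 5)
Your overall framework --- verify \textbf{Success} directly, then for \textbf{Safety} enumerate pairs $(\bbb,k)$ of a bad candidate and a prospective winning round and test whether some sub-shift realises it --- is sound, and the monotonicity observations are correct; this is in the same spirit as the paper's greedy, which likewise exploits the fact that a partial shift only promotes $c$ and weakly demotes everyone else. The genuine gap sits exactly where you flag it: the per-pair feasibility test. As stated, ``choose greedily the $\ell$ votes in which paying that price demotes the most of the good candidates that still block $\bbb$'' treats what is a multi-constraint covering problem (every over-majority competitor $a$ must be pushed out of the top $k$ in at least $sc_k(\PP,a)-\lfloor n/2\rfloor$ votes, simultaneously for all such $a$, subject to a cap on how often $c$ may enter the top $k$) as if it had a single scalar objective; without further structure this is a set-multicover-type problem and the greedy is neither well defined nor correct. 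Likewise, you never justify that it is without loss of generality never to evict $\bbb$: when $k$ is one more than the original winning round, breaking a competitor's majority at level $k-1$ in a vote where $\bbb$ sits at position $k$ necessarily pushes $\bbb$ out of the top $k$ of that vote, so the two requirements genuinely conflict.

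The missing ingredient that rescues the approach (and that the paper's algorithm implicitly uses by always placing $c$ immediately above the blocking candidate at position $\ell$) is the following local fact: shifting $c$ left within a single vote changes the top-$k$ set of that vote either not at all or by swapping $c$ in for the candidate at position exactly $k$. Hence, per vote and per level, at most one specific, vote-determined candidate can be demoted; the votes in which a given competitor can be demoted at level $k$ are pairwise disjoint across distinct competitors; and every such demotion costs exactly one unit of $c$'s slack at level $k$. With this, feasibility for a fixed $(\bbb,k)$ collapses to simple counting (or a small flow computation where levels $k$ and $k-1$ interact inside the same vote), and both your ``never evict $\bbb$'' claim and your budgeting of $c$'s promotions can be made rigorous. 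Without it, the exchange argument you defer is precisely the step that does not go through as written. You should also observe that only $k\in\{\ell,\ell+1\}$ can occur, where $\ell$ is the Bucklin winning round of $\PP$: the original winner drops by at most one position per vote and hence retains a majority at level $\ell+1$ in every sub-shifted profile, while no candidate other than $c$ can gain a majority at level $\ell-1$; this keeps the enumeration over $k$ trivial.
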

\begin{proof}
	We describe a greedy algorithm to solve \SBIS for simplified Bucklin. Consider $(\CC, \PP, c, \succ_B, \sss)$ to be an instance of \SBIS for simplified Bucklin, with $|\CC| = m$. Let the winning candidate according to \PP be $w \in \CC$ and $w \neq c$. 
	
	Let \VV be the set of voters. Let $\ell$ be the simplified Bucklin winning round according to \PP. It is clear that the winning round for any candidate $x \in \CC \setminus \{c\}$ according to \QQ is no smaller than $\ell$. Particularly, the winning round for $w$ is either $\ell$ or $\ell+1$ according to \QQ (as explained in \cite{schlotter2011campaign}). Let $sc_\ell(\PP, a)$ denote the number of votes received by $a \in \CC$, till the $\ell^{\text{th}}$ round according to \PP. 
	\begin{align*}
	\text{Let} \succ_t &= c \succ w \succ \overrightarrow{\CC_1} \succ \overrightarrow{\CC_2}\\
	\text{Let} \succ_B &= c \succ \overrightarrow{\CC_1} \succ w \succ \overrightarrow{\CC_2}, \text{where}\\
	\CC &= \CC_1 \cup \CC_2 \cup \{c,w\}
	\end{align*}
	Note that the above mentioned tie-breaking rule is assumed to simplify the proof. Clearly $\CC_2$ is the set of \textit{bad} candidates. Let $p_i(\XX, a)$ denote the position of candidate $a$ in the $i^{\text{th}}$ vote of some profile $\XX$. If $r(\QQ) \neq c$, we return \NO, as the bribery is not successful. Otherwise, we use the following algorithm for checking safety of \SBIS in simplified Bucklin.
	\iffalse{
	\begin{algorithm}[H]
		\SetAlgoLined
		%\KwResult{Write here the result}
		\For{$b' \in \CC_2$}{
			
			\If{$s_l(b') > \lfloor{n/2}\rfloor$}{
				\If{$\forall a \in \CC \setminus \{b', c\}$ with $a \succ_t b'$, either $a \in \CC_1 \cup \{w\}$ or $s_l(a) \leq \lfloor{n/2}\rfloor$}{
					\For{$v_i \in \VV$}
					{
					    \eIf{$\exists a \in \CC \setminus (\CC_2 \cup \{w\})$ such that $p_i(\PP, a) = l$ and $s_l(a) > \lfloor{n/2}\rfloor$}
					    {
					        \eIf{$0 < p_i(\PP,a) - p_i(\PP,c) \leq s_i$}
						{Create a new vote $v_i^{\RR_b}$, by shifting $c$ up in $v_i^{\PP}$, such that $p_i(\RR_b,c)-p_i(\RR_b,a) = 1$. Then, assign $v_i^{\RR_b}$ to $v_i^{\PP}$.}
						{
							Keep $v_i^{\PP}$ as it is.
						}
					    }
					    {Keep $v_i^{\PP}$ as it is.}
					}
				}
			}
		}
		\caption{Algorithm for \SBIS for simplified Bucklin}
	\end{algorithm}
	}\fi
	
	    Let $B'$ be the complete subset of bad candidates each having number of votes greater than $\lfloor \nicefrac{n}{2}\rfloor$ at the $\ell^{\text{th}}$ level. These are the only candidates who can cause the bribery to be unsafe. Let $b' \in B'$ be a bad candidate who beats all candidates in $B' \setminus \{b'\}$ in tie-breaking. This is the most ``powerful'' bad candidate, who would be the first bad candidate to win.
		\iffalse{Let $b' \in \CC_2$ be a bad candidate with $s_l(b') > \lfloor{n/2}\rfloor$ such that $\forall a \in \CC \setminus \{b', c\}$ with $a \succ_t b'$, either $a \in \CC_1 \cup \{w\}$ or $sc_\ell(\PP, a) \leq \lfloor{n/2}\rfloor$.}\fi Consider a preference profile \RR and initialise it to \PP. Now for every voter $v_i \in \VV$, we do the following iteratively:
		\begin{itemize}
			\item \IF $\exists a \in \CC \setminus \CC_2$ such that $p_i(\RR, a) = \ell$ and $sc_\ell(\RR, a) > \lfloor{\nicefrac{n}{2}}\rfloor$.
			\begin{itemize}
				\item \IF $0 < p_i(\RR,c) - p_i(\RR,a) \leq s_i$, we create a new vote $\succ_i^{\RR_b}$, by shifting $c$ up in $\succ_i^{\RR}$, such that $p_i(\RR_b,a)-p_i(\RR_b,c) = 1$. We then assign $\succ_i^{\RR} = \succ_i^{\RR_b}$.
				\begin{itemize}
				    \item \IF $c$ wins for $\RR$, then \SBIS is a \YES instance; terminate.
				    \item \ELSE, continue.
				\end{itemize}
				\item \ELSE, we keep $\RR$ unchanged.
			\end{itemize}
			\item \ELSE, \SBIS is a \NO instance; terminate.
		\end{itemize}
		If the algorithm terminates at the last \ELSE, it returns a \NO instance, because in this case \RR represents the partially bribed profile which lets some bad candidate (here $b'$ win). This is true because no good candidate is able to cross the majority at the $\el\textsuperscript{th}$ round according to \RR. So, the bribery is unsafe. 
		%Other than this if the algorithm does not terminate at the second ``Else'' but we have no $a \in \CC_1 \cup \{c, w\}$ such that $sc_\ell(\RR, a) > \lfloor{n/2}\rfloor$, the bribery is clearly unsafe, as $b'$ is the winner according to \RR. 
		Otherwise, if the algorithm terminates at the third \IF, when any of these shifts is reverted, we always have some candidate in $\CC_1 \cup \{c, w\}$ who beats all the bad candidates including $b'$ at the $\ell^{\text{th}}$ round, thus ensuring that \SBIS is a \yes instance. The time complexity of this algorithm is $\OO(n)$, because the algorithm does $\OO(n)$ updates (at most one for each voter).
\end{proof}

\begin{theorem}\label{theorem bucklin safe shift}
For the simplified Bucklin voting rule, \SSB is in \Pb, assuming a monotonous price function.
\end{theorem}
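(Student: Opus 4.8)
The plan is to reduce \SSB for simplified Bucklin to polynomially many instances of a constrained minimum-cost shift-selection problem, each solvable by a greedy exchange argument (or a min-cost flow as in \Cref{plu safe}), exploiting the monotonicity of the price functions together with some structure of how left-shifts of $c$ affect simplified Bucklin scores. First I would record the basic monotonicity facts: a left-shift of $c$ in a vote weakly decreases $sc_k(\cdot,a)$ for every candidate $a\neq c$ and every round $k$, and weakly increases $sc_k(\cdot,c)$; hence as the shift vector grows from $\mathbf 0$ to $\sss$ the winning round of $c$ is non-increasing while that of every other candidate is non-decreasing. Moreover a left-shift of $c$ costs $w=r(\PP)$ at most one position per vote, so $sc_k(\QQ',w)\ge sc_{k-1}(\PP,w)$ for every downshift $\QQ'$ of $\sss$; since $\ell$ is the winning round of $w$, the winning round of $w$, and therefore of the winner, under \emph{any} $\sss'\le\sss$ is at most $\ell+1$. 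Consequently the only bad candidates that can ever win a partially executed bribery lie in $B'=\{b\in B: sc_{\ell+1}(\PP,b)>\lfloor n/2\rfloor\}$, which is read off directly from $\PP$. If $B'=\emptyset$, every successful shift bribery is automatically safe and the problem collapses to the decision version of minimum-cost \SHB for simplified Bucklin, polynomial-time solvable by a guess-the-round-and-greedy argument; so assume $B'\neq\emptyset$ and let $b^\ast$ be the $\succ_t$-maximal candidate in $B'$.

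Next I would guess the final winning round $\lambda\in\{1,\dots,\ell+1\}$ of $c$ together with the critical round $k^\ast\in\{\ell,\ell+1\}$ at which $b^\ast$ could overtake — polynomially many guesses. For a fixed guess, I claim that in each bribed vote $v_i$ it suffices to shift $c$ to one of a constant set of ``interesting'' positions: position $\lambda$ (the least shift letting $v_i$ add to $c$'s majority at round $\lambda$), position $k^\ast$ (the least shift pushing a candidate currently at position $k^\ast$ below majority at round $k^\ast$), and the position of $b^\ast$ in $\succ_i^{\PP}$ (the least shift pushing $b^\ast$ itself down); by monotonicity of $\pi_i$ every other shift in $v_i$ is either useless or dominated by one of these. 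After this normalization, \emph{success} becomes a covering condition (enough voters place $c$ in the top $\lambda$, $c$ beats its tie-break competitors at round $\lambda$, and nobody reaches majority before round $\lambda$ — automatic when $\lambda\le\ell$), and, using the monotonicity above, \emph{safety} becomes the condition that $b^\ast$ never becomes the unique minimum-winning-round candidate, which I would phrase as: by the time the selected shifts push every good candidate, and $c$ while $c$ has not yet won, out of majority at round $k^\ast$, they must already have pushed $b^\ast$ out of majority at round $k^\ast$.

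Both requirements are then monotone covering constraints over the voters — each ``push $a$ out of majority at round $k$'' demands covering a prescribed number of the at most $n$ votes placing $a$ exactly at position $k$ by shifts of $c$ crossing position $k$ — so the cheapest feasible selection (each voter assigned one of its constantly many interesting shifts) is computed by a greedy exchange argument in the spirit of \Cref{theorem bucklin shift} or by a min-cost flow as in \Cref{plu safe}; we answer \yes iff some guess admits a budget-feasible selection, and the monotonicity of the price function is exactly what legitimizes the greedy/flow step. Handling an arbitrary tie-breaking order (\Cref{theorem bucklin shift} assumed a convenient $\succ_t$) only adds explicit tie-break comparisons to these constraints and does not change the running time.

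\textbf{The main obstacle} is the normalization step. Whereas cost is minimized by shifting as little as possible (monotonicity of $\pi_i$), safety can be \emph{helped} by over-shifting, since pushing $b^\ast$ further down makes it less threatening at the critical round; so one must prove by an explicit exchange argument that restricting attention to the constant set of interesting positions per vote loses no safe-and-successful solution, and one must argue that ``safe for all (exponentially many) $\sss'\le\sss$'' collapses to ``safe along the chosen selection'' — the latter relying precisely on the winning-round monotonicity recorded in the first paragraph. Once these two facts are in place, the remaining optimization is a routine polynomial-time computation, and combining it with the $\mathcal O(m)$-many guesses of $\lambda$, $b^\ast$, $k^\ast$ yields the claimed polynomial-time algorithm.
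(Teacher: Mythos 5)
Your overall architecture matches the paper's: both rely on the monotonicity of simplified Bucklin tallies under left-shifts of $c$, on the fact that the winning round of every partial profile is at most $\ell+1$ (so $w$ is always at majority by round $\ell+1$ and the only real danger is a bad candidate reaching majority at round $\ell$), and on reducing the residual optimization to a min-cost flow over threshold shifts as in \Cref{plu safe}. The difficulty is that the two lemmas you explicitly defer are the entire mathematical content, and both are problematic as sketched. On the normalization: the correct set of ``interesting'' shift depths in a vote is determined by rounds, not by candidates. A shift of $c$ from position $p$ to position $q$ changes the round-$k$ tally only when $q\le k<p$, and then only by $+1$ for $c$ and $-1$ for the single candidate originally at position $k$; hence the only thresholds that matter are positions $\lambda$, $\ell$ and $\ell+1$, and ``the position of $b^\ast$ in $\succ_i^{\PP}$'' is not one of them. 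That version of the lemma is true and easy, but it is not the one you state.

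The more serious gap is the collapse of ``safe for all $\sss'\le\sss$''. Guarding against a single $\succ_t$-maximal bad candidate $b^\ast$ at a single guessed round $k^\ast$ is not sound: every bad candidate $b$ with $sc_\ell(\PP,b)>\lfloor n/2\rfloor$ induces its own family of dangerous downshifts (revert exactly the shifts in the votes that have $b$ at position $\ell$, keep all others, so that $b$ retains its full round-$\ell$ tally while the good candidates other than $c$ are still demoted), and all of these must be excluded simultaneously. Moreover the phrasing ``by the time the selected shifts push every good candidate out of majority \ldots they must already have pushed $b^\ast$ out'' implicitly treats downshifts as prefixes of a sequence, whereas the adversary may execute an arbitrary sub-vector, choosing independently per voter which shifts to undo. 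The resolution the paper uses is that full compliance is the unique worst case for every good candidate other than $c$, so round-$\ell$ safety can be certified by a single condition on the fully bribed profile (some good candidate retains round-$\ell$ majority in $\QQ$), encoded as a flow demand, together with the separate observation that the round-$(\ell+1)$ case is unsafe exactly when some bad candidate already has round-$\ell$ majority in $\PP$. You would need to supply an argument of this kind, or an exact characterization of the adversary's feasibility problem over downshifts, before the greedy/flow step is justified; as written, the proposal names the obstacle but does not overcome it.
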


\begin{proof}
Let $(\CC, \PP, c, \succ_B, \Pi, b)$ be an instance of \textsc{Safe Shift Bribery} for simplified Bucklin. Let $\pi_i$ be a monotonous pricing function $\forall \pi_i \in \Pi$. Let the winner be $w$ according to $\PP$. Let $\ell$ be the Bucklin winning round in $\PP$. Let $sc_\ell(a)$ denote the number of votes received by $a \in \CC$, till the $\ell^{th}$ round in \PP. Let the tie-breaking rule be $\succ_t$ be 
$$c \succ \overrightarrow{G} \succ \overrightarrow{B}$$ 
where $G$ is the set of good candidates and $B$ is the set of bad candidates. This tie-breaking rule has been assumed for ease of proof, and we can extend the ideas used in this proof for any tie-breaking rule. It can be shown that in any bribed profile, the Bucklin winning round will be either $\ell$ or $\ell+1$ \cite{schlotter2011campaign}. We try to find a safe and successful bribed profile \QQ via \textit{Case 1} followed by \textit{Case 2} (if \textit{Case 1} does not yield any solution).

\textbf{\textit{Case 1:}} Consider the Bucklin winning round in \QQ to be $\ell$. If there is no bad candidate reaching majority (at least $\lfloor n/2 \rfloor +1$ votes) in $\PP$ up to the $\ell^{th}$ round, then we only need to find a successful bribery which is polynomial time solvable according to \cite{schlotter2011campaign}. This is because $w$ reaches majority at the $\ell^{th}$ round, which trivially implies that $w$ also reaches majority in the $(\ell+1)^{th}$ level and defeats all bad candidates, thus ensuring safety. Else, we provide the following algorithm. For each good candidate $g \in G$ which reaches at least the majority in the $\ell^{th}$ round, we create a flow network $\GG_g = (V,E,C,W,D)$. Let us divide the voter set $\VV$ into two sets, $\VV_1$ and $\VV_2$, where $\VV_1 = \{v_i \ | \ v_i \in \VV$, $c$ is in the top $\ell$ positions of $\succ^{\PP}_i\}$ and $\VV_2 = \{v_i \ | \ v_i \in \VV$, $c$ is \textbf{not} in the top $\ell$ positions of $\succ^{\PP}_i\}$. Clearly, there is no incentive for the briber to bribe the voters in $\VV_1$. The construction of $\GG_g$ is as follows: 
\begin{align*}% Capacity, cost as a function from edges set to R. We need a formal proof. Write if and only if statements! If cost is proportional to flow, Polynomial-time, else NP Complete.
    V &= \{s,t\} \cup \{u_i \ | \ i \in \VV_2\} \cup \{y_a \ | \ a \in \CC\}\\
    E &= \{(s,u_i) \ | \ i \in [n]\} \cup \{(y_a,t) \ | \ a \in \CC\} \cup \{(u_i,y_c) \ | \ i \in \VV_2, \text{ and the cost of}\\ & \text{shifting } c \text{ to the } \ell^{th} \text{ position in } \succ^{\PP}_i \text{ is } \leq b\} \cup \{(u_i,y_a) \ | \ a \in \CC \setminus \{c\}, i \in \VV_2,\\ &\text{and } a \text{ is at the } \ell^{\text{th}} \text{ position in } \succ^{\PP}_i\}\\
    W &= \{n, \text{for every} (y_a,t) \ | \ a \in \CC\} \cup \{1, \text{ for every other edge}\}\\
    D &= \{\lfloor n/2 \rfloor + 1 - sc_\ell(c) \ | \ (y_c,t)\} \cup \{\lfloor n/2 \rfloor + 1 - sc_{\ell-1}(g) \ | \ (y_g,t)\} \cup \{0 \text{ for}\\ &\text{every other edge}\}\\
    C &= \{cost_i, \text{ for } (u_i,y_c) \ | \ i \in \VV_2, \text{ and } cost_i \text{ is the cost of shifting } c \text{ to the } \ell^{\text{th}}\\ &\text{position in } \succ^{\PP}_i\} \cup \{0, \text{ for all other edges}\}
\end{align*}
    % There exists a G for which there is a graph WITH flow n .... if and only if there is a safe bribed profile where the s.b. score of c is n. Then add a formal proof.
    The demand on the edge $(y_c,t)$ ensures that $c$ is the winner in $\QQ$ and the bribery is successful. In a graph $\GG_g$ the demand on the edge $(y_g,t)$ ensures that $g$ always reaches the majority by the $\ell^{th}$ round, even in a partially bribed profile. Since we know that no bad candidate can reach the majority in the $\ell^{th}$ round, the presence of an $s-t$ flow of $n$ using the out-of-kilter algorithm \cite{fulkerson1961out} implies that the bribed preference profile corresponding to the edges with flow $1$, represents a successful and safe bribery. It is easy to see that the converse is also true.
    Hence, if there is an $s-t$ flow of value $n$ and cost at most $b$ for any $\GG_g$ (hence requiring at most $\OO(m)$ iterations), we output \YES, and the corresponding edge flow values help us construct a safe and successful bribed profile. Otherwise, we proceed to \textit{Case 2}.

\textbf{\textit{Case 2:}} %Add formal proof
    Consider the Bucklin winning round to be $\ell+1$ in \QQ. If there exists a bad candidate who reaches at least the majority at the $\ell^{th}$ position, the election is trivially unsafe at the $(\ell+1)^{th}$ level. Consider a bad candidate \bbb who reaches majority at the $\ell^{th}$ level. Assume that each of the bribed voters who have \bbb at the $\ell^{th}$ position of their preference ordering in \PP, stick to their original preference ordering, while those bribed voters who have some good candidate at the $\ell^{th}$ position, comply with the briber's suggested votes. This would imply that no good candidate would reach majority at the $\ell^{th}$ round, making \QQ trivially unsafe. (We know that this can be done because the Bucklin winning round in \QQ is $l+1$). 
    
    If the above is not true, then the election is always safe because $w$ reaches at least the majority in $\PP$ at the $\ell^{th}$ position so it also ensures safety at the $(\ell+1)^{th}$ level. Hence we only need to find a successful bribery within the budget for the $(\ell+1)^{th}$ level, which takes $\OO(mn^3)$ time via the dynamic programming algorithm in Theorem $3$ of \cite{schlotter2011campaign}.
If both the cases fail, then it is not possible obtain a safe and successful bribed profile within the given budget, and we output \NO.
\end{proof}

\subsection{Hardness Results}\label{sec:hard}

We now present our hardness results. We use the \XTC problem which is known to be \NPC~\cite{gareyjohnson}, in many of our hardness proofs.
	
	\begin{definition}[\XTC]
		Given a universe $\UU = \{u_i \mid i \in 3t\}$ of $\ 3t$ elements and a collection $\SS = \{S_i \mid i \in [m]\}$ of subsets of $\ \UU$, where $|S_i | = 3$ for each $i \in [m]$, compute if there exists a set $I \subseteq[m]$ such that $\ \forall \ i, j \in I$ and  $i \neq j$, $S_i \cap S_j = \emptyset$ and $\cup_{i \in I} S_i = \UU$.
	\end{definition}

We show that \DIS is \coNPC for the $k$-approval voting rule for every constant $k\ge 3$. For the following hardness results, we use variants of proof techniques which have been used before~\cite{baumeister2011computational,DeyMN16,DBLP:journals/tcs/DeyMNS21}.

\iffalse{
\begin{lemma}\label{lem:scoringrule}
 Let $\mathcal{C} = \{c_1, \ldots, c_m\}$ be a set of candidates. Then there exists a profile $\PP$ consisting of $m$ votes such that we have the following where $s(\cdot)$ is the $k$-approval score function in profile \PP.
 $$s(x)+1 = s(y)-1 = s(a) \text{ for every } a\in\CC\setminus\{x,y\}$$
\end{lemma}
}\fi

\begin{theorem}\label{Theorem k-approval dol}
	For $k$-approval, \DIS is \coNP-complete, for $k \geq 3$.
\end{theorem}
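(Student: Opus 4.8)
The plan is to prove the two directions separately. Membership in \coNP is the easy half: a certificate that an instance of \DIS is a \no instance is simply a subset $\VV_b' \subseteq \VV_b$ of bribed voters such that the winner $x$ of the profile $((\succ_i^{\QQ})_{v_i \in \VV_b'}, (\succ_i^{\PP})_{v_i \in \VV \setminus \VV_b'})$ is a bad candidate (i.e.\ $w \succ_B x$), together with the easy-to-check failure of Success if that is the issue. Since $k$-approval is efficient, one can verify in polynomial time that $x$ is indeed the winner of the sub-profile and that $w \succ_B x$; hence the complement of \DIS is in \NP, so \DIS $\in$ \coNP.

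For \coNP-hardness I would reduce from \XTC (equivalently, reduce from its complement to the complement of \DIS). Given a \XTC instance with universe $\UU = \{u_1,\dots,u_{3t}\}$ and sets $\SS = \{S_1,\dots,S_m\}$, each $|S_i|=3$, I would build a $k$-approval election whose candidate set contains one ``element candidate'' $u_j$ for each $u_j \in \UU$, the briber's favourite $c$, a designated bad candidate $b$ that is to play the role of the threatened winner, plus padding/dummy candidates used (via the structural lemma about scoring rules that the excerpt invokes) to tune all scores precisely. The original profile $\PP$ is chosen so that $w = r(\PP)$ is some good candidate while $b$ sits exactly one vote short of overtaking. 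The bribed profile $\QQ$ will differ from $\PP$ only on a collection of ``set voters'': one voter $v_i$ per set $S_i$, whose original vote approves (the three members of) $S_i$ in its top-$k$ block (with $c$ just outside), and whose bribed vote $\succ_i^\QQ$ pushes $c$ into the approved block, thereby removing approvals from the three element candidates in $S_i$. The intended correspondence is: choosing a sub-collection $\VV_b'$ of set voters to ``ignore the briber'' (keep their $\PP$ vote) corresponds to choosing a sub-collection of sets $\{S_i\}_{i\in I'}$; the element candidates whose approval count stays high are exactly those covered only by ignored sets.

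The heart of the reduction is arranging the arithmetic so that a bad candidate wins the sub-profile if and only if the ignored sets fail to be an exact cover — i.e.\ some element candidate is left with too-few or too-many approvals, or $b$ gets pushed over the top. Concretely I would like: if the ignored sets form an exact cover of $\UU$, then every element candidate receives exactly the ``safe'' number of approvals and stays a good candidate tied appropriately under the tie-break with $c=r(\QQ)$, so the profile stays safe; whereas if the ignored sets do not form an exact cover, then either some element is uncovered (its candidate loses approvals and $b$, or that element candidate itself if it is bad, pushes past the good threshold) or some element is multiply covered (creating an imbalance that again lets a bad candidate win). Using the structural lemma one can make the gaps tight (difference of one or two) so these implications are sharp. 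Thus a \no instance of \DIS (some $\VV_b'$ makes a bad candidate win) corresponds exactly to the non-existence of an exact cover, i.e.\ a \no instance of \XTC; equivalently \XTC reduces to the complement of \DIS, giving \coNP-hardness.

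The main obstacle I anticipate is the simultaneous bookkeeping of many candidates' scores under the "partial following" quantifier: the Safety condition ranges over \emph{all} subsets $\VV_b'$, so I must ensure that no adversarial choice of ignored set-voters other than the ``cover-breaking'' ones can sneak a bad candidate to victory, and conversely that when an exact cover exists, \emph{every} sub-profile keeps a good candidate on top. Getting the padding candidates' scores, the sizes $|S_i|=3$, the value of $k$, the number of set voters, and the lexicographic tie-break $\succ_t$ all mutually consistent — so that ``exactly $t$ ignored sets covering $\UU$'' is the unique borderline case — is the delicate part; it is essentially a careful instantiation of the scoring-rule gadget from \cite{baumeister2011computational,DeyMN16,DBLP:journals/tcs/DeyMNS21} combined with the $3$-set cover structure, and I would expect the constant-$k$ ($k\ge 3$) assumption to be used precisely to fit three element-approvals per set voter inside the top-$k$ block while still leaving room to insert $c$.
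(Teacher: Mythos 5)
Your \coNP membership argument is fine and matches the paper. The problem is in the hardness direction: the logical polarity of your reduction is inverted, and as stated it cannot work. \DIS is a \no instance iff \emph{there exists} a subset $\VV_b'$ of bribed voters whose sub-profile elects a bad candidate. Since you want to map \yes instances of \XTC to \no instances of \DIS (this is what a Karp reduction to the \emph{complement} requires), the witnessing subset $\VV_b'$ must itself encode the exact cover: you need ``the bad candidate wins the sub-profile determined by $\VV_b'$ \emph{iff} the corresponding sub-collection of $\SS$ is an exact cover of $\UU$,'' so that $\exists\,\VV_b'$ electing a bad candidate $\Leftrightarrow$ $\exists$ exact cover. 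You instead arrange that the bad candidate wins iff the chosen sub-collection \emph{fails} to be an exact cover. Under that arithmetic, ``some $\VV_b'$ elects a bad candidate'' becomes ``some sub-collection of $\SS$ is not an exact cover,'' which is trivially true for every nontrivial instance (take the empty or the full sub-collection), so every reduced instance would be a \no instance of \DIS. Your closing sentence compounds this: you claim \DIS-\no corresponds to \XTC-\no, which (if it held) would be a many-one reduction from \XTC to \DIS itself, establishing \NP-hardness of a \coNP problem rather than \coNP-hardness.

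The fix is essentially the paper's construction, which is the mirror image of yours. There, each set voter's \emph{original} vote approves $w$ (plus dummies) and the \emph{bribed} vote approves the three members of $S_i$ (plus $k-3$ dummies); the unbribed voters are calibrated so that the designated bad candidate $x$ trails each element candidate by $2$ and trails $w$ by $m-t+1$ in reverse. Then $x$ wins the sub-profile induced by the set $Y$ of voters who \emph{follow} the briber iff $|Y|=t$ (so that $w$ loses enough approvals but no element candidate gains two) and $\{S_i : v_i\in Y\}$ is an exact cover, giving exactly the existential-to-existential correspondence you need. Your gadget (original vote approves $S_i$, bribed vote withdraws those approvals) pushes the cover structure onto the \emph{ignoring} voters, and there is no way to tune the scores so that ``some choice of ignoring voters elects a bad candidate'' is equivalent to the existence of an exact cover rather than being vacuously satisfiable.
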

\begin{proof}
    To see that \DIS belongs to \coNP, any \no instance $(\CC,\PP,c,\suc_B,\QQ)$ can be verified either from the fact that the $k$-approval winner in \QQ is not $c$ or from the existence of a profile $\RR=(R_i)_{i\in[n]}$ such that (i) $R_i=\succ_i^{\PP}$ or $R_i=\succ_i^{\QQ}$ for every $i\in[n]$ and the $k$-approval winner in \RR is less preferred in $\suc_B$ than the $k$-approval winner in \PP. To prove \coNP-hardness, we exhibit a reduction from \XTC to \DIS such that the \XTC instance is a \yes instance if and only if the \DIS instance is a \no instance.

    Let $(\UU= \{u_1, \dots, u_{3t}\},\SS= \{S_1, \dots, S_m\})$ be any instance of \XTC. Without loss of generality, we can assume that $m > t$ by duplicating the sets in \SS. We construct an instance $(\CC,\PP,c,\succ_B,\QQ)$ of \DIS for $k$-approval as follows: 
    \begin{align*}
        \CC &= \UU \cup \{x,w\} \cup \DD \text{, where}\\
        \DD &= \uplus_{i\in[m]} \DD_i^1 \uplus_{i\in[m]} \DD_i^2 \text{ where }|\DD_i^1|=k-3, |\DD_i^2|=k-1\\
        \suc_t &= w \succ c \succ \overrightarrow{\UU \setminus \{c\}} \succ x \succ \overrightarrow{\DD}\\
        \suc_B &= c \succ \overrightarrow{\UU \setminus \{c\}} \succ \overrightarrow{\DD} \succ w \succ x
    \end{align*}
    Here \DD is a set of dummy candidates, who cannot win, $w$ is the winner in the unbribed preference profile, and $c$ is the winner in the bribed preference profile. Using Lemma 4.2 from \cite{baumeister2011computational}, we construct a set of unbribed voters and their votes such that we have $s_{\VV_u}(u_j) = s_{\VV_u}(x) - 2$ for every $j\in[3t]$ and $s_{\VV_u}(w) = s_{\VV_u}(x) - (m-t+1)$. We now describe the set $\VV_b$ of bribed voters, and their bribed and original votes.
	For each $i \in [m]$, we have a bribed voter $v_i \in \VV_b$, with original vote
	$$\succ_i^{\PP} = w \succ \overrightarrow{\DD_i^2} \succ \overrightarrow{\CC \setminus (\{w\} \cup \DD_i^2)}$$
	and bribed vote
	$$\succ_i^{\QQ} = \overrightarrow{S_i} \succ \overrightarrow{\DD_i^1} \succ \overrightarrow{\CC \setminus (S_i \cup \DD_i^1)}$$
	This finishes the description of the reduced \DIS instance. We claim that the \DIS instance for $k$-approval is a \NO instance if and only if the corresponding instance of \XTC is a \YES instance.
	
	$\Longrightarrow$: Suppose the \DIS instance is a \NO instance. We observe that we have only one bad candidate, namely $x$. Hence, there exists a subset $Y\subseteq \VV_b$ such that, if \RR is the profile where voters in $Y$ vote according to \QQ and every other voter votes as the unbribed instance, then the $k$-approval winner of \RR is $x$. We observe that the $k$-approval score of $x$ in \RR is $s_{\VV_u}(x)$. We claim that $|Y|=t.$ We have $|Y| \leq t$, otherwise there exists some $u_j \in \CC$ whose score in \RR is at least $s_{\VV_u}(u_j)$. However, this contradicts our assumption that $x$ is the $k$-approval winner in \RR. Also, $|Y| \ge t$, otherwise the score of $w$ in \RR is at least $s_{\VV_u}(w)+m-t+1$. However, this contradicts our assumption that $x$ is the $k$-approval winner in \RR. Hence, we have $|Y| = t$. Moreover, for $x$ to win, the collection $\{S_i: i\in[m], \succ_i^{\QQ}\in Y\}$ of sets forms an exact cover of $\UU$ as this is the only case which makes the $k$-approval score of $w$ and $u_j$ for every $j\in [3t]$ less than the $k$-approval score of $x$ in \RR. Therefore, \XTC is a \YES instance.

    $\Longleftarrow$: Suppose the \XTC instance is a \yes instance. Let $X \subseteq \SS$ be an exact cover of $\UU$. Let us consider the preference profile \RR where only bribed-voters in $\{v_i \in \VV_b : S_i\in X\}$ vote according to \QQ and others vote according to \PP. The $k$-approval score of the bad candidate $x$ is $s_{\VV_u}(x)$, every candidate in $\{u_j:j\in[3t]\}\cup\{w\}$ is $s_{\VV_u}(x)-1$, and every candidate in \DD is at most $1$ in \RR. Hence, $x$ is the $k$-approval winner in \RR. Thus, the \DIS instance is a \no instance.
\end{proof} 

%%% Appendix

\begin{corollary} \label{theorem k-veto}
For $k$-veto, \SBIS and \SSB are in \Pb, for $k > 1$; \DIS is \coNPC, for $k \geq 3$.
\end{corollary}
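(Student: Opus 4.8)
The statement has three parts: (i) \SBIS is in \Pb for $k$-veto with $k>1$; (ii) \SSB is in \Pb for $k$-veto with $k>1$; and (iii) \DIS is \coNPC for $k$-veto with $k\ge 3$. The plan is to obtain the first two parts by reducing $k$-veto shift bribery to the already-solved veto (i.e. $1$-veto) case, mirroring the trick used for $k$-approval in \Cref{k-approval shift}, and to obtain the third part by adapting the \XTC-based reduction of \Cref{Theorem k-approval dol}, exploiting the symmetry between $k$-approval and $k$-veto.

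For (i) and (ii): the key observation is that in a shift bribery the briber only ever moves $c$ toward the top of a vote, and the $k$-veto score of a candidate depends only on whether that candidate sits among the bottom $k$ positions. So for each voter $v_i$, shifting $c$ upward matters only insofar as it either (a) pulls $c$ out of the bottom $k$ block (changing $c$'s score by $+1$ and the score of whatever candidate drops into the vacated bottom slot by $-1$), or (b) leaves the bottom $k$ block unchanged. Thus the effective action for voter $v_i$ is binary — "move $c$ out of the veto zone" or "do nothing" — exactly as for plurality/veto in \Cref{Theorem veto shift,theorem veto dol}. First I would record that, for each voter, the minimum number of shifts needed to pull $c$ above position $m-k$ is well-defined (it is $0$ if $c$ is already above, and $p_i(\PP,c)-(m-k)$ otherwise), so the shift vector $\sss$ (for \SBIS) or the cost function $\pi_i$ (for \SSB) collapses to the veto setting: in \SBIS, replace each $s_i$ by the flag "$s_i$ is large enough to lift $c$ out of the bottom $k$"; in \SSB, define $\pi_i'$ as the cost of lifting $c$ just out of the bottom $k$. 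Then (i) follows from the \SBIS-for-veto algorithm of \Cref{Theorem veto shift} and (ii) follows from the \SDB-for-veto algorithm of \Cref{theorem veto dol} via \Cref{theorem veto safe shift} (equivalently, the $k$-approval–to–plurality correspondence of \Cref{k-approval shift} run "upside down"). One subtlety to check: when $c$ leaves the bottom $k$ in vote $v_i$, the candidate promoted into the bottom $k$ is forced (it is the one in position $m-k$ of $\succ_i^\PP$), so the resulting $k$-veto scores are completely determined by the binary choices and the reduction to veto is faithful — I would state this explicitly.

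For (iii), \coNP membership is identical in spirit to \Cref{Theorem k-approval dol}: a \no certificate is either the fact that the $k$-veto winner of \QQ is not $c$, or a profile \RR with each $R_i\in\{\succ_i^\PP,\succ_i^\QQ\}$ whose $k$-veto winner is $\suc_B$-worse than $w=r(\PP)$; this is polynomial-time checkable. For \coNP-hardness I would again reduce from \XTC so that \XTC is a \yes instance iff the \DIS instance is a \no instance, reusing the architecture of the $k$-approval proof but "flipped": where the $k$-approval proof engineered scores near the top via the first $k$ positions, here I engineer vetoes via the bottom $k$ positions. Concretely, keep one bad candidate $x$ (to be made the $k$-veto winner of the bad sub-profile, i.e. the candidate receiving the fewest vetoes), set up unbribed votes — using the $k$-veto analogue of Lemma 4.2 of \cite{baumeister2011computational}, which holds by the same padding construction — so that $s_{\VV_u}(x)$ (number of vetoes of $x$) is minimal by the right margins against the element-candidates $u_j$ and against $w$, and build one bribed voter per set $S_i$ whose original vote $\succ_i^\PP$ vetoes a block of dummies together with $w$ and whose bribed vote $\succ_i^\QQ$ vetoes the three elements of $S_i$ (plus dummies). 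An adversarial subset $Y\subseteq\VV_b$ that keeps the original votes for voters outside $Y$ and bribed votes inside $Y$ makes $x$ win iff $\{S_i : i\in Y\}$ is an exact cover, by the same counting argument (too few bribed votes lets $w$ stay under the veto threshold and win; too many, or a non-cover, lets some $u_j$ win). Then the \DIS instance is unsafe iff \XTC has a solution.

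I expect the main obstacle to be getting the arithmetic of the $k$-veto hardness reduction exactly right — in particular arranging the unbribed-voter margins so that (a) in the honest bribed profile \QQ the candidate $c$ genuinely wins under $k$-veto, and (b) in every adversarial \RR the \emph{only} way a bad candidate wins is via a perfect exact cover, so that no "near-cover" or over-/under-subset accidentally makes $x$, or conversely some $u_j$ or $w$, win in a way that breaks the biconditional. The dummy blocks $\DD_i^1,\DD_i^2$ and the exact choice of their sizes are there to absorb the $k-3$ extra bottom slots needed when a vote vetoes a $3$-element set versus the $k-1$ slots used alongside $w$; I would verify that with $|\DD_i^1|=k-3$ and $|\DD_i^2|=k-1$ (as in the $k$-approval proof, reused verbatim) every dummy candidate stays far from winning and the element/winner margins behave as in \Cref{Theorem k-approval dol}. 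Everything else — \coNP membership, the polynomial-time claims, and the reductions to veto — is routine given the earlier theorems.
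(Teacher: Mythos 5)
Your plan is correct, but it takes a genuinely longer road than the paper, which proves the entire corollary in one sentence: $k$-veto is $(m-k)$-approval, so the polynomial-time claims follow from \Cref{k-approval shift} and the \coNPC claim from \Cref{Theorem k-approval dol}. For \SBIS and \SSB your reduction to veto and the paper's (implicit) reduction to plurality are the same observation --- under shift bribery the per-voter action collapses to the binary choice of whether $c$ leaves the bottom-$k$ (resp.\ enters the top-$k$) block, and the displaced candidate is forced --- so the two routes are interchangeable there. The substantive difference is the \coNP-hardness of \DIS: you rebuild the \XTC reduction with the gadgets mirrored to the bottoms of the ballots, while the paper just cites the $k$-approval hardness through the duality. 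Your extra work is not redundant: the duality sends $k$-veto with constant $k$ to $(m-k)$-approval, whose number of approved positions grows with $m$, which is not literally the regime in which \Cref{Theorem k-approval dol} is proved (there $k\ge 3$ is a constant and the dummy blocks have sizes $k-3$ and $k-1$). A mirrored reduction of the kind you sketch --- a single bad candidate $x$ receiving the fewest vetoes, bribed votes whose bottom-$k$ block is either $\{w\}\cup\DD_i^2$ or $S_i\cup\DD_i^1$, and unbribed votes fixing the margins --- is what one would actually have to write out to make the constant-$k$ statement airtight, and you correctly identify the margin arithmetic as the only delicate step.
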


%\begin{proof}
	$k$-veto is in fact $(m-k)$-approval. So, the proof follows from \Cref{Theorem k-approval dol} and \Cref{k-approval shift}.
%\end{proof}

To get the hardness result for \DIS for Borda, we use the \XTFC problem, which is known to be \NPC \cite{faliszewski2008copeland}.
\begin{definition}[\XTFC]
	Given a universe $\UU = \{u_1, u_2, \ldots, u_{4m/3}\}$ of $4m/3$ elements and a collection $\SS = \{S_i \mid i \in [m]\}$ of subsets of $\ \UU$, where $|S_i| = 4$ for each $i \in [m]$ and where each $u_i \in \UU$ is in exactly $3$ sets $S_j, S_k, S_l$ for some $j, k, l \in [m]^3$, compute if there exists a set $I \subseteq [m]$ such that $\ \forall i, j \in I$, $i \neq j$, $S_i \cap S_j = \emptyset$ and
	$\cup_{i \in I} S_i = \UU$.
\end{definition}

\begin{theorem}\label{theorem borda dol}
	For Borda, \DIS is \coNPC.
\end{theorem}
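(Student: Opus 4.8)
The plan is to follow the same two-pronged strategy used in \Cref{Theorem k-approval dol}: show membership in \coNP by exhibiting a short certificate for \no instances, and show \coNP-hardness by reducing from \XTFC so that the \XTFC instance is a \yes instance if and only if the constructed \DIS instance is a \no instance. Membership in \coNP is routine: a \no instance $(\CC,\PP,c,\suc_B,\QQ)$ is witnessed either by the fact that the Borda winner in \QQ is not $c$, or by a profile $\RR=(R_i)_{i\in[n]}$ with $R_i\in\{\succ_i^\PP,\succ_i^\QQ\}$ for every $i$ such that the Borda winner of \RR is strictly less preferred in $\suc_B$ than the Borda winner of \PP; both can be checked in polynomial time since Borda is efficient.

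For the hardness direction I would take an \XTFC instance $(\UU=\{u_1,\dots,u_{4m/3}\},\SS=\{S_1,\dots,S_m\})$, where each $S_i$ has size $4$ and each element lies in exactly $3$ sets. The candidate set would be $\CC = \UU \cup \{c,w,x\} \cup \DD$ for a suitable collection \DD of dummy candidates, with $w$ the winner of the unbribed profile, $c$ the briber's favourite, and $x$ the unique ``bad'' candidate (so $\suc_B$ places $c$ and all of $\UU$ and all dummies above $w$, and $w$ above $x$, while the tie-breaking $\suc_t$ favours $x$ among the relevant candidates so that $x$ wins on ties when it is scoretied with good candidates). For each set $S_i$ I would introduce one bribed voter $v_i$ whose original vote $\succ_i^\PP$ is engineered (using padding with dummies) so that $v_i$ contributes heavily to $w$, and whose bribed vote $\succ_i^\QQ$ instead ranks the four elements of $S_i$ high and $w$ low, with the dummy padding arranged so that \emph{reverting} voter $v_i$'s bribed vote to its original vote increases $w$'s score and decreases the scores of the four elements $u\in S_i$ by a fixed amount, while barely affecting $x$. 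A block of ``unbribed'' filler voters — constructed via the standard technique (e.g.\ Lemma~4.2 of \cite{baumeister2011computational} or an explicit set of $\OO(m)$ votes, balancing Borda scores) — fixes the baseline so that in the fully bribed profile \QQ the winner is $c$, but in a partially-reverted profile \RR the bad candidate $x$ becomes the winner precisely when the set of \emph{reverted} voters corresponds to a subcollection of \SS that (i) is small enough that no $u_j$ overtakes $x$ and (ii) is large enough that $w$ does not overtake $x$, and these two bounds force the reverted subcollection to be an exact cover of \UU. One has to be careful that a $4$-set contributes to four different $u_j$ scores, so the arithmetic must be set up so that an element $u_j$ stays below $x$ exactly when it is ``covered'' by $\le 1$ reverted set and jumps above $x$ when covered by $\ge 2$; the fact that each $u_j$ lies in exactly $3$ sets is what makes this threshold behaviour realizable with integer Borda scores.

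The correctness argument then has two directions. If the \DIS instance is a \no instance, there is a reverted subset $Y\subseteq\VV_b$ making $x$ the Borda winner of the corresponding \RR; a counting argument on the scores of $w$ (which forces $|Y|$ from below) and of the $u_j$'s (which forces $|Y|$, or rather the covering multiplicities, from above) shows that $\{S_i : v_i\in Y\}$ is an exact cover, so \XTFC is a \yes instance. Conversely, given an exact cover $X$, reverting exactly the voters indexed by $X$ yields a profile in which $x$ beats $w$, every $u_j$ (each covered exactly once), and every dummy, hence $x$ wins and the \DIS instance is a \no instance. The main obstacle I expect is the score bookkeeping in the gadget: because Borda is a full-length scoring rule, moving $c$, $w$, and a $4$-element block around in a single vote perturbs the scores of \emph{all} $m$-ish candidates, so the padding candidates in \DD and the filler profile must be chosen so that every good candidate, every dummy, and $w$ all sit exactly one Borda point below $x$'s fixed score in the ``exact-cover'' profile, with the thresholds lined up so that overshooting the cover size, undershooting it, or using an overlapping subcollection each produces a good-candidate winner. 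Getting these constants and the sizes $|\DD_i^\bullet|$ exactly right — and verifying that the dummies genuinely cannot win in any reverted profile — is the delicate part; the rest is structurally parallel to the $k$-approval proof.
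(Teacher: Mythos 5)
Your overall architecture is the same as the paper's: \coNP membership via a certificate profile $\RR$ with $R_i\in\{\succ_i^\PP,\succ_i^\QQ\}$, plus a reduction from \XTFC with one bribed voter per set $S_i$ whose original vote favours $w$ and whose bribed vote promotes $S_i$ and demotes $w$, with dummy padding and score thresholds that force the ``dangerous'' sub-profile to encode an exact cover. The paper's proof differs only in inessential design choices (it takes $c=u_j$ for some $j\in[\nicefrac{4m}{3}]$ rather than a fresh candidate, and arranges the scores so that $x$ wins \emph{strictly} rather than via tie-breaking).

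There is, however, a concrete internal inconsistency in your correctness conditions that would make the argument fail as literally written. You stipulate that reverting voter $v_i$ from $\succ_i^\QQ$ back to $\succ_i^\PP$ \emph{increases} $w$'s score and \emph{decreases} the scores of the elements of $S_i$; under that sign convention, $w$ stays below $x$ only if the reverted set is \emph{small}, and a given $u_j$ stays below $x$ only if it is covered by \emph{sufficiently many} reverted sets --- exactly the opposite of your conditions (i) and (ii), and of your claim that $u_j$ ``stays below $x$ when covered by $\le 1$ reverted set and jumps above $x$ when covered by $\ge 2$.'' The $\le 1$ versus $\ge 2$ threshold is the right one, but it applies to the voters who \emph{follow} the bribery (each follower boosts the four elements of $S_i$ past the dummies), which is how the paper sets it up: the set $Y$ of followers must satisfy $|Y|\le\nicefrac{m}{3}$ (else some $u_j$ is boosted twice and overtakes $x$) and $|Y|\ge\nicefrac{m}{3}$ (else $w$ retains too much score and beats $x$), and covering each $u_j$ at most once with exactly $\nicefrac{m}{3}$ four-element sets over a $\nicefrac{4m}{3}$-element universe forces an exact cover. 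You need to either identify the exact cover with the followers (as the paper does) or flip both inequalities and the coverage threshold to match your ``reverted'' parametrisation (a collection of at most $\nicefrac{m}{3}$ reverted sets covering every $u_j$ at least once is also necessarily an exact cover); mixing the two, as the sketch currently does, yields conditions that no sub-profile can satisfy.
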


\begin{proof}
	Firstly \DIS for Borda belongs to \coNP, provable in a similar fashion as in \Cref{Theorem k-approval dol}. To prove \coNP-hardness, we demonstrate a reduction from \XTFC to \DIS such that the \XTFC instance is a \yes instance if and only if the \DIS instance is a \no instance. 
	
	Let $(\UU=\{u_1, u_2, \ldots, u_{\nicefrac{4m}{3}}\},\SS=\{S_i \mid i \in [m]\})$ be an instance of \XTFC. We construct an instance $(\CC,\PP, c, \succ_B, \QQ)$ of \DIS for Borda as follows:
    \begin{align*}
     \CC &= \UU \cup \{w,x\} \cup \DD \text{, where } \{w,x\} \notin \UU\\
     \succ_t &= w \succ c \succ \overrightarrow{\UU \setminus \{c\}} \succ x \succ \overrightarrow{\DD}\\
     \succ_B &= c \succ \overrightarrow{\UU \setminus \{c\}}  \succ \overrightarrow{\DD} \succ w \succ x
    \end{align*}
	 Here $c = u_j$ for some $j \in [\nicefrac{4m}{3}]$ is the winner in the bribed preference profile, and $w$ is the winner according in the unbribed preference profile. \DD is a set of $\OO(m^2)$ dummy candidates (who can never win). These candidates are added because the proof requires that $|\CC|=\beta m^2+\gamma$, where $\beta \geq \nicefrac{5}{3}$ and $\gamma \geq 15$.
	%%%%%%%%%%%%%%%%%%%%%%%%%%%%%%%%%%%%%%%%%%%%%%%%%%%%%%%%%%%%%%%%%%%%%%%%%%%%%%%%%%%%%%%%%%%%%%%%%%%%%%%%%%%%
	Consider $\VV$ as the set of voters. Let $\VV_b$ be the set of $m$ bribed voters (each corresponding to an $S_i \in \SS$), and $\VV_u$ be the rest of the voters. \iffalse{For each voter $v_i \in \VV_b$, there are two choices of preference orders, $\succ_i^{\QQ}$ and $\succ_i^{\PP}$, corresponding to when the briber's suggested preference order is accepted and rejected respectively}\fi Using Lemma 4.2 from \cite{baumeister2011computational}, we can construct the election such that if we consider only the votes in $\VV_u$, the scores of the candidates are in the order $s_{\VV_u}(x) > s_{\VV_u}(\UU) > s_{\VV_u}(w) > s_{\VV_u}(\DD)$, with
	\begin{align*}
	s_{\VV_u}(x) - s_{\VV_u}(w) &= (|\CC|-1) \left(\nicefrac{2m}{3} + 1\right) + \left(\nicefrac{m}{3} - 1\right)\\
	s_{\VV_u}(x) - s_{\VV_u}(u_i) &= 2\beta m^2 + 2\gamma - 12
	\end{align*}
	For each $i \in [m]$, we have a bribed voter $v_i \in \VV_b$, with original vote
	$$\succ_i^{\PP}=w \succ \overrightarrow{\DD} \succ \overrightarrow{\UU \setminus S_i} \succ \overrightarrow{S_i} \succ x$$
	and bribed vote
	$$\succ_i^{\QQ}=\overrightarrow{S_i} \succ \overrightarrow{\DD} \succ \overrightarrow{\UU \setminus S_i} \succ w \succ x$$
	 %This in-turn ensures that at least $m/3$ sets are chosen to form the set cover.
	%The maximum score that any $u_i \in \UU$ can get in case of an exact cover is 
	%$(|\CC|-1) + \left(\nicefrac{m}{3} - 1\right) \left(\nicefrac{4m}{3} - 3 \right) + \left(\nicefrac{2m}{3}\right)\left(\nicefrac{4m}{3}\right)$.
	%In case an element $u_i \in U$ is repeated in a covering, its minimum possible score is
	%$(2|\CC|-8) + 2\left(\nicefrac{m}{3} - 2\right) + \left(\nicefrac{2m}{3}\right)$.
	
	It can be shown that an instance of \DIS for Borda is a \NO instance if and only if the corresponding instance of \XTFC is a \YES instance.
    
	$\Longrightarrow$: Assume the instance of \DIS for Borda is a \NO instance. This means that $x$ is the winner in some preference profile $\RR$ where a subset of the bribed voters vote according to briber's suggested preference orders. We now show that this preference profile corresponds to a solution of the \XTFC problem. Let $Y \subseteq \VV_b$, be the set of voters who vote according to their corresponding preference orders in $\QQ$, so for all $v_i \in Y, \succ_i^\RR = \succ_i^\QQ$ and for all $v_i \notin Y, \succ_i^\RR = \succ_i^\PP$. We next show that for $x$ to win, $|Y| = t$. Clearly, $|Y| \leq t$, else for some $u_i \in \UU$, the score of $u_i$ and $x$ will atleast become equal and $x$ would lose. Also, $|Y| \geq t$, else the score of $w$ will atleast become equal to the score of $x$, and $x$ will lose. Hence, $|Y| = t$. Moreover, for $x$ to win, the preference orders of the voters in $Y$ must correspond to an exact cover of $\UU$. This is the only case where the score of $w$ as well as those of $u_i \ \forall u_i \in \UU$ are exactly $1$ less than that of $x$. Therefore, \XTFC is a \YES instance.
	
	$\Longleftarrow$: Assume the \XTFC instance is a \YES instance. To show that \DIS for Borda is a \NO instance, it is enough to show that a bad candidate ($x$), wins according to some partially bribed profile $\RR$. Let $X \subseteq \SS$ be the set of quadruplets that form an exact cover of $\UU$. Let us consider the preference profile \RR where only bribed voters in $\{v_i \in \VV_b : S_i\in X\}$ vote according to \QQ and others vote according to \PP. Clearly $|X| = \nfrac{m}{3}$. In this case, the final score of $u_i$, $\forall u_i \in \UU$, is less than the score of $x$. Moreover, the final score of $w$ is also less that the score of $x$. Hence, there exists a preference profile $R$, where a bad candidate ($x$) wins, implying that \DIS for Borda is a \NO instance.
	\end{proof}

\begin{theorem}\label{theorem borda shift}
	For Borda, \SBIS is \coNPC
\end{theorem}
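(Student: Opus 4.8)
The plan is to adapt the proof of \Cref{theorem borda dol}: the only structural change is that a shift‑bribed voter, instead of choosing atomically between $\succ_i^{\PP}$ and $\succ_i^{\QQ}$, ranges over the chain of $s_i+1$ votes obtained by moving $c$ left by $0,1,\dots,s_i$ positions. Membership in \coNP is immediate. Given an instance $(\CC,\PP,c,\succ_B,\sss)$, a certificate that it is a \no instance is either the fully shifted profile $\QQ$ together with a candidate other than $c$ beating $c$ in $\QQ$ (Success fails), or a dominated shift vector $\sss'=(s_1',\dots,s_n')$ with $s_i'\le s_i$ for all $i$ whose partially shifted profile $\QQ'$ satisfies $w\succ_B r(\QQ')$, where $w=r(\PP)$ (Safety fails); both are verified in polynomial time.

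For \coNP‑hardness I would reduce from \XTFC (used also in \Cref{theorem borda dol}) so that the \XTFC instance is a \yes instance iff the \SBIS instance is a \no instance. Let $(\UU=\{u_1,\dots,u_{4m/3}\},\SS=\{S_i\}_{i\in[m]})$ be an \XTFC instance; recall each $u_k$ lies in exactly three sets and an exact cover (if any) uses exactly $m/3$ sets. Take $\CC=\UU\cup\{c,w,x\}\cup\DD$, where $c$ is a \emph{fresh} distinguished candidate (not a universe element, since $c$ is the only candidate we may shift), $x$ is the unique bad candidate, $w=r(\PP)$, and $\DD$ is a set of $\Theta(m^2)$ dummies padding $|\CC|$ to a value of the form needed to make the Borda arithmetic integral. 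Set $\succ_B=c\succ\overrightarrow{\UU}\succ\overrightarrow{\DD}\succ w\succ x$, so $B=\{x\}$, and take a tie‑breaking order $w\succ_t c\succ_t\overrightarrow{\UU}\succ_t x\succ_t\overrightarrow{\DD}$. For each $i\in[m]$ introduce one bribed voter $v_i$ with original vote $\overrightarrow{\DD}\succ\overrightarrow{\UU\setminus S_i}\succ\overrightarrow{S_i}\succ w\succ c\succ x$ and shift bound $s_i=|S_i|+1=5$ (just enough to move $c$ above $w$, hence above all of $S_i$, but not into $\DD\cup(\UU\setminus S_i)$). Using Lemma 4.2 of \cite{baumeister2011computational} I would choose the $\VV_u$‑scores so that $s_{\VV_u}(x)>s_{\VV_u}(\UU)>s_{\VV_u}(w)>s_{\VV_u}(\DD)$ with precisely tuned gaps, arranged so that: (i) $w=r(\PP)$; (ii) the full shift vector $\sss$ elects $c$ (so Success holds and the instance is not trivially \no); (iii) $x$'s (shift‑invariant) total exceeds $u_k$'s $\PP$‑total by exactly $1$ for every $k$ and exceeds $w$'s $\PP$‑total by exactly $m/3$; and (iv) $c$ stays below $x$ as long as the total number of positions that $c$ is shifted over all voters is at most $5m/3$.

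Shifting $c$ left past one candidate lowers that candidate's Borda score by $1$ and raises $c$'s by $1$, and $x$ is never passed, so its score is fixed. If \XTFC has an exact cover $X$, shifting $v_i$ by the full $s_i=5$ for each $S_i\in X$ and by $0$ otherwise is $\le\sss$; in the resulting profile each $u_k$ loses exactly $1$ (it is in exactly one cover set), $w$ loses exactly $m/3$, and $c$ gains only $5m/3$, so $x$ beats everyone, whence the \SBIS instance is \no. Conversely, if the instance is \no then Success holds, so some $\sss'\le\sss$ elects the unique bad candidate $x$; then every $u_k$ is passed at least once, $w$ is passed at least $m/3$ times, and by (iv) the total number of passes is at most $4m/3+m/3$. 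Since passing $w$ in voter $i$ forces passing all four elements of $S_i$ first, the (at least $m/3$) voters that pass $w$ already supply at least $4m/3$ element‑passes; tightness then forces exactly $m/3$ such voters, no other passes, and each $u_k$ passed exactly once, i.e.\ the $m/3$ sets $\{S_i:\ v_i\text{ passes }w\}$ are pairwise disjoint and cover $\UU$ — an exact cover. Hence \XTFC is a \yes instance iff \SBIS is a \no instance.

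The main obstacle, and where essentially all the work beyond \Cref{theorem borda dol} lies, is that a shift‑bribed voter may perform a \emph{proper} partial shift, moving $c$ past only some elements of $S_i$; the counting argument above is precisely what forbids such moves from electing $x$, and making it go through requires choosing the $\VV_u$‑score gaps, the dummy padding, and the bounds $s_i$ so exactly that the lower bounds (pass every $u_k$, pass $w$ at least $m/3$ times) together with the upper bound (keep $c$ below $x$) pin the adversary's useful moves down to exactly the exact covers. Verifying the accompanying arithmetic — that the $\Theta(m^2)$ padding makes the Lemma 4.2 construction and all Borda scores integral (the $|\CC|=\beta m^2+\gamma$ condition of \Cref{theorem borda dol}), that $w=r(\PP)$, and that the full shift genuinely elects $c$ — is routine but must be carried out with care.
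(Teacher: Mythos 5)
Your overall strategy matches the paper's at the top level --- a reduction from an exact-cover problem arranged so that the cover instance is a \yes instance iff the \SBIS instance is a \no instance, with a unique bad candidate $x$ --- but your mechanism for taming partial shifts is genuinely different. The paper reduces from \XTC and builds each bribed vote as $w \succ \overrightarrow{S_i} \succ c \succ \cdots$ with gaps chosen so that any \emph{proper} partial shift raises $c$ without lowering $w$ enough, hence can never elect $x$; partial shifts are thereby neutralized outright and the remainder is an all-or-nothing \XTC analysis over an explicitly constructed $(2m+2t-2)$-voter profile. You instead reduce from \XTFC, put $c$ at the bottom of the bribed votes, and use a global counting argument: each unit shift is one ``pass'', every $u_k$ must be passed at least once ($\nicefrac{4m}{3}$ passes), $w$ at least $\nicefrac{m}{3}$ times, and the budget of $\nicefrac{5m}{3}$ passes beyond which $c$ overtakes $x$ forces tightness and hence an exact cover. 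Both mechanisms are legitimate, and yours is arguably more transparent about \emph{why} partial shifts cannot help.

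Two concrete problems in your writeup, though. First, the pivotal sentence ``passing $w$ in voter $i$ forces passing all four elements of $S_i$ first'' contradicts the vote order you wrote down: in $\overrightarrow{\DD}\succ\overrightarrow{\UU\setminus S_i}\succ\overrightarrow{S_i}\succ w\succ c\succ x$ the candidate $w$ is \emph{adjacent} to $c$, so the very first unit shift passes $w$, and it is passing elements of $S_i$ that forces a prior $w$-pass, not the reverse. The count still closes --- element-passes and $w$-passes are disjoint, so the lower bounds $\nicefrac{4m}{3}+\nicefrac{m}{3}$ already meet the upper bound $\nicefrac{5m}{3}$, and since every element-passing voter is a $w$-passing voter and tightness leaves exactly $\nicefrac{m}{3}$ of those, each must pass all four of its elements --- but you must either rewrite the step this way or reorder the vote to $\cdots\succ w\succ\overrightarrow{S_i}\succ c\succ x$ so that your sentence becomes true. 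Second, condition (iii) has the wrong sign: if $x$'s total literally \emph{exceeds} $u_k$'s by $1$ and $w$'s by $\nicefrac{m}{3}$, then $x$ already beats everyone in \PP (contradicting $w=r(\PP)$), and even the weaker version in which only the $u_k$ are already beaten lets $x$ win via $\nicefrac{m}{3}$ single-position shifts that pass only $w$, independently of whether a cover exists. Given your tie-breaking order $u_k\succ_t x$ and $w\succ_t x$, what you need is $s(u_k)=s(x)$ and $s(w)=s(x)+\nicefrac{m}{3}-1$, so that each $u_k$ must lose at least one point and $w$ at least $\nicefrac{m}{3}$. With those corrections, and granting the deferred Lemma~4.2 score engineering (including pulling the $\Theta(m^2)$ dummies, which sit atop every bribed vote, far enough down that they never win), your reduction goes through.
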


\begin{proof}
Firstly, \SBIS for Borda belongs to \coNP, provable using a method similar to \Cref{Theorem k-approval dol} by showing a certificate (possibly involving partial shifts) which demonstrates that the bribery is not successful, or that it is successful but not safe. To prove \coNP-hardness, we demonstrate a reduction from \XTC to \SBIS such that the \XTC instance is a \YES instance if and only if the \DIS instance is a \NO instance.
	Let $(\UU = \{u_1, \dots, u_{3t}\}, \SS = \{S_1, \dots, S_m\})$ be an instance of \XTC. Without loss of generality, we can assume that $m \geq t$. We construct an instance $(\CC, \PP, c, \succ_B, \QQ)$ of \SBIS for Borda as follows: 
	\begin{align*}
	\CC &= \UU \cup \DD_1 \cup \{x,c,w,d\}, \text{ where } \{x, c, w, d\} \notin \UU\\
	\DD &= \DD_1 \cup \{d\} \text{ is a set of dummy candidates, where } |\DD_1| = t-1\\
% 	\text{w is the winner in the bribed profile and c is the winner in the unbribed profile}\\
	\succ_t &= c \succ \UU \succ w \succ \DD \succ x\\
	\succ_B &= c \succ \UU \succ \DD \succ w \succ x
	\end{align*}
	Here, $w$ is the winner in the unbribed preference profile and $c$ is the winner in the bribed preference profile. We now describe the voters. \VV consists of $2m + 2t-2$ voters of the following types: There are $m$ voters (corresponding to each $S_i \in \SS$) each of type (i)-1 and (i)-2;
	
	(i)-1: $\ w \succ \overrightarrow{S_i} \succ c \succ \overrightarrow{\UU \setminus S_i} \succ x \succ \overrightarrow{\DD_1} \succ d$
	
	(i)-2: $\ x \succ \overleftarrow{\UU \setminus S_i} \succ c \succ \overleftarrow{S_i} \succ w \succ \overleftarrow{\DD_1} \succ d$
	
	\noindent There is a single voter, each of type (ii)-1, (ii)-2; 
	
	(ii)-1: $\ w \succ d \succ x \succ \overrightarrow{\UU} \succ \overrightarrow{\DD_1} \succ c$
	
	(ii)-2: $\ w \succ \overleftarrow{\UU} \succ x \succ c \succ d \succ \overleftarrow{\DD_1}$
	
	\noindent There are $t-2$ voters, each of type (iii)-1 and (iii)-2;
	
	(iii)-1: $\ \overrightarrow{\UU} \succ d \succ c \succ w \succ x \succ \overrightarrow{\DD_1}$
	
	(iii)-2: $\  x \succ d \succ c \succ w \succ \overleftarrow{\UU} \succ \overleftarrow{\DD_1}$.
	
	It is easy to observe that each non-dummy candidate gets a constant score (say $L$) from the voters of types (i)-1 and (i)-2. Let $s(a)$ denote the Borda score of a candidate $a \in \CC$. Therefore, initially we have:
\begin{align*}
    &s(d_i) \ll s(a), \forall d_i \in \DD, \forall a \in \CC \setminus \DD\\
    &s(w)-s(c) = 5t\\
    &s(x)-s(c) = 3t+1\\
    &s(u_i)-s(c) = 3t+1, \forall u_i \in \UU.
\end{align*}
	Here $\VV_b$ is equal to the set (i)-1 of $m$ voters. For each $i \in [m]$, we have a bribed voter $v_i \in \VV_b$ with original vote
	$$\succ_i^{\PP} = w \succ \overrightarrow{S_i} \succ c \succ \overrightarrow{U \setminus S_i} \succ x \succ \overrightarrow{\DD_1} \succ d$$
	and bribed vote
	$$\succ_i^{\QQ} = c \succ w \succ \overrightarrow{S_i} \succ \overrightarrow{\UU \setminus S_i} \succ x \succ \overrightarrow{\DD_1} \succ d$$
	
	 It is easy to observe that partial shifts cannot be allowed in these votes as then, $x$ will lose to either $c$ or $w$ and the bribery will always be safe. Now, it is easy to show that the given instance of \SBIS for Borda is a \NO instance if and only if the instance of X3C is a \YES instance.
	
		$\Longrightarrow$: Assume the instance of \SBIS for Borda is a \NO instance, implying that a bad candidate ($x$) is the winner in some preference profile \RR where a subset of the bribed voters vote according to the briber's suggested preference orders.
		We now show that this preference profile corresponds to a solution of the \XTC problem. Let $Y \subseteq \VV_b$, be the set of voters who vote according to $\succ_i^{\QQ}$  ($\forall v_i \in Y$), so for all $v_i \in Y, \succ_i^\RR = \succ_i^\QQ$ and for all $v_i \notin Y, \succ_i^\RR = \succ_i^\QQ$.	We show that for $x$ to win, $|Y| = t$. 
		Clearly $|Y| \leq t$, else $s(c)$ will increase by at least $4t+4$, thus $c$ will defeat $x$. Also, $|Y| \geq t$, else $s(w)$ will decrease by at most $t-1$. Thus she will defeat $x$. Hence, $|Y| = t$.
		Now, for $x$ to win, the preference orders of the voters in $Y$ must correspond to an exact cover of $\UU$, otherwise $\exists u_i \in U$, whose score will not decrease. That $u_i$ will defeat $x$ in tie-breaking. This implies that \XTC is a \YES instance. 
		
		$\Longleftarrow$: Assume the instance of \XTC is a \YES instance. To show that \SBIS for Borda is a \NO instance, it is equivalent to show that a bad candidate ($x$), wins in some partially bribed profile. Let $X \subseteq \SS$ be the set of triplets that form an exact cover of $U$. Clearly $|X| = t$.
		Now, if for every $S_i \in X$, we use the preference order $\succ_i^{\QQ}$ instead of $\succ_i^{\PP}$, $s(w)$ decreases by $t$, which means $s(x)-s(w)$ becomes $1$. Also, $s(c)$ increases by $4t$, meaning $s(x)-s(c)$ becomes 1. Finally, $s(u_i)$ decreases by $1, \forall i \in [3t]$, which means that $s(x)-s(u_i)$ becomes $1, \forall i \in [3t]$, implying $x$ defeats $u_i$, $\forall i \in [3t]$. Therefore, $x$ becomes the Borda winner. Hence, there exists a possible preference profile, where a bad candidate ($x$) wins. Hence \SBIS for Borda is a \NO instance.
\end{proof}

Next, we show that for simplified Bucklin for \DIS is \coNPC.

\begin{theorem}\label{theorem bucklin dol}
	For simplified Bucklin, \DIS is \coNPC.
\end{theorem}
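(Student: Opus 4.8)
The plan is to follow the same template as \Cref{Theorem k-approval dol} and \Cref{theorem borda dol}: establish membership in \coNP directly, then reduce \XTC to \DIS in such a way that the \XTC instance is a \yes instance if and only if the \DIS instance is a \no instance. For membership, given an instance $(\CC,\PP,c,\succ_B,\QQ)$, a \no-certificate is either the observation that the simplified Bucklin winner of \QQ is not $c$, or a profile $\RR=(R_i)_{i\in[n]}$ with $R_i\in\{\succ_i^\PP,\succ_i^\QQ\}$ for every $i\in[n]$ whose simplified Bucklin winner is strictly less preferred than $w=r(\PP)$ in $\succ_B$; since simplified Bucklin is efficient, both can be checked in polynomial time, so \DIS is in \coNP.

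For the reduction, starting from $(\UU=\{u_1,\dots,u_{3t}\},\SS=\{S_1,\dots,S_m\})$ (assuming without loss of generality $m>t$ by duplicating sets), I would build $\CC=\UU\cup\{x,w,c\}\cup\DD$ with a pool \DD of dummy candidates, set $\succ_B=c\succ\overrightarrow{\UU\setminus\{c\}}\succ\overrightarrow{\DD}\succ w\succ x$ so that $x$ is the unique bad candidate and $w$ is the unbribed winner, and fix a tie-breaking order $\succ_t$ in which $w$ beats $x$ and $x$ beats every element of \UU. The voters consist of (a) padding voters, built by a structural construction in the spirit of Lemma~4.2 of \cite{baumeister2011computational} but adapted to the Bucklin rounds, pinning down for the critical round $\ell$ the top-$\ell$ tallies $sc_\ell(\cdot)$ of all candidates at prescribed values — in particular $x$ already exceeds the majority threshold at round $\ell$ from the padding alone, but loses to $w$ at round $\ell$ there — together with (b) $m$ bribed voters, the $i$-th associated with $S_i$. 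The original vote $\succ_i^\PP$ keeps $w$ inside its top $\ell$ positions and $x$ outside, while the bribed vote $\succ_i^\QQ$ pushes the three members of $S_i$, and also $c$ (so that $c$ becomes the Bucklin winner of \QQ), into its top $\ell$ positions and drops $w$ out.

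The correctness argument then mirrors \Cref{Theorem k-approval dol}. If some $Y\subseteq\VV_b$ follows the briber (votes $\succ_i^\QQ$) and the resulting profile \RR has the bad candidate $x$ as Bucklin winner, then at round $\ell$: $w$ must have fallen below the majority threshold, forcing $|Y|\ge t$; and no $u_j$ may have reached the threshold, so — since each defecting voter lifts exactly the three elements of its set into the top $\ell$ positions — each element of \UU is covered at most once by $\{S_i:i\in Y\}$, and a collection of $\ge t$ triples covering $3t$ elements each at most once must be an exact cover of size exactly $t$; hence \XTC is a \yes instance. Conversely, given an exact cover $X\subseteq\SS$, letting precisely the voters $\{v_i:S_i\in X\}$ defect makes $w$ and every $u_j$ miss the majority at round $\ell$ while $x$ still exceeds it, and since $c$ only reaches majority at round $\ell+1$ in that profile, $x$ wins after tie-breaking, so the \DIS instance is a \no instance. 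I also need to verify the two degenerate profiles: with no defectors, $w$ is the Bucklin winner of \PP, and with all $m$ defectors, $c$ is the Bucklin winner of \QQ; both should follow from the chosen round-$\ell$ tallies and the placement of $c$.

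The main obstacle is engineering the padding voters so that every round-$\ell$ tally comes out exactly as required while \emph{simultaneously} guaranteeing that no candidate — no dummy, no $u_j$, and not $x$ — accidentally wins at some earlier round $\ell'<\ell$ in any of the relevant profiles. Unlike the scoring-rule reductions, where a single score per candidate is enough, Bucklin forces me to control the top-$\ell'$ counts for all $\ell'\le\ell$ at once; I expect to handle this by making every candidate other than the principals hopeless at all rounds up to $\ell$ (placing enough dummies high in the padding votes) and by choosing $\ell$ together with the majority margin so that the one-unit change in a candidate's top-$\ell$ count caused by each defecting voter is exactly decisive. Once the gadget is in place, the counting argument tying $|Y|=t$, the exact-cover structure, and $x$'s victory together is essentially the same bookkeeping as in \Cref{Theorem k-approval dol}.
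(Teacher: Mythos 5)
Your proposal is correct and follows essentially the same route as the paper's proof: \coNP membership via the defection-profile certificate, a reduction from \XTC in which each bribed vote either keeps $w$ in the top $\ell$ positions (original) or lifts $S_i$ into the top $\ell$ and drops $w$ (bribed), padding voters that pin the round-$\ell$ tallies so that $x$ sits just above majority while $w$ and the $u_j$'s sit just below, and the same counting argument showing $|Y|=t$ and exact coverage. The only cosmetic difference is that you introduce $c$ as a fresh candidate lifted into the top $\ell$ of every bribed vote, whereas the paper identifies $c$ with some $u_j\in\UU$; both choices make the success condition work, and the padding gadget you defer to is exactly what the paper supplies via Lemma~4.2 of \cite{baumeister2011computational}.
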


\begin{proof}
To see that \DIS for simplified Bucklin belongs to \coNP, any \no instance $(C,\PP,c,\suc_B,\QQ)$ can be verified either from the fact that the simplified Bucklin winner in \QQ is not $c$ or from the existence of a profile $\RR=(R_i)_{i\in[n]}$ such that (i) $R_i=\succ_i^{\PP}$ or $R_i=\succ_i^{\QQ}$ for every $i\in[n]$ and the simplified Bucklin winner in \RR is less preferred in $\suc_B$ than the simplified Bucklin winner in \PP. To prove \coNP-hardness, we demonstrate a reduction from \XTC to \SBIS such that the \XTC instance is a \YES instance if and only if the \DIS instance is a \NO instance.

Let $(\UU = \{u_1, \dots, u_{3t}\}, \SS = \{S_1, \dots, S_m\})$ be an instance of \XTC. Without loss of generality, we can assume that $m \geq t$. We construct an instance $(\CC, \PP, c, \succ_B, \QQ)$ of \DIS for simplified Bucklin as follows: 
\begin{align*}
\CC &= \UU \cup x \cup w \cup \DD, \text{ where }\\
\DD &= \uplus_{i\in[m]} \DD_i^1 \uplus_{i\in[m]} \DD_i^2 \text{, where }|\DD_i^1|=\ell-3,\text{ }  |\DD_i^2|=\ell-1\\
\succ_t &= w \succ \UU \succ x \succ \DD\\
\succ_B &= c \succ \UU \setminus \{c\} \succ \DD \succ w \succ x
\end{align*}
Here $c = u_j$ for some $j \in [3t]$, is the winner in the bribed preference profile, and $w$ is the winner in the unbribed preference profile. Also, \DD is a set of $2m(l-2)$ dummy candidates whose scores are so low that they can never win. We assume that $\ell$ is the \textit{Bucklin winning round} in $\PP$. First we describe the set $\VV_u$ of unbribed voters and their votes. Using Lemma 4.2 from \cite{baumeister2011computational}, we can construct the election such that if we consider only the votes in $\VV_u$, the $\ell$-approval scores of the candidates are in the order $s_{\VV_u}(x) > s_{\VV_u}(\UU) > s_{\VV_u}(w) > s_{\VV_u}(\DD)$. $\VV_u$ is then constructed as follows:
$\VV_u$ consists of $\nicefrac{|\PP|}{2} + 1$ voters whose preference orders. All these votes have $x$ in the $\ell\textsuperscript{th}$ position. The first $\nicefrac{|\PP|}{2} - m + t$ votes have $w$ in the $(\ell-1) \textsuperscript{th}$ position. The first $\nicefrac{|\PP|}{2} - 1$ votes have $U$ in the first $3t$ positions. The remaining positions are filled by dummy candidates. 
We thus ensure that $x$ receives $\nicefrac{|\PP|}{2} + 1$ votes till round $l$, where $\ell \geq |\UU|+2$.
Every $u_i \in U$ receives $\nicefrac{|\PP|}{2} - 1$ votes till round $l$, and
$w$ receives $\nicefrac{|\PP|}{2} - m + t$ votes.
We now describe the set $\VV_b$ of bribed voters, and their bribed and original votes.
	For each $i \in [m]$, we have a bribed voter $v_i \in \VV_b$ with original vote
	$$\succ_i^{\PP} = w \succ \overrightarrow{\DD_i^2} \succ \overrightarrow{\CC \setminus (\{w\} \cup \DD_i^2)}$$
	and bribed vote
	$$\succ_i^{\QQ} = \overrightarrow{S_i} \succ \overrightarrow{\DD_i^1} \succ \overrightarrow{\CC \setminus (S_i \cup \DD_i^1)}$$
Now we show that the given instance of \DIS for simplified Bucklin is a \NO instance if and only if the instance of \XTC is a \YES instance.

$\Longrightarrow$: Assume the instance of \DIS for simplified Bucklin is a \NO instance. This means that $x$ is the winner, in a particular preference profile \RR where a subset of the bribed voters vote according to the briber's suggested preference orders. We now show that this voting profile corresponds to a solution of the X3C problem. Let $Y \subseteq \VV_b$, be the set of voters who vote according to $\succ_i^{\QQ}$  ($\forall v_i \in Y$), so for all $v_i \in Y, \succ_i^\RR = \succ_i^\QQ$ and for all $v_i \notin Y, \succ_i^\RR = \succ_i^\QQ$. We show that for $x$ to win, $|Y| = t$. 
Clearly $|Y| \leq t$, else the number of votes received by some $u_i \in \UU$, will increase by at least $2$ and she will win before the $l \textsuperscript{th}$ round, thus defeating $x$. Also, $|Y| \geq t$, else the number of votes received by $w$ will increase by at least $m-t+1$, and she will win in the $(\ell-1) \textsuperscript{th}$ round, thereby defeating $x$. Hence, $|Y| = t$.
Now, for $x$ to win, the preference orders of the voters in $Y$ must correspond to an exact cover of $\UU$, which would force the simplified Bucklin winning rounds of $w$ as well as every $u_i$ to be greater than $\ell$. Therefore \XTC is a \YES instance.

$\Longleftarrow$: Assume the instance of \XTC is a \YES instance. To show that \DIS for simplified Bucklin is a \NO instance, it is equivalent to show that there exists some preference profile where a bad candidate ($x$), wins. Let $X \subseteq \SS$ be the set of triplets that form an exact cover of $U$. Clearly $|X| = t$.
Now, if for every $S_i \in X$, we use the preference order $\succ_i^{\QQ}$ instead of $\succ_i^{\PP}$, the $\ell$-approval score of every $u_i$, becomes $1$ less than the score of $x$, and the $\ell$-approval score of $w$ also becomes $1$ less that the score of $x$. Hence $x$ is the winner. Thus, there exists a possible voting profile, where a bad candidate ($x$) wins.
Hence \DIS for simplified Bucklin is a \NO instance.
\end{proof}

%%%%%%%%%%%%%%%%%%%%%%%% Added Proof %%%%%%%%%%%%%%%%%%%%%%%%%%%%%

 	Next, we show the hardness results for some tournament-based rules. For \DIS is safe, we obtain a generalized hardness result which is applicable for any Condorcet-consistent voting rule.

\begin{theorem}\label{theorem condorcet dol}
	For Copeland and maximin, \DIS is \coNPC.
\end{theorem}

\begin{proof}
    We will prove the above theorem for Copeland, and it is observed that the same construction works for the maximin rule.
    Firstly \DIS for Copeland belongs to \coNP, provable in a similar fashion as in \Cref{Theorem k-approval dol}. To prove \coNP-hardness, we demonstrate a reduction from \XTC to \DIS such that the \XTC instance is a \YES instance if and only if the \DIS instance is a \NO instance. 
	Let $(\UU=\{u_1, \dots, u_{3t}\},\SS=\{S_1, \dots, S_m\})$ be an instance of \XTC. Without loss of generality, we can assume that $m > 2t+1$. We construct an instance $(\CC, \PP, c, \succ_B, \QQ)$ of \DIS for Copeland as follows: 
	\begin{align*}
	\CC &= \UU \cup \{x, w, c\}\\
	\succ_t &= c \succ \overrightarrow{\UU} \succ w \succ x\\
	\succ_B &=  c \succ \overrightarrow{\UU} \succ w \succ x
	\end{align*}
	Here $c$ is the winner in the bribed preference profile, and $w$ is the winner in the unbribed preference profile.
	Let $\VV_b$ be the set of $m$ bribed voters and let $\VV_u$ be the rest of the voters.
	
	We construct a set $\VV_u$ of $m + 8t - 15$ voters of the following types:\\
	There are $(m-3)$ voters of type (i); 

	 (i): $\overrightarrow{\UU} \succ c \succ x \succ w$
	
	\noindent There is a single voter, each of type (ii)-1 and (ii)-2; 

    (ii)-1: $x \succ \overrightarrow{\UU} \succ c \succ w$
	
	(ii)-2: $w \succ \overrightarrow{\UU} \succ c \succ x$

	\noindent There are $(t-2)$ voters, each of type (iii)-1, (iii)-2, (iv)-1, (iv)-2, (v)-i and (v)-ii; 
	
	(iii)-1: $x \succ w \succ \overrightarrow{\UU} \succ c$
	
	(iii)-2: $c \succ w \succ \overrightarrow{\UU} \succ x$
	
	(iv)-1: $x \succ w \succ c \succ \overrightarrow{\UU}$
	
	(iv)-2: $\overleftarrow{\UU} \succ w \succ c \succ x$
	
	(v)-1: $\overrightarrow{\UU} \succ w \succ x \succ c$
	
	(v)-2: $c \succ w \succ x \succ \overleftarrow{\UU}$
	
	\noindent There are $(t-1)$ voters of type (vi)-1 and (vi)-2; 
	
	(vi)-1: $\overrightarrow{\UU} \succ x \succ c \succ w$
	
	(vi)-2: $w \succ x \succ c \succ \overleftarrow{\UU}$. 
	
	Thus, from the above preference orders, we get the following pairwise scores (based on the votes of the unbribed voters):
	\begin{align*}
	vs(x,w) &= m - 2t + 1\\
	vs(c,x) &= m - 2t - 1\\
	vs(u_i, x) &= m - 3,  \forall i \in [3t]\\
	vs(u_i, w) &= m - 2t + 1,  \forall i \in [3t]\\ 
	vs(c, w) &= m - 2t + 1\\
	vs(u_i, c) &= m - 1,  \forall i \in [3t].
	\end{align*}
	%%%%%%%%%%%%%%%%%%%%%%%%%%%%%%%%%%%%%%%%%%%%%%%%%%%%%%%%%%%%%%%%%%%%%%%%%%%%%%%%%%%%%%%%%%%%%%%%%%%%%%%%%%%%%%%%%%%%%%
	Clearly, the absolute Copeland scores for the candidates, as calculated from the unbribed portion of the votes, are as follows:
	\begin{align*}
		s(x) &= 1, \text{ as } x \text{ defeats } w.\\
		s(c) &= 2, \text{ as } c \text{ loses to } u_i \ \forall i \in [3t], \text{ and defeats } w \text{ and } x.\\
		s(u_i) &\geq 2, \forall i \in [3t], \text{ as every } u_i \text{ beats at least 2 candidates } (x \text{ and } c).\\
		s(w) &= 0, \text{ as } w \text{ loses to every other candidate}.
    \end{align*}
	%%%%%%%%%%%%%%%%%%%%%%%%%%%%%%%%%%%%%%%%%%%%%%%%%%%%%%%%%%%%%%%%%%%%%%%%%%%%%%%%%%%%%%%%%%%%%%%%%%%%%%%%%%%%%%%%%%%%%%
	For each $i \in [m]$, we have a bribed voter $v_i \in \VV_b$, with original vote
	$$\succ_i^{\PP} = w \succ x \succ c \succ \overrightarrow{\UU}$$
	and bribed vote
	$$ \succ_i^{\QQ} = c \succ \overrightarrow{S_i} \succ x \succ \overrightarrow{\UU \setminus S_i} \succ w $$
	Now, it is easy to show that an instance of \DIS for Copeland is a \NO instance if and only if the corresponding instance of \XTC is a \YES instance.
	
	$\Longrightarrow$: Assume the instance of \DIS for Copeland is a \NO instance. This means that $x$ is the winner, in a particular preference profile \RR, where a subset of the bribed voters vote according to the briber's suggested preference orders. We now show that this preference profile corresponds to a solution of the \XTC problem. Let $Y \subseteq \VV_b$, be the set of voters who vote according to $\succ_i^{\QQ}$  ($\forall v_i \in Y$), so for all $v_i \in Y, \succ_i^\RR = \succ_i^\QQ$ and for all $v_i \notin Y, \succ_i^\RR = \succ_i^\QQ$. We show that for $x$ to win, $|Y| = t$. Clearly $|Y| \leq t$, else $c$ would be the Condorcet (and Copeland) winner. This is because $vs(c,x)$ would become at least $1$, $vs(c,w)$ would become at least $3$ and $\forall i \in [3t], \ vs(c,u_i)$ would become at least $1$. Also, $|Y| \geq t$, else $w$ would be the Condorcet (and Copeland) winner. This is because $\forall i \in [3t], vs(w,u_i)$ would become at least $1$, $vs(w,x)$ would become at least $1$ and $vs(w, c)$ would also become at least $1$. Hence, $|Y| = t$. Now, for $x$ to win, the preference orders of the voters in $Y$ must correspond to an exact cover of $\UU$. Else, there will be at least one $u_k$, for $k \in [3t]$, which is covered at least twice. This would then imply $vs(u_k,x) \geq 1$, $vs(u_i,c) = -1$, $\forall i \in [3t]$, $vs(x,w) = 1$, $vs(c,x) = -1$, $vs(x,w) = 1$ and  $vs(u_i,w) = 1$, $\forall i \in [3t]$. We therefore see that $s(c) \geq s(x)$, $s(c) \geq s(u_i)$, $\forall i \in [3t]$, and $s(c) > s(w)$. Also, in $\succ_t$, $c$ is at the first position. Therefore $c$ becomes the Copeland winner. On the other hand, if the votes do correspond to an exact cover, then $vs(x,u_i) = 1$, $\forall i \in [3t]$, $vs(x,c) = 1$ and $vs(x,w) = 1$. We therefore see that in this case, $x$ becomes the Copeland winner. Hence \XTC is a \YES instance.
		
	$\Longleftarrow$: Assume the \XTC instance is a \YES instance. To show that \DIS for Copeland is a \NO instance, it is equivalent to show that a bad candidate ($x$) is the winner in a some preference profile where a subset of the bribed voters vote according to briber's suggested preference orders. Let $X \subseteq \SS$ be the set of triplets that form an exact cover of $U$. Clearly $|X| = t$. Now, if for every $S_i \in X$, we use the preference order $\succ_i^{\QQ}$ instead of $\succ_i^{\PP}$, $vs(c,x)$ becomes $-1$ and $vs(x,w)$ becomes $1$, implying that $x$ defeats both $c$ and $w$. Also, $vs(u_i,x), \forall i \in [3t]$ becomes $-1$, implying $x$ defeats $u_i$, $\forall i \in [3t]$. Therefore, $x$ becomes the Condorcet (and thus also Copeland) winner. Hence, there exists a possible voting profile, where a bad candidate (i.e. $x$) wins.
		Therefore, \DIS for Copeland is a \NO instance.
\end{proof}

\begin{theorem}\label{Theorem Copeland}
	For Copeland, \SBIS is \coNPC.
\end{theorem}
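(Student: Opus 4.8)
The plan is to prove both directions.

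\smallskip
\noindent\emph{Membership.} This follows exactly as for \Cref{Theorem k-approval dol} and \Cref{theorem borda shift}: a \no instance $(\CC,\PP,c,\succ_B,\sss)$ of \SBIS for Copeland is certified either by the Copeland winner of the fully shifted profile failing to be $c$, or by a single sub-vector $\sss'\le\sss$ whose Copeland winner lies outside the good set $G$; both checks run in polynomial time since Copeland is efficient.

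\smallskip
\noindent\emph{Hardness.} For \coNP-hardness I would reduce from \XTC, adapting the Condorcet-consistent construction behind \Cref{theorem condorcet dol} --- but now each bribed vote must become the bribed profile by left-shifting $c$ alone --- and importing from \Cref{theorem borda shift} the device of arranging the ballots so that \emph{partial} shifts can never make a bad candidate win. Given an \XTC instance $(\UU,\SS)$ with $|\UU|=3t$, $|S_i|=3$ and (WLOG $m=|\SS|>t$, duplicating sets as in \Cref{Theorem k-approval dol}), I would take $\CC=\UU\cup\{c,w,x,z\}\cup\DD$ with $|\DD|=3t$ dummies, set $\succ_B=c\succ\overrightarrow{\UU}\succ z\succ\overrightarrow{\DD}\succ w\succ x$ (so $x$ is the unique bad candidate and $c\succ_B w$) and $\succ_t=x\succ w\succ\overrightarrow{\UU}\succ z\succ\overrightarrow{\DD}\succ c$. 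For each $i\in[m]$ there is a bribed voter with $\succ_i^{\PP}=\overrightarrow{S_i}\succ w\succ z\succ c\succ x\succ\overrightarrow{\UU\setminus S_i}\succ\overrightarrow{\DD}$ and shift bound $s_i$ bringing $c$ to the top of that ballot; thus when voter $v_i$ shifts $c$ it overtakes $z$ first, then $w$, then a bottom segment of $\overrightarrow{S_i}$, and overtakes all of $S_i$ only at the maximal shift; unbribed voters get shift bound $0$. The unbribed voters would be built by an explicit list of voter types in the style of \Cref{theorem condorcet dol} (with parity padding) so that in $\PP$ the pairwise margins are, up to one unit of slack, $vs(w,c)=2t-1$, $vs(z,c)=2t+1$, $vs(u_j,c)=1$ for every $j$, every other margin against $c$ positive, and otherwise a rigid tournament in which $w$ beats $x$ and all $u_j$ but loses to $z$ and every dummy, each $u_j$ beats $z$ and every dummy but loses to $x$ and $w$, $z$ beats $w$ and $x$ but loses to every $u_j$ and dummy, the $u_j$'s tie pairwise, and the dummies tie pairwise and beat $w,z,x,c$. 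A direct score computation should give $r(\PP)=w$ and that shifting every bribed voter maximally makes $c$ the Copeland winner, so \textbf{Success} holds along $\sss$.

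\smallskip
\noindent\emph{Correctness and the main obstacle.} Fix $\sss'\le\sss$ and let $p$, $q$ count the voters whose shift of $c$ overtakes $z$, respectively $w$ (so $q\le p$); write $[\cdot]$ for the $0/1$ indicator. Since $c$ never overtakes $x$, a dummy, or a $u_j$ outside that voter's own $S_i$, the Copeland score of $x$ stays $3t+1$ and each dummy's stays $4$; $w$ scores $3t+1$ if $q\ge t$ and $3t+2$ otherwise; each $u_j$ scores $3t+1$ once overtaken and $3t+2$ otherwise; and $c$ scores $[q\ge t]+[p\ge t+1]$ plus the number of overtaken $u_j$'s. Hence $x$ can be the tie-broken winner only if $q\ge t$, every $u_j$ is overtaken, and $p\le t$; with $q\le p$ this forces $q=p=t$, and $t$ voters overtake all $3t$ distinct elements only if each overtakes the whole of its $S_i$ (the maximal shift) and the $S_i$'s are pairwise disjoint, i.e.\ iff $\SS$ has an exact cover --- in which case $w$, $x$, $c$ and every $u_j$ all score exactly $3t+1$ and $\succ_t$ hands the election to $x$. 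Intuitively: too few shifts leave $w$ on top, too many let $c$ overtake, a non-exact cover leaves some $u_j$ on top, and only an exact cover drops every good candidate just below $x$. Thus $\sss$ is safe iff $\SS$ admits no exact cover, so the \SBIS instance is a \no instance iff the \XTC instance is a \yes instance; as \XTC is \NPC, this gives \coNP-hardness and hence \coNP-completeness. The step I expect to dominate the work is this calibration: realising the thresholds $2t-1,\,2t+1,\,1$ and the rigid tournament with a polynomially bounded electorate of the right parity, and verifying that in the exact-cover scenario the scores of $w$, $x$, $c$ and the $u_j$'s collide at precisely $3t+1$ while any surplus shift, any missing shift, or any uncovered element lifts some good candidate strictly above $x$ --- the precise form of the Borda-style observation that partial shifts keep the election safe.
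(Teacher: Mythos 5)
Your overall strategy is the same as the paper's: a reduction from \XTC with a single bad candidate $x$, a ``blocker'' candidate ($z$ in your construction, $e$ in the paper's) whose pairwise margin $2t+1$ against $c$ prevents $c$ from rising too far, and the counting argument that forces exactly $t$ maximal shifts forming an exact cover. Your treatment of partial shifts via the counters $p$ and $q$ is in fact more explicit than the paper's (which instead arranges the ballots so that partial shifts never help and then reasons about a subset $Y\subseteq\VV_b$), and the threshold logic $t\le q\le p\le t$ forcing $p=q=t$ and disjointness is correct \emph{given} the pairwise margins you posit.

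The gap is that the posited profile of pairwise margins is not realizable. In any profile on $n$ voters, every margin $vs(a,b)=n_{ab}-n_{ba}$ satisfies $vs(a,b)\equiv n\pmod 2$, so you cannot simultaneously have $vs(w,c)=2t-1$ (odd) and $vs(u_i,u_j)=0$ and $vs(d,d')=0$ (even) for the pairwise ties among $\UU$ and among $\DD$. This is not merely a ``one unit of slack'' issue: your score accounting has $x$ at $3t+1$ (beating the $3t$ elements of $\UU$ and $c$) while each un-overtaken $u_j$ sits at $3t+2$, and that one-point separation relies on the $u_j$'s contributing nothing to each other's Copeland scores. Once the ties are replaced by $\pm1$ margins, each $u_j$ picks up roughly $(3t-1)/2$ additional wins from the internal tournament on $\UU$, so every $u_j$ dwarfs $x$ and the reduction collapses; making $n$ even instead forces $vs(w,c)$, $vs(z,c)$, $vs(u_j,c)$ to be even and changes all of your thresholds (a margin of $0$ under Copeland$^0$ removes the point from one candidate without awarding it to the other, so the ``lose a point''/``gain a point'' events decouple). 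The paper sidesteps exactly this by placing $x$ and $w$ \emph{inside} a rotated block $A=\UU\cup\{w,x\}$ of cyclic votes, so that every member of $A$ accrues the same number of internal wins and the scores collide at a common value; your construction keeps $x$ outside any such block, so the fix is a genuine restructuring of the unbribed electorate rather than the parity padding you defer. The rest of your argument (membership in \coNP, the Success check, and the exact-cover equivalence) is sound once a realizable profile with the intended win/loss pattern is supplied.
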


\begin{proof}
	 Firstly, \SBIS for Copeland belongs to \coNP, provable using a method similar to \Cref{Theorem k-approval dol} by showing a certificate (possibly involving partial shifts) which demonstrates that the bribery is not successful, or that it is successful but not safe. Let $(\UU=\{u_1, \ldots, u_{3t}\},\SS=\{S_1, \ldots, S_m\})$ be an instance of \XTC. Without loss of generality, we can assume that $m \geq t$, where $t$ is an odd natural number. We construct an instance $(\CC, \PP, c, \succ_B, \QQ)$ of \SBIS for Copeland as follows:
	 \begin{align*}
	 \CC &= \UU \cup \DD \cup \{w,x,c,e\}\\
	 \succ_t &= c \succ w \succ e \succ \overrightarrow{\UU} \succ \overrightarrow{\DD} \succ x\\
	 \succ_B &= c \succ e \succ \overrightarrow{\UU} \succ \overrightarrow{\DD} \succ w \succ x
	 \end{align*}
	  Here \DD is the set of $\nicefrac{(3t-1)}{2}$ dummy candidates. Let $A$ be the set $\UU \cup \{w, x\}$, $B$ be the set $\DD$ (consisting of dummy candidates) and $C$ be the set $\{e, c\}$.  Note that $w$ is the winner in the unbribed preference profile and $c$ is the winner in the bribed preference profile. Let $\succ_t$ be $c \succ w \succ e \succ \overrightarrow{\UU} \succ \overrightarrow{\DD} \succ x $, and $\succ_B$ be $c \succ e \succ \overrightarrow{\UU} \succ \overrightarrow{\DD} \succ w \succ x$.

	We now construct the voter set $\VV$ consisting of $2m+8t+7$ voters having the following types of preference orders in \PP: 
	
	\noindent There are $m$ voters ($\forall i \in [m]$), each of types (i)-1 and (i)-2; 
	
	(i)-1: $x \succ w \succ \overrightarrow{S_i} \succ e \succ c \succ \overrightarrow{\UU \setminus S_i} \succ \overrightarrow{\DD}$
	
	(i)-2: $\overleftarrow{\DD} \succ \overleftarrow{U \setminus S_i} \succ c \succ e \succ \overleftarrow{S_i} \succ w \succ x$
	
	\noindent There are $\nicefrac{(t+1)}{2}$ voters each, of types (ii)-1 and (ii)-2; 
	
	(ii)-1: $\overrightarrow{\UU} \succ x \succ w \succ c \succ e \succ \overrightarrow{\DD}$
	
	(ii)-2: $\overleftarrow{\DD} \succ c \succ e \succ \overleftarrow{\UU} \succ x \succ w$
	
	\noindent There are $\nicefrac{(t+3)}{2}$ voters each, of types (iii)-1 and (iii)-2; 
	
	(iii)-1: $\overrightarrow{\UU} \succ x \succ c \succ w \succ e \succ \overrightarrow{\DD}$
	
	(iii)-2: $\overleftarrow{\DD} \succ e \succ c \succ w \succ x \succ \overleftarrow{\UU}$
	
	\noindent There are $\nicefrac{(3t+1)}{2}$ voters each, of type (iv)-1 and (iv)-2;
	
	(iv)-1: $x \succ w \succ \overrightarrow{\UU} \succ c \succ e \succ \overrightarrow{\DD}$
	
	(iv)-2: $\overleftarrow{\DD} \succ e \succ \overleftarrow{\UU} \succ c \succ w \succ x$
	
	\noindent We define $3t+2$ votes of type (v)-j as follows: Consider a fixed ordering $\overrightarrow{A_1}$ on the set $A$. To get $\overrightarrow{A_j}$,  $\forall j \in [3t+2]$, $\overrightarrow{A_1}$ is rotated clockwise by $j-1$ candidates; 
	
	(v)-j: $\overrightarrow{A_j} \succ \overrightarrow{\DD} \succ e \succ c$. 
	
Considering the above preference orders, we have: 
\begin{align*}
vs(e,c) &= 2t+1 \\
vs(w,c) &= 2t-1 \\
vs(u_i,c) &= 1, \forall u_i \in \UU
\end{align*}
Clearly, the absolute Copeland scores for the candidates, as calculated from the preference orders defined above, are as follows:
\begin{align*}
s(e) &= 1 \\
s(c) &= 0 \\
s(a_i) &= 3t + 2, \forall a_i \in A \\
s(d) &\leq \nicefrac{(3t + 1)}{2}, \forall d \in \DD
\end{align*}
Here $\VV_b$ is equal to set (i)-1 of $m$ voters. For each $i \in [m]$, we have a bribed voter $v_i \in \VV_b$ (corresponding to type (i)-1) with original vote
$$\succ_i^{\PP}=x \succ w \succ \overrightarrow{S_i} \succ e \succ c \succ \overrightarrow{\UU \setminus S_i} \succ \overrightarrow{\DD}$$
and bribed vote
$$\succ_i^{\QQ}=x \succ c \succ w \succ \overrightarrow{S_i} \succ e \succ \overrightarrow{\UU \setminus S_i} \succ \overrightarrow{\DD}$$

Note that shift is allowed only up to the $2^{\text{nd}}$ position. It is important to observe that partial shifts cannot be allowed in the bribed votes as then, $x$ will lose to either $c$ or $w$ and the bribery will always be safe. Using this fact, it can be shown that an instance of \SBIS for Copeland is a \NO instance if and only if the corresponding instance of \XTC is a \YES instance.

	$\Longrightarrow$: Assume the instance of \SBIS for Copeland is a \NO instance, implying that a bad candidate ($x$) wins in a particular preference profile \RR, where a subset of the bribed voters vote according to the briber's suggested preference orders. We now show that this preference profile corresponds to a solution of the \XTC problem. Since partial shifts do not help, any bribed voter $v_i \in \VV_b$ can only vote according to $\succ_i^\PP$ or $\succ_i^\QQ$. Let $Y \in \VV_b$ be the set of voters who vote according to $\succ_i^{\QQ}$ ($\forall v_i \in Y$), so for all $v_i \in Y, \succ_i^\RR = \succ_i^\QQ$ and for all $v_i \notin Y, \succ_i^\RR = \succ_i^\QQ$. We now show that for $x$ to win, $|Y| = t$. Clearly, $|Y| \leq t$, else $s(c) = s(x)$, and in tie-breaking, $c$ wins.
	Also, $|Y| \geq t$, else $w$ wins. Now, for $x$ to win, the preference orders of the voters in $Y$ must correspond to an exact cover of \UU, otherwise $\exists u_i \in \UU$, such that $vs(u_i,c)$ does not become negative, $s(u_i)$ does not decrease, and in tie-breaking, that particular $u_i$ wins against $x$. Hence, \XTC is a \YES instance.
	
	%\item{\textbf{\SBIS for Copeland is a \NO instance $\implies$ X3C is a \YES instance}}\\
	$\Longleftarrow$: Assume the instance of \XTC is a \YES instance. To show that \SBIS for Copeland is a \NO instance, it suffices to show that there exists a preference profile where a bad candidate ($x$), wins. Let $X \subseteq \SS$ be the set of triplets that form an exact cover of $U$. Clearly $|X| = t$.
	Now, if for every voter corresponding to the sets $S_i \in X$, we use the preference order $\succ_i^{\QQ}$ instead of $\succ_i^{\PP}$, $vs(w,c)$ decreases by $2t$ and becomes $-1$. Also, $vs(u_i,c)$ decreases by $2$ and becomes $-1, \forall i \in [3t]$. Lastly, $vs(e,c)$ decreases by $2t$ and becomes $1$. Therefore the final Copeland score of $x$ is greater than the scores of $c$, $e$ and $u_i, \forall i \in [3t]$. So, $x$ becomes the Copeland winner. Hence, there exists a possible voting profile, where a bad candidate wins. Thus, \SBIS for Copeland is a \NO instance. 
\end{proof}

\iffalse
\begin{corollary} \SSB for maximin is \NPH.
\end{corollary}
\SHB for maximin is a known \NPC problem, which can be reduced to \SSB for maximin. As the verification problem, \SBIS is \coNPC, \SSB for maximin is not expected to be \NPC.
\fi

\begin{theorem}\label{theorem maximin shift}
	For maximin, \SBIS is \coNPC.
\end{theorem}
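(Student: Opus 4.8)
The plan is to follow the template of \Cref{theorem borda shift} and \Cref{Theorem Copeland}. First I would establish membership in \coNP. Given an instance $(\CC,\PP,c,\succ_B,\sss)$ with $\sss=(s_1,\dots,s_n)$ and $w=r(\PP)$, a \no instance is certified either by a witness that the maximin winner of the profile $\QQ$ obtained by applying $\sss$ in full is not $c$, or by a shift vector $\sss'=(s'_1,\dots,s'_n)$ with $s'_i\le s_i$ for all $i\in[n]$ such that the maximin winner of the corresponding partially shifted profile is some candidate $b$ with $w\succ_B b$. Since maximin is efficient, both kinds of certificate are polynomial-time checkable, so \SBIS for maximin lies in \coNP.

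For \coNP-hardness I would reduce from \XTC, designing the instance so that the \XTC instance is a \YES instance if and only if the \SBIS instance is a \NO instance. Given $(\UU=\{u_1,\dots,u_{3t}\},\SS=\{S_1,\dots,S_m\})$ with $m\ge t$, take $\CC=\UU\cup\DD\cup\{c,w,x\}$, where $\DD$ is a polynomially bounded set of dummy candidates included only to realise the target pairwise margins with the correct parities, $c$ is the briber's preferred candidate and the intended winner after full bribery, $w=r(\PP)$, and $x$ is the unique bad candidate. Set $\succ_B=c\succ\overrightarrow{\UU}\succ\overrightarrow{\DD}\succ w\succ x$, so that the only candidate the briber prefers less than $w$ is $x$; the tie-breaking order $\succ_t$ is chosen for convenience (with $c$ near the top and $x$ at the bottom). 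There are $m$ bribed voters $v_1,\dots,v_m$, where $v_i$ has original vote $\succ_i^\PP$ in which $c$ sits just below the block $\overrightarrow{S_i}$ and below $w$, bribed vote $\succ_i^\QQ$ obtained by shifting $c$ upward past $S_i$, and allowed shift length $s_i$ equal to exactly that number of positions. The remaining (unbribed) voters are produced by a standard McGarvey-style construction, chosen so that, counting only their votes: $w$ is the maximin winner before any bribe is accepted; accepting the bribe of $v_i$ lowers $vs(u_j,x)$ by a controlled amount for each $u_j\in S_i$ while moving $vs(c,x)$, $vs(c,w)$ and $vs(x,w)$ in prescribed directions; and the margins are tuned so that after exactly $t$ accepted bribes whose sets form an exact cover $x$ beats $c$, $w$ and every $u_j$ pairwise (hence $x$ is the maximin winner), whereas after all $m$ bribes are accepted $c$ is the maximin winner (so the Success condition holds).

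The correctness argument then runs in three steps, as in the Copeland case. \textbf{(i) Partial shifts are useless:} if some bribed voter shifts $c$ by strictly fewer positions than $s_i$, then either $c$ or $w$ still defeats $x$ pairwise by the placement of $c$ and $w$ in $\succ_i^\PP$, so no bad candidate can win and safety holds trivially; hence for analysing safety we may assume each $v_i$ either fully accepts $\succ_i^\QQ$ or keeps $\succ_i^\PP$. \textbf{(ii) Cardinality:} letting $Y\subseteq\VV_b$ be the set of full acceptors in a profile in which $x$ wins, a count of pairwise margins forces $|Y|=t$, since with $|Y|$ too small $w$ remains the maximin winner and with $|Y|$ too large $c$ becomes the maximin winner. \textbf{(iii) Exact cover:} for $x$ to be the maximin winner we need $vs(x,u_j)>0$ for every $u_j$, so every $u_j$ must be covered at least once by $\{S_i:v_i\in Y\}$; since $|Y|=t$ and $\UU$ has $3t$ elements, these sets must form an exact cover, and conversely an exact cover yields such a $Y$ and hence a \no instance. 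Combining the three steps gives the equivalence.

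The main obstacle I anticipate is the design of the unbribed-voter gadget. Because a maximin score is a \emph{minimum} over pairwise margins, it is not enough to control a handful of critical comparisons: for each of $c$, $w$, $x$ and the $u_j$'s I must identify the bottleneck comparison and ensure it moves monotonically, and by the right amount, as bribes are accepted, and I must guarantee that every dummy candidate in $\DD$ keeps maximin score strictly below all of $c$, $w$, $x$ and the $u_j$'s for \emph{every} admissible sub-profile, so that dummies can never win. Realising all of the required margins with the correct parities, and with each $s_i$ small enough that step (i) holds, using only polynomially many unbribed voters, is the delicate bookkeeping; the rest of the proof follows the Borda/Copeland template essentially verbatim.
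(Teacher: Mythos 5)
Your high-level architecture (membership in \coNP via a partial-shift certificate; reduction from \XTC with a single bad candidate $x$; the three-step argument forcing $|Y|=t$ and an exact cover) matches the paper's, but there is a genuine gap: the entire substance of the proof is the unbribed-voter gadget, and the mechanism you sketch for it cannot work under shift bribery. You write that accepting the bribe of $v_i$ should lower $vs(u_j,x)$ for each $u_j\in S_i$ and move $vs(x,w)$ in a prescribed direction. In \SHB only the candidate $c$ moves, so every pairwise margin not involving $c$ --- in particular $vs(u_j,x)$ and $vs(x,w)$ --- is invariant under any shift action. What you describe is the mechanism of the \DIS reduction for Condorcet-consistent rules (\Cref{theorem condorcet dol}), where the whole vote may be rewritten, not a shift-bribery mechanism. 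Consequently the ``delicate bookkeeping'' you defer is not merely delicate; as described it is unrealisable.

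The paper's construction resolves this differently: it adds a single auxiliary candidate $a$ (rather than a set of dummies), places $x$ on top of every bribed vote so that $x$'s maximin score is pinned at $vs(x,c)=-3t$ throughout, and arranges that every other candidate's bottleneck comparison is against $c$ (for $w$ and the $u_j$'s) or against $a$ (for $c$ itself, via $vs(a,c)=5t+2$, which drops by $2$ per accepted bribe). Accepted bribes then push those bottlenecks below $-3t$, and the counting argument you outline goes through. Relatedly, your step (i) justification (``$c$ or $w$ still defeats $x$ pairwise after a partial shift'') does not transfer from the Borda/Copeland gadgets: with $x$ fixed at the top of the bribed votes, nobody's margin against $x$ ever changes; what actually controls partial shifts is that lowering $vs(w,c)$ requires a full shift while the number of votes in which $c$ may pass $a$ is capped at $t$. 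To complete the proof you would need to redesign the gadget along these lines and verify the margins explicitly.
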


\begin{proof}
Firstly, \SBIS for maximin belongs to \coNP, provable using a method similar to \Cref{Theorem k-approval dol} by showing a certificate (possibly involving partial shifts) which demonstrates that the bribery is not successful, or that it is successful but not safe. To prove \coNP-hardness, we demonstrate a reduction from \XTC to \SBIS such that the \XTC instance is a \YES instance if and only if the \DIS instance is a \NO instance.

Let $(\UU= \{u_1, \dots, u_{3t}\},\SS= \{S_1, \dots, S_m\})$ be any instance of \XTC. Without loss of generality, we can assume that $m \geq t$. We construct an instance $(\CC, \PP, c, \succ_B, \QQ)$ the \SBIS problem for maximin as follows:
\begin{align*}
    \CC &= \overrightarrow{\UU} \cup \{w,x,c,a\}\\
    \succ_t &= c \succ w \succ a \succ \overrightarrow{\UU} \succ x\\
    \succ_B &= c \succ a \succ \overrightarrow{\UU} \succ w \succ x
\end{align*}
Here $w$ is the winner in the unbribed preference profile, and $c$ is the winner in the bribed preference profile. Let $A$ be the set $\UU \cup \{w, x\}$.

We now construct a set $\VV$ of $2m+13t+2$ voters having the following types of preference orders in \PP. There are $m$ voters ($\forall i \in [m]$), each of types (i)-1 and (i)-2;

(i)-1: $x \succ w \succ \overrightarrow{S_i} \succ a \succ c \succ \overrightarrow{\UU \setminus S_i}$

(i)-2: $\overleftarrow{U \setminus S_i} \succ c \succ a \succ \overleftarrow{S_i} \succ w \succ x$
        
\noindent There are $t-1$ voters each, of types (ii)-1 and (ii)-2:

(ii)-1: $\overrightarrow{\UU} \succ w \succ c \succ x \succ a$

(ii)-2: $a \succ x \succ c \succ w \succ \overleftarrow{\UU}$
        
\noindent There are $t$ voters each, of types (iii)-1 and (iii)-2:

(iii)-1: $a \succ c \succ \overrightarrow{\UU} \succ x \succ w$

(iii)-2: $w \succ x \succ \overleftarrow{\UU} \succ a \succ c$
        
\noindent There are $3t+1$ voters each, of types (iv)-1 and (iv)-2:

(iv)-1: $c \succ \overrightarrow{\UU} \succ w \succ x \succ a$

(iv)-2: $a \succ c \succ x \succ w \succ \overleftarrow{\UU}$
        
\noindent Let us define $3t+2$ votes of type (v) as follows. Consider a fixed ordering $\overrightarrow{A_1}$ on the set $A$. To get $\overrightarrow{A_i}$,  $\forall i \in [3t+2]$, $\overrightarrow{A_1}$ is rotated clockwise by $i-1$ candidates:

(v)-i: $\overrightarrow{A_i} \succ a \succ c$

Considering the above preference orders, we have:
\begin{align*}
vs(a,c) &= 5t + 2\\
vs(a,a_i) &= -3t-2, \forall a_i \in A\\
vs(x,c) &= -3t\\
vs(u_i,c) &= -3t, \forall u_i \in \UU\\
vs(w,c) &= -t-2\\
vs(a_i,a_j) &\geq -3t, \forall a_i, a_j \in A
\end{align*}
% $s(a_i) = -3t, \forall a_i \in A$\\
% $s(c) = -3t-2$\\
% $s(a) = -3t-2$\\
Here, $\VV_b$ is equal to the set (i)-1 of $m$ voters. For each $i \in [m]$, we have a bribed voter $v_i \in \VV_b$ with original vote
$$ \succ_i^\PP = x \succ w \succ S_i \succ a \succ c \succ \overrightarrow{\UU \setminus S_i}$$
and bribed vote
$$ \succ_i^\QQ = x \succ c \succ w \succ S_i \succ a \succ \overrightarrow{\UU \setminus S_i}$$

Now we show that the given instance of \SBIS for maximin is a \NO instance if and only if the instance of \XTC is a \YES instance.

    $\Longrightarrow:$ Assume the instance of \SBIS for maximin is a \NO instance, implying that a bad candidate ($x$) is the winner in some preference profile \RR where a subset of the bribed voters vote according to the briber's suggested preference orders. We now show that this preference profile corresponds to a solution of the \XTC problem. Let $Y \in \VV_b$ be the set of voters who vote according to $\succ_i^\QQ$ ($\forall v_i \in Y$), so for all $v_i \in Y, \succ_i^\RR = \succ_i^\QQ$ and for all $v_i \notin Y, \succ_i^\RR = \succ_i^\QQ$. We show that for $x$ to win, $|Y| = t$. 
    Clearly, $|Y| \leq t$, else the maximin scores of $c$ and $x$ become equal, and in tie-breaking, $c$ wins.
    Also, $|Y| \geq t$, else $w$ wins. Now, for $x$ to win, the preference orders of the voters in $Y$ must correspond to an exact cover of \UU, otherwise $\exists u_i \in \UU$, such that her maximin score remains equal to $x$ and in tie-breaking, $u_i$ wins against $x$. Hence, \XTC is a \YES instance.
    
    $\Longleftarrow:$ Assume the instance of \XTC is a \YES instance. Now, to show that \SBIS for maximin is a \NO instance, it suffices to show that a bad candidate ($x$), wins. Let $X \subseteq \SS$ be the set of triplets that form an exact cover of $\UU$. Clearly $|X| = t$. Now, if for every $S_i \in X$, we use the preference order $\succ_i^\QQ$ instead of $\succ_i^\PP$, $vs(w,c)$ becomes $-3t-2$, $vs(u_i,c)$ becomes $-3t-2, \forall i \in [3t]$, and $vs(a,c)$ becomes $3t+2$. Therefore the final maximin score of $x(=-3t)$  is greater than those of $c$, $a$ and $u_i, \forall i \in [3t]$. Therefore, $x$ becomes the maximin winner. Hence, there exists a possible voting profile, where a bad candidate (i.e. $x$) wins. Hence \SBIS for maximin is a NO instance. 
%\end{itemize}
\end{proof}

%\cite{brandt2016handbook}

Next we show the parameterized hardness results for \SBIS and \SSB.
\section{Parameterized Complexity Results}\label{sec:param}
We observe that for each of Copeland, Borda and maximin, \SBIS is fixed parameter tractable when parameterized by the number of shifts.

%\subsection{Parameter: Number of shifts}
\begin{theorem}\label{param is safe shifts}
For all anonymous and efficient voting rules, \SBIS parameterized by the number of shifts is fixed parameter tractable.
\end{theorem}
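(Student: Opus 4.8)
The plan is to decide \SBIS by brute force over all ``partial'' shift vectors dominated by $\sss$, exploiting the fact that when the parameter $k:=\sum_{i\in[n]}s_i$ (the total number of unit shifts) is small, there cannot be too many such partial vectors.

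First I would check the \textbf{[Success]} condition directly: build the profile $\QQ$ obtained from $\PP$ by shifting $c$ to the left by $s_i$ positions in the $i$-th vote, and, since $r$ is efficient, test in polynomial time whether $r(\QQ)=c$; if not, output \no. Then compute $w=r(\PP)$ in polynomial time and form the good set $G=\{a\in\CC: a\succ_B w\}\cup\{w\}$ (we may assume $c\succ_B w$, since otherwise the instance is degenerate).

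The key combinatorial observation is that the set $I=\{i\in[n]: s_i\geq 1\}$ of actually-shifted voters satisfies $|I|\leq k$, because each $i\in I$ contributes at least $1$ to $\sum_i s_i$. Any shift vector $\sss'=(s'_1,\dots,s'_n)$ with $s'_i\leq s_i$ for all $i$ necessarily has $s'_i=0$ for $i\notin I$, so it is determined by its restriction to $I$; the number of such restrictions is $\prod_{i\in I}(s_i+1)\leq\prod_{i\in I}2^{s_i}=2^{\sum_{i\in I}s_i}\leq 2^{k}$, using $s_i+1\leq 2^{s_i}$ for $s_i\geq 1$. I would therefore enumerate all at most $2^{k}$ vectors $\sss'\leq\sss$, and for each construct the partially-shifted profile $\QQ'$ and check, again using efficiency of $r$, whether $r(\QQ')\in G$. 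By definition, $\sss$ is safe precisely when the \textbf{[Success]} test succeeded and $r(\QQ')\in G$ for every such $\sss'$, so the algorithm outputs \yes iff all these tests pass. Its running time is $2^{k}\cdot\text{poly}(m,n)$, which is \FPT in $k$.

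I do not expect a genuine obstacle here: the whole argument rests on the elementary bound $\prod_{i\in I}(s_i+1)\leq 2^k$ and on the observation that only the $\leq k$ voters in $I$ need to have their (partial) shift varied, after which correctness is immediate from the definition of safety together with the efficiency of $r$; in fact anonymity of $r$ plays no role in this particular argument, only efficiency does.
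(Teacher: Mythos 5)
Your proposal is correct and follows essentially the same route as the paper's proof: enumerate every partial shift vector dominated by $\sss$, use efficiency of the rule to determine the winner of each resulting profile, and accept iff the success test passes and no dominated vector elects a candidate worse than $w$ under $\succ_B$. Your counting is in fact sharper --- the paper bounds the number of shift actions by $\OO(t^t)$ via the observation that at most $t$ voters are shifted, whereas your bound $\prod_{i\in I}(s_i+1)\leq 2^{k}$ gives a single-exponential running time $2^{k}\cdot\mathrm{poly}(m,n)$; your remark that anonymity is not needed is also accurate.
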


\begin{proof}
Let $(\CC,\PP,c,\succ_B,\QQ)$ be an instance of \SBIS for any anonymous and efficient voting rule. We know that the number of bribed voters ($n_b$) is at most the total number of shifts ($t$). Also the maximum possible shift for any voter is at most $t-n_b+1$. By enumerating over all possible shift actions (including partial), we need to check $\OO(t^t)$ possible shift actions and for each shift action winner determination is $n^{\OO(1)}$. So, this problem is fixed parameter tractable with complexity $\OO(t^t)$poly$(m, n)$.  
\end{proof}

We next see that for Copeland, Borda and maximin, \SSB is in \XP when parameterized by the number of shifts. But for Copeland, we have an added result in \Cref{Coro 2}, that it is \WOH with number of shifts as the parameter. Do note that the methods in \cite{xia2014fixed} cannot be used to get FPT results for \SSB because it is a harder version of the generalized bribery problem.
\begin{theorem}\label{param safe shifts}
For all anonymous and efficient voting rules, \SSB parameterized by the number of shifts is in \XP.
\end{theorem}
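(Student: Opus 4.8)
The plan is to give an algorithm running in time $n^{\OO(t)}\text{poly}(m,n)$, where $t$ denotes the parameter, i.e.\ the bound on the total number of unit shifts $\sum_{i\in[n]}s_i$ used by a solution; this places \SSB in \XP for this parameter. The starting observation, exactly as in \Cref{param is safe shifts}, is that any shift vector $\sss=(s_1,\dots,s_n)$ with $\sum_i s_i\le t$ has support of size at most $t$ (each bribed voter absorbs at least one shift) and every coordinate at most $t$. Hence the number of such shift vectors equals $\binom{n+t}{t}=n^{\OO(t)}$, which for fixed $t$ is polynomial, so I can afford to enumerate all of them.

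Concretely, I would first compute $w=r(\PP)$ and the good set $G=\{a\in\CC:a\succ_B w\}\cup\{w\}$ (we may assume $c\succ_B w$, otherwise the instance is trivial). Then, for each candidate shift vector $\sss$ with $\sum_i s_i\le t$, I would run three tests: \textbf{(budget)} check $\sum_i\pi_i(s_i)\le b$, which is immediate; \textbf{(success)} form the shifted profile $\QQ$ and check $r(\QQ)=c$, which is polynomial time because $r$ is efficient; and \textbf{(safety)} invoke the \SBIS algorithm of \Cref{param is safe shifts} on the instance $(\CC,\PP,c,\succ_B,\sss)$, which runs in $\OO(t^{t})\text{poly}(m,n)$ time --- equivalently, enumerate the $\prod_i(s_i+1)\le 2^{t}$ partial shift vectors $\sss'\le\sss$, form each shifted profile $\QQ'$, and check $r(\QQ')\in G$. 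I would answer \yes the moment some $\sss$ passes all three tests, and \no otherwise. Correctness is immediate: a \yes-instance admits a budget-feasible, successful, and safe shift vector whose support has size at most $t$, so it is one of the vectors enumerated; conversely any $\sss$ we accept certifies a \yes-instance by construction.

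The total running time is $n^{\OO(t)}\cdot\OO(t^{t})\cdot\text{poly}(m,n)=n^{g(t)}\text{poly}(m,n)$ for a suitable function $g$, the desired \XP bound; for Borda and maximin this supplies the \XP entries of \Cref{tab:ResParamSafe}. I do not anticipate any genuine obstacle: anonymity is needed only to guarantee that $r$ is efficiently computable, so in fact the statement holds for every efficient rule. The one point worth highlighting is that the outer enumeration --- essentially, deciding \emph{which} (at most) $t$ voters to shift --- costs $n^{\OO(t)}$, and this factor cannot be eliminated by this approach, consistently with \Cref{Coro 2}, which shows \SSB for Copeland is already \WOH when parameterized by the number of shifts (so an \FPT algorithm is unlikely). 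The reason \Cref{param is safe shifts} obtains \FPT rather than merely \XP is precisely that there the shift vector is part of the input, so no such choice of voters is required.
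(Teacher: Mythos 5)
Your proposal is correct and follows essentially the same route as the paper: enumerate the $\binom{n+t}{t}=n^{\OO(t)}$ shift vectors of total weight at most $t$, and for each one check budget feasibility, success, and safety via the \FPT subroutine of \Cref{param is safe shifts}. Your bound $\prod_i(s_i+1)\le 2^t$ on the inner safety check is in fact slightly sharper than the paper's $\OO(t^t)$, but the argument and the resulting \XP bound are the same.
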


\begin{proof}
Let $(\CC, \PP, c, \succ_B, \Pi, b)$ be an instance of \SSB for any anonymous and efficient voting rule. The maximum possible shift for any of the $n$ voters is at most $t$ and at least $0$. Since the total number of possible bribed profiles are $\binom{n+t-1}{t}$ which is $\OO((n+t)^t)$, we can check for each such bribed profile whether it is successful, safe and lies within the budget, all of which are fixed-parameter tractable problems (with respect to the number of shifts).
\end{proof}

%\subsection{Parameter: Number of candidates}

\iffalse{
\begin{theorem}
For all anonymous voting rules, where the winner can be determined in XP time with respect to the number of candidates, both \SBIS and \SSB are in XP.
\end{theorem}

\begin{proof}
Follows directly from \Cref{theorem const is safe} and \Cref{theorem const safe}.
\end{proof}
}\fi

%\subsection{Parameter: Number of bribed voters}
To show W[k]-hardness, it is enough to give a parameterized reduction from a known hard problem. Our parameterized hardness proofs for \SBIS, considering the number of bribed voters as a parameter rely on reduction from the \WOH problem \MCI.
    
    \begin{definition}[\MCI]Given a graph $\GG = (V,E)$ where each vertex has one of $h$ colours; compute whether
    there are $h$ vertices of pairwise-distinct colours such that no two of them are connected by an edge.
	\end{definition}
	
	The above problem can be proved to be \WOH, by reducing it from a variant of the \MCC problem \cite{MATHIESON2012179}.
	
In the following theorem, we prove that \SBIS for Borda is \coWOH when parameterized by the number of bribed voters. The basic structure of this proof is inspired by \cite{bredereck2016complexity}.

\begin{theorem}\label{borda param voters}
For Borda, \SBIS is \coWOH, when parameterized by the number of bribed voters.
\end{theorem}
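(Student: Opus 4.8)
The plan is to prove only hardness (we establish \coWOH exactly as claimed, not a completeness result): I would give an \FPT-reduction from \MCI --- which is \WOH --- to the \emph{complement} of \SBIS for Borda, sending the parameter $h$ (the number of colours, equivalently the solution size) to the number of bribed voters. Given an \MCI instance $(\GG=(V,E),h)$ with colour classes $V_1,\dots,V_h$, padded with isolated vertices so that $|V_i|=N$ for every $i$ and written $V_i=\{v_{i,1},\dots,v_{i,N}\}$, I would build a Borda election $(\CC,\PP)$, a favourite candidate $c$, a briber preference $\succ_B$, and a shift vector $\sss$ whose set of nonzero components is a set $\{u_1,\dots,u_h\}$ of exactly $h$ voters, so that $\GG$ has a multicoloured independent set of size $h$ if and only if $\sss$ is \emph{not} safe for $\PP$ with respect to $\succ_B$ for Borda --- i.e.\ the constructed \SBIS instance is a \NO instance. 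Since in \SBIS the shift vector is part of the input and the number of bribed voters here is $h$, this is an \FPT-reduction with that parameter, and it yields \SBIS for Borda \coWOH parameterized by the number of bribed voters.

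I would take the candidate set to be $\CC=\{c,w,x\}\cup\{p_v:v\in V\}\cup\DD$, where $w$ will be the Borda winner of $\PP$, $x$ is a single designated ``bad'' candidate, the $p_v$ are ``vertex candidates'', and $\DD$ is a pool of dummies used only to calibrate scores. I would let $\succ_B$ rank $c$ first, then all of $\CC\setminus\{c,x\}$ in a fixed order, then $x$ last, so that $x$ is the unique candidate worse than $w$ in $\succ_B$; the tie-breaking order $\succ_t$ will put $x$ last, so that the only way some bad candidate wins a sub-election is for $x$ to attain strictly the highest Borda score. Using the standard scoring-rule calibration lemma (Lemma 4.2 of \cite{baumeister2011computational}), I would realise from ``unbribed'' votes a base profile in which $w$ wins and the gaps $s(x)-s(w)$, $s(x)-s(c)$ and $s(x)-s(p_v)$ are prescribed small positive quantities while every dummy is far below. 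The $h$ bribed voters carry the colour-selection: in $u_i$'s ballot the colour-$i$ vertex candidates sit just above $c$, interleaved with dedicated separator dummies, so that shifting $c$ leftwards in $u_i$ by a ``legal'' amount corresponds to picking a single vertex $v_{i,j(i)}\in V_i$ --- the candidates $c$ overtakes each lose one Borda point while $c$ gains that many --- and $s_{u_i}$ is the maximal legal shift. Finally I would add an edge gadget (extra dummies, and/or extra placements of candidates in the bribed ballots) calibrated so that, for each edge $\{v_{i,a},v_{i',b}\}\in E$, selecting \emph{both} endpoints pushes some good candidate's score strictly above any score $x$ can reach.

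With the construction in place, I would establish two facts. First, under the full shift $\sss$, candidate $c$ overtakes everyone, so the Success condition of \SBIS holds regardless of $\GG$; hence the instance is a \NO instance exactly when Safety fails. Second, for a sub-shift $\sss'\le\sss$, the winner $r(\QQ')$ is a bad candidate (necessarily $x$) if and only if each component of $\sss'$ is a legal single-vertex selection and the $h$ selected vertices are pairwise non-adjacent: an illegal shift amount in some $u_i$ leaves a vertex candidate, $w$, or $c$ with score at least $s(x)$ (this is exactly what the separator dummies and the chosen gaps are for), and any selected edge lets the edge gadget raise a good candidate above $x$; conversely a multicoloured independent set induces precisely such a sub-shift and one checks $s(x)$ then strictly exceeds every other score. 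Combining the two facts gives the desired equivalence.

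The main obstacle is this joint calibration. Because Borda is positional, \emph{every} shift amount up to $s_{u_i}$ is an admissible partial action, so the construction must simultaneously make every non-selection amount in $u_i$ harmless (some always-safe candidate overtakes $x$) and encode the pairwise independence constraint using only Borda \emph{totals}, which do not directly see pairwise comparisons --- hence the edge gadget, whose contribution to the decisive candidates must depend on the number of selected endpoints of each edge. Arranging the selection gadget, the edge gadget, and the dummy padding so that they coexist without interference, and checking the winner over all $\prod_{i=1}^{h}(s_{u_i}+1)$ sub-shift patterns, is the delicate part; the overall architecture follows \cite{bredereck2016complexity}, adapted to the ``some sub-shift makes a bad candidate win'' semantics of \SBIS.
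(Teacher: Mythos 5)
Your overall architecture is the right one and matches the paper's: an fpt-reduction from \MCI to the complement of \SBIS for Borda, with the parameter mapped to the number of bribed voters, a single bad candidate $x$ placed last in the tie-breaking order, a calibrated base profile via Lemma 4.2 of \cite{baumeister2011computational}, and vertex selection encoded by how far $c$ is shifted in each colour's bribed ballot. However, there is a genuine gap at exactly the point you flag as ``the delicate part'': the edge gadget is never constructed, and with the candidate set you propose it cannot be. In Borda shift bribery a sub-shift only ever \emph{decreases} the scores of candidates that $c$ passes within a single ballot, and a candidate's score change depends on the shift amount of voter $u_i$ only if that candidate sits in the shift-affected region of $u_i$'s ballot. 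Since $x$'s score is fixed and $x$ loses all ties, unsafety requires every competitor to \emph{lose} a point; so the independence constraint must be witnessed by some candidate who fails to be passed precisely when both endpoints of an edge are selected --- i.e.\ a candidate appearing in the shift-affected regions of two \emph{different} colours' ballots. Your candidate set $\{c,w,x\}\cup\{p_v\}\cup\DD$ contains no such candidate, and ``extra dummies and/or extra placements'' is exactly the missing idea, not a routine calibration. The paper resolves this by making every edge of $\GG$ a candidate and embedding, for each vertex $\vvv$, a block $\vvv\succ\overrightarrow{E(\vvv)}\succ\overrightarrow{F(\vvv)}$ (padded to uniform length $1+\Delta$) into the bribed ballots, so that an edge candidate drops a point iff at least one endpoint's block is fully passed; two adjacent selected vertices leave that edge candidate at full score, tying with $x$ and beating it.

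A second, related gap is the use of a single bribed voter per colour. If $c$ must pass all $q$ vertex candidates of colour $i$ (as it must, since each retains a score at least $x$'s otherwise), then in a single ballot containing all of colour $i$'s blocks $c$ must sweep past everything, leaving no degree of freedom to ``skip'' the edges of the selected vertex. The paper needs \emph{two} bribed voters per colour, one listing the blocks in forward order and one in reverse, so that a total of $\nicefrac{B}{h}$ shifts split between the pair passes all $q$ vertices while omitting exactly the edge set $E(\vvv^{(i)}_{s_i})$ of the one selected vertex. Without both the edge candidates and the paired forward/backward ballots, the equivalence ``some sub-shift makes $x$ win iff $\GG$ has a multicoloured independent set'' cannot be established, so the proposal as written does not go through.
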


\begin{proof}
Let $(\CC,\PP,c,\succ_B,\QQ)$ be an instance of \SBIS for Borda. We give a parameterized reduction from the \WOH \MCI problem. Given a graph $\GG = (V (\GG), E(\GG))$ where each vertex has one of $h$ colours. Let $(\GG, h)$ be our input instance. Without loss of generality, we assume that the number of vertices of each colour is the same and that there are no edges
between vertices of the same colour. We write $V(\GG)$ to denote the set of $\GG$’s vertices, and $E(\GG)$ to
denote the set of $\GG$’s edges. Further, for every colour $i \in [h]$, we write $V^{(i)} = \{\vvv^{(i)}_1, \dots , \vvv^{(i)}_q\}$ to
denote the set of vertices of colour $i$. For each vertex $\vvv$, we write $E(\vvv)$ to denote the set of edges
incident to $\vvv$. For each vertex $\vvv$, we write $\delta(\vvv)$ to denote its degree, i.e., $\delta(\vvv) = |E(\vvv)|$ and we let $\Delta = max_{u \in V(\GG)} 
\delta(u)$ be the highest degree of a vertex $\GG$.
We form an instance of \SBIS for Borda as follows. We let the candidate set be $\CC = \{c,x\} \cup V(\GG) \cup E(\GG) \cup F(\GG) \cup D' \cup D''$, where $F(\GG)$, $D'$, and $D''$ are sets of special dummy candidates. We let $D'$ and $D''$ have a cardinality of $B+1$ each, where $B = h(q + (q - 1)\Delta)$. For each vertex $\vvv$, we let $F(\vvv)$ be a set of $\Delta - \delta(\vvv)$ dummy candidates, and we let $F(\GG) = \cup_{\vvv \in V(\GG)} F(\vvv)$. We set $F(-i) = \cup_{\vvv \in V(i'),i' \neq i} F(\vvv)$. Let $w \in V(\GG) \cup E(\GG)$ be the winner in $\PP$. For each vertex $\vvv$, we define the partial preference order $\overrightarrow{S(\vvv)}$:
$$\overrightarrow{S(\vvv)}: \vvv \succ \overrightarrow{E(\vvv)} \succ \overrightarrow{F(\vvv)}$$
For each colour $i$, we define $\overrightarrow{R(i)}$ to be a partial preference order that ranks first all members of $D'$, then all vertex candidates of colours other than $i$, then all edge candidates corresponding to edges that are not incident to a vertex of colour $i$, then all dummy vertices from $F(-i)$, and finally all candidates from $D''$.
Let the briber's preference order, $\succ_B$, be: $c \succ \overrightarrow{\CC \setminus \{c,x,w\}} \succ w \succ x$, and
let the tie-breaking rule, $\succ_t$, be $c \succ w \succ \overrightarrow{\CC \setminus \{c,w,x\}} \succ x$.
We therefore let $x$ be the only bad candidate.

Let $\VV_b$ be the set of bribed voters. We define two sets of $h$ voters each, $\VV_{b_1}$ and $\VV_{b_2}$ such that $\VV_b = \VV_{b_1} \uplus \VV_{b_2}$. For each $i \in [h]$, we have a voter $v_{i_1} \in \VV_{b_1}$, whose preferences in $\PP$ and $\QQ$ are:
$$\succ_{i_1}^{\PP} = \overrightarrow{S(\vvv_1^{(i)})} \succ \overrightarrow{S(\vvv_2^{(i)})} \succ \dots \succ \overrightarrow{S(\vvv_q^{(i)})} \succ c \succ x \succ \overrightarrow{R(i)}$$
$$\succ_{i_1}^{\QQ} = c \succ \overrightarrow{S(\vvv_1^{(i)})} \succ \overrightarrow{S(\vvv_2^{(i)})} \succ  \dots \succ \overrightarrow{S(\vvv_q^{(i)})} \succ x \succ \overrightarrow{R(i)}$$
Similarly for each $i \in [h]$ we have a voter $v_{i_2} \in \VV_{b_2}$, whose preferences in $\PP$ and $\QQ$ are:
$$\succ_{i_2}^{\PP} = \overleftarrow{S(\vvv_q^{(i)})} \succ \overleftarrow{S(\vvv_{q-1}^{(i)})} \succ  \dots \succ \overleftarrow{S(\vvv_1^{(i)})} \succ c \succ x \succ \overrightarrow{R(i)}$$
$$\succ_{i_2}^{\QQ} = c \succ \overleftarrow{S(\vvv_q^{(i)})} \succ \overleftarrow{S(\vvv_{q-1}^{(i)})} \succ  \dots \succ \overleftarrow{S(\vvv_1^{(i)})} \succ x \succ \overrightarrow{R(i)}$$

Let $\VV_u$ be the voters who were not bribed. They are of the following types. There are $h$ voters ($\forall i \in [h]$), each of types (i)-1 and (i)-2;

(i)-1: $\overleftarrow{R(i)} \succ x \succ c \succ \overleftarrow{S(\vvv_q^{(i)})} \succ \dots \succ \overleftarrow{S(\vvv_2^{(i)})} \succ \overleftarrow{S(\vvv_{1}^{(i)})}$

(i)-2: $\overleftarrow{R(i)} \succ x \succ c \succ \overrightarrow{S(\vvv_1^{(i)})} \succ \dots \succ \overrightarrow{S(\vvv_{q-1}^{(i)})} \succ \overrightarrow{S(\vvv_q^{(i)})}$

\noindent There is 1 voter, each of type (ii)-1 and (ii)-2;

(ii)-1: $\overrightarrow{F(\GG)} \succ \overrightarrow{V(\GG)} \succ x \succ \overrightarrow{E(\GG)} \succ \overrightarrow{D'} \succ c \succ \overrightarrow{D''}$

(ii)-2: Reverse (ii)-1, and then shift $c$ to the right by $B-1$ places and shift $V(\GG) \cup \{x\} \cup E(\GG)$ to the left by $1$ place.

Let $L$ be the score of $c$ prior to executing any shift actions. Simple calculations show that each candidate in $V (\GG) \cup \{x\} \cup E(\GG)$ has score $L + B + 1$, and each candidate in $F(\GG) \cup D' \cup D''$ has score at most $L + B$. Now, it is easy to show that \SBIS for Borda is a \NO instance if and only if \MCI is a \YES instance. 

$\Longrightarrow$: Here, we already know that the instance of \SBIS for Borda is a \NO instance, so $x$ is a winner when a specific subset of bribed voters perform certain shifts (an unsafe shift action). A shift action of total number of shifts $B$ gives $c$ a score of $L + B$. No more shifts are possible in an unsafe shift action, otherwise $c$ would win instead of $x$. Also, for the shift action to be unsafe, it has to cause all candidates in $V(\GG) \cup E(\GG)$ to lose a point. We claim that an unsafe shift bribery has to use exactly $\nicefrac{B}{h} = (q + (q - 1)\Delta)$ unit shifts for every pair of voters with preferences $\succ_{i_1}^{\PP}$, $\succ_{i_2}^{\PP}$. The vertices of colour $i$ are only present in votes cast by the pair of voters $v_{i_1}$ and $v_{i_2}$, which indicates that $c$ must pass all $q$ vertex candidates of colour $i$ using these votes. Now, the total number of shifts allotted to $c$ in this pair of votes cannot be less than $\nicefrac{B}{h}$ or else some vertex candidate will not be crossed and it will defeat $x$. This also proves the fact that the number of shifts for such a pair of voters cannot be greater than $\nicefrac{B}{h}$ as then there will exist some other colour $j \in [h]$ for which the total number of available shifts will be less than $\nicefrac{B}{h}$ and some vertex candidate will defeat $x$.

Further, we can assume that for each colour $i$ there is a vertex $\vvv^{(i)}_{s_i} \in V^{(i)}$ such that in $\succ^{\PP}_{i_1}$ candidate $c$ is shifted to be right in front of $\vvv^{(i)}_{s_{i+1}}$ and in $\succ^{\PP}_{i_2}$
candidate $c$ is shifted to be before $\vvv^{(i)}_{s_i}$. We call such a vertex $\vvv^{(i)}_{s_i}$ selected. If for a given pair of voters $v_{i_1}$, $v_{i_2}$ neither of the vertices from $V^{(i)}$ were selected, then there would be some vertex candidate in $V^{(i)}$ that $c$ does not pass. If for some pair of voters $v_{i_1}$, $v_{i_2}$ vertex $\vvv^{(i)}_{s_i}$
is selected, then in this pair of votes $c$ does not pass the edge candidates from $E(\vvv^{(i)}_{s_i})$.

However, this ensures that in an unsafe shift action the selected vertices form an independent set of $\GG$. Else, if two vertices $\vvv^{(i)}_{s_i}$ and $\vvv^{(j)}_{s_j}$ were selected, $i \neq j$, and if there were an edge $e$ connecting them, then $c$ would not pass the candidate $e$ in either of the pairs of votes cast by $v_{i_1}$, $v_{i_2}$ or $v_{j_1}$, $v_{j_2}$. Since these are the only votes where $c$ can pass $e$ without exceeding the number of shifts ($\nicefrac{B}{h}$), in this case $e$ would have $L + B + 1$ points, and would defeat $x$ in tie-breaking. Therefore \MCI is a \YES instance.

$\Longleftarrow$: Assume the instance of \MCI is a \YES instance. To show that \SBIS for Borda is a \NO instance, it is equivalent to show that a bad candidate (here $x$), wins. For $x$ to win, she must defeat the rest of the candidates.
Let us fix a multicoloured independent set for $\GG$ and, for each
colour $i \in [h]$, let $\vvv^{(i)}_{s_i}$ be the vertex of colour $i$ from this set. First of all we note that the total number of shifts $\leq B$ or else $c$ defeats $x$. Also, the total number of shifts is $\geq B$ or else there exists some vertex candidate which defeats $x$ in tie-breaking. Hence the total number of shifts is exactly $B$. Now, for each pair of voters $\vvv_{i_1}$ and $\vvv_{i_2}$, we shift $c$ in their votes (corresponding to $\PP$), such that in $\succ_{i_1}^{\PP}$ she ends up right before $\vvv^{(i)}_{s_{i+1}}$ (or $c$ does not move if $s_i = q$), and in $\succ_{i_2}^{\PP}$ she ends up right before $\vvv^{(i)}_{s_i}$. This way, $c$ passes every vertex candidate from $V^{(i)}$ and every edge candidate from $(\cup_{t\in[q]} E(\vvv^{(i)}_t)) \setminus E(v^{(i)}_{s_i})$. This shift action uses $\nicefrac{B}{h}$ shifts for every pair of voters, so, in total, uses exactly $B$ shifts. Furthermore, clearly, it ensures that $c$ passes every vertex candidate so each of them has score $L + B$. Finally, since we chose vertices from an independent set, every
edge candidate also has score at most $L + B$: If $c$ does not pass some edge $e$ between vertices of colours $i$ and $j$ for a pair of voters $v_{i_1}$ and $v_{i_2}$, then $c$ certainly passes $e$ in the pair of votes cast by $v_{j_1}$ and $v_{j_2}$, because $\vvv^{(i)}_{s_i}$ and $\vvv^{(j)}_{s_j}$ are not adjacent. Therefore, \SBIS for Borda is a \NO instance.
\end{proof}

%%% 
Next, we have a similar result for the Copeland voting rule parameterized by the number of bribed voters.
\begin{theorem}\label{copeland param voters}
For Copeland, \SBIS is \coWOH, when parameterized by the number of bribed voters.
\end{theorem}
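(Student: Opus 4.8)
The plan is to follow the template of \Cref{borda param voters}. First, \SBIS for Copeland lies in \coNP by the argument already used in \Cref{Theorem Copeland} (a \no instance is certified either by an unsuccessful partial shift or by a partial shift whose winner is worse than $r(\PP)$). For \coWOH-hardness I would give a parameterised reduction from the \WOH problem \MCI with parameter $h$ (the number of colours) to the complement of \SBIS for Copeland, constructing an instance that is a \no instance if and only if the \MCI instance is a \yes instance. The reduction creates $2h$ bribed voters --- a pair $v_{i_1},v_{i_2}$ per colour carrying the vertex--edge block structure $\overrightarrow{S(\vvv)}$ and its reverse, with $c$ placed just below all the colour-$i$ blocks and above $x$ and the ``other-colour'' tail $\overrightarrow{R(i)}$ --- so the parameter of the target instance is $2h$, a function of $h$ only; combined with \coNP-membership this gives \coWOH-hardness. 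As in the Borda case I would set $\succ_B = c \succ \overrightarrow{\CC\setminus\{c,x,w\}} \succ w \succ x$ and a tie-breaking order that ranks $c$ first and $x$ last, so $x$ is the unique \emph{bad} candidate and ``safety of a partial shift'' means exactly ``the winner of the partially shifted profile is not $x$''.

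The genuinely Copeland-specific work is the unbribed profile. Using McGarvey-style constructions together with rotation-type vote blocks in the spirit of \Cref{Theorem Copeland}, and an odd electorate so that Copeland$^0$ never produces a pairwise tie, I would calibrate the pairwise-majority tournament of $\PP$ so that every vertex candidate, every edge candidate, $w$ and $x$ share a common Copeland score $K$ (with $w$ winning the tie-break, so $r(\PP)=w$); $c$ beats all auxiliary dummies but loses by a single vote to $w$, to every vertex candidate, to every edge candidate and to every degree-padding dummy, so that a single overtake by $c$ in a bribed vote flips each of those head-to-head contests; and $K$ is positioned so that a shift overtaking all vertex and edge candidates but leaving at least one padding dummy unovertaken keeps $s(c)$ strictly below $K$, while the full shift $\sss$ --- which sends $c$ to the top of every bribed vote and therefore flips every vertex, edge and padding candidate plus a large top-dummy block --- makes $c$ the Condorcet, hence Copeland, winner. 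This last property delivers \textbf{Success}, $r(\QQ)=c$.

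For correctness I would argue both directions as in \Cref{borda param voters}, with ``Borda score'' replaced throughout by ``Copeland score''. If \MCI has a multicoloured independent set $\{\vvv^{(i)}_{s_i}\}_i$, the ``tight'' partial shift that in each colour-pair overtakes all colour-$i$ vertex candidates but exactly skips the incident-edge (and padding) block of $\vvv^{(i)}_{s_i}$ overtakes, by independence, every edge candidate in the pair of its other endpoint, so all vertex and edge candidates and $w$ drop to $K-1$ while $s(x)=K$ is untouched and $s(c)$ stays below $K$; hence $x$ is the unique Copeland winner and the shift is unsafe, i.e.\ the instance is a \no instance. Conversely, from any unsafe partial shift whose winner must be $x$, every vertex and every edge candidate must have been overtaken by $c$ (otherwise such a candidate, ranked before $x$, ties $x$ and wins the tie-break), which by the block-width counting forces exactly $B/h$ unit shifts in each colour-pair (with $B=h(q+(q-1)\Delta)$ as in \Cref{borda param voters}, $q$ the number of vertices per colour) and a single ``selected'' vertex per colour whose incident edges escape $c$ in their own pair; two adjacent selected vertices would leave their common edge candidate unovertaken in both relevant pairs, so it would keep score $K$ and defeat $x$ --- contradiction --- and the selected vertices therefore form a multicoloured independent set.

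The step I expect to be the real obstacle is making this counting genuinely tight for Copeland. In the Borda proof, ``an unsafe shift cannot spend even one shift beyond the budget $B$, or $c$ wins'' is immediate because each extra shift buys $c$ another Borda point; for Copeland a second overtake of an already-defeated candidate buys nothing, so I must engineer the gadget --- the degree-padding dummies, the large top-dummy block, and, if necessary, a preliminary reduction to \MCI on $\Delta$-regular graphs --- so that every ``excess'' shift necessarily overtakes a fresh candidate, pushing $s(c)$ up to $K$ and letting $c$, first in the tie-break, win; otherwise a shift that ``fills the holes'' on some colours could make $x$ win without a full multicoloured independent set. Getting that rigidity to coexist with all the simultaneous score equalities and with the $K$-window is the delicate part; once the tournament of $\PP$ is set up correctly, both implications are routine bookkeeping.
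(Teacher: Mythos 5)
Your high-level plan coincides with the paper's: an fpt-reduction from \MCI with $2h$ bribed voters (one pair per colour), a single bad candidate $x$, and ``selected'' vertices that must form a multicoloured independent set. You also correctly isolate the one place where the Borda argument does not transfer: in Copeland a shift that re-crosses an already-defeated candidate gains $c$ nothing, so the ``exactly $B/h$ shifts per colour pair'' counting --- which in Borda follows from one-shift-one-point --- has no justification. But you then leave exactly that step unproved: you say you ``must engineer the gadget so that every excess shift necessarily overtakes a fresh candidate'' and concede that making this coexist with the score equalities is ``the delicate part.'' That step is the entire Copeland-specific content of the theorem; without a concrete construction the proposal is a plan, not a proof. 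Moreover, the fix you gesture at (padding dummies plus $\Delta$-regularity so that each shift buys a fresh pairwise win) cannot work as stated: an edge candidate occurs in the bribed votes of both of its endpoints' colours, and the intended unsafe shift deliberately skips the edge block of the selected vertex in its own colour pair while crossing those same edge candidates in the other colour's pair, so even in the intended solution shifts are not in bijection with fresh overtakes, and a uniform ``each excess shift raises $s(c)$'' accounting is unavailable.

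The paper closes this gap with a device your proposal does not contain: for every vertex $v_k^{(i)}$ it introduces two stopper candidates $h_k^{(i)}$ and $j_k^{(i)}$, interleaved so that each block $\overrightarrow{S_k^{(i)}}$ (respectively $\overrightarrow{T_k^{(i)}}$) is bracketed by stoppers, and it calibrates the unbribed profile so that vertex, edge and filler candidates beat $c$ pairwise by margin $1$ while the stoppers beat $c$ by margin $3$. A single crossing therefore flips a vertex/edge/filler contest but leaves a stopper contest intact, whereas overshooting the per-pair shift budget forces $c$ past some stopper in both votes of the pair, flipping that contest and (by the ensuing case analysis) rendering the outcome safe. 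It is this asymmetry of pairwise margins, not extra padding or regularity of the graph, that restores the tightness of the counting; your argument would need to supply an equivalent mechanism before the two correctness directions become the ``routine bookkeeping'' you claim they are.
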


\begin{proof}
%Kindly refer to the Appendix, as the proof is similar to that of \Cref{borda param voters}.
Let $(\CC,\PP,c,\succ_B,\QQ)$ be an instance of \SBIS for Copeland. We give a parameterized reduction from the \WOH \MCI problem. Let $(\GG, h)$ be our input instance. Without loss of generality, we assume that the number of vertices of each colour is the same and that there are no edges
between vertices of the same colour. The notation for the vertices, edges, coloured vertices, incident edges, and degree is as defined in \Cref{borda param voters}.
We form an instance of \SBIS for Copeland as follows. We let the candidate set be:
$$ \CC = \{c,x\} \cup V(\GG) \cup E(\GG) \cup F(\GG) \cup H \cup J \cup D $$
Let $\vvv_j^{(i)}$ be the $j^{\text{th}}$ ranked vertex of the $i^{\text{th}}$ colour. For each vertex $v$, we let $F(\vvv)$ be a set of $\Delta - \delta(\vvv)$ filler candidates, and we let $F(\GG) = \cup_{\vvv \in V(\GG)} F(\vvv)$. We set $F(-i) = \cup_{\vvv \in V(i'), i' \neq i} F(\vvv)$. For each vertex $v_k^{(i)}$, we define two dummy candidates $h_k^{(i)}$ and $j_k^{(i)}$. We define $H(i) = \cup_{k \in [q]} \{h_k^{(i)}\}$ and $J(i) = \cup_{k \in [q]} \{j_k^{(i)}\}$. Also, we define $H(-i) = \cup_{v \in V(i'), i' \neq i} H(i)$ and $J(-i) = \cup_{v \in V(i'), i' \neq i} J(i)$. Also, $H = \cup_{i \in [h]} H(i)$ and $J = \cup_{i \in [h]} J(i)$ and $D$ is a set of $B$ dummy candidates, where $B = hq + h(q-1)\Delta$.
For each colour $i$, we define $\overrightarrow{R(i)}$ to be a partial preference order that ranks first all members of $H(-i)$, then all members of $J(-i)$, then all vertex candidates of colours other than $i$, then all edge candidates corresponding to edges that are not incident to a vertex of colour $i$, then all vertices from $F(-i)$, and finally, all vertices from $D$. Let $w \in V(\GG) \cup E(\GG)$ be the winner in $\PP$.
Let the briber's preference order, $\succ_B$, be:
$$c \succ \overrightarrow{\CC \setminus \{w, c, x\}} \succ w \succ x$$
Let the tie-breaking rule, $\succ_t$, be:
$$c \succ w \succ \overrightarrow{\CC \setminus \{w, c, x\}} \succ x$$

This makes $x$ the only bad candidate. We define $\overrightarrow{S_k^{(i)}} = v_k^{(i)} \succ h_{k-1}^{(i)} \succ \overrightarrow{E_k^{(i)}} \succ \overrightarrow{F_k^{(i)}} \succ j_{k-1}^{(i)}$ and $\overrightarrow{T_k^{(i)}} = j_k^{(i)} \succ \overleftarrow{F_k^{(i)}} \succ \overleftarrow{E_k^{(i)}} \succ h_k^{(i)} \succ v_k^{(i)}$.
Let $\VV_b$ be the set of bribed voters. We define two disjoint sets of $h$ voters each, $\VV_{b_1}$ and $\VV_{b_2}$ such that $\VV_b = \VV_{b_1} \uplus \VV_{b_2}$. For each $i \in [h]$, we have a voter $v_{i_1} \in \VV_{b_1}$, whose preferences in $\PP$ and $\QQ$
are as follows:
$$ \succ_{i_1}^{\PP} =  \overrightarrow{S_1^{(i)}} \succ \ldots \succ \overrightarrow{S_q^{(i)}} \succ c \succ x \succ h_q^{(i)} \succ j_q^{(i)} \succ \overrightarrow{R(i)}$$
$$ \succ_{i_1}^{\QQ} =  c \succ \overrightarrow{S_1^{(i)}} \succ \ldots \succ \overrightarrow{S_q^{(i)}} \succ x \succ h_q^{(i)} \succ j_q^{(i)} \succ \overrightarrow{R(i)}$$

For each $i \in [h]$, we have a voter $v_{i_2} \in \VV_{b_2}$, whose preferences in $\PP$ and $\QQ$ are as follows:
$$ \succ_{i_2}^{\PP} = \overrightarrow{T_q^{(i)}} \succ \ldots \succ \overrightarrow{T_1^{(i)}} \succ c \succ x \succ j_0^{(i)} \succ h_0^{(i)} \succ \overrightarrow{R(i)}$$
$$ \succ_{i_2}^{\QQ} = c \succ \overrightarrow{T_q^{(i)}} \succ \ldots \succ \overrightarrow{T_1^{(i)}} \succ x \succ j_0^{(i)} \succ h_0^{(i)} \succ \overrightarrow{R(i)}$$
Let $\VV_u$ be the voters who were not bribed. They are as follows:
There are $[h]$ voters ($\forall i \in [h]$), each of type (i)-1 and (i)-2:

(i)-1: $\overleftarrow{R(i)} \succ j_q^{(i)} \succ h_q^{(i)} \succ x \succ c \succ \overleftarrow{S_q^{(i)}} \succ \ldots \succ \overleftarrow{S_1^{(i)}}$

(i)-2: $\overleftarrow{R(i)} \succ h_0^{(i)} \succ j_0^{(i)} \succ x \succ c \succ \overleftarrow{T_1^{(i)}} \succ \ldots \succ \overleftarrow{T_q^{(i)}}$

\noindent Consider a fixed ordering $\overrightarrow{\DD_1}$ on the set $D$. To get $\overrightarrow{\DD_k}$, we rotate $\overrightarrow{\DD_1}$ clockwise by $k-1$ candidates. We define $(|D|-1)/2$ voters of type (ii)-1 and (ii)-2 each as follows:

(ii)-(2k-1): $\overrightarrow{F} \succ \overrightarrow{E} \succ \overrightarrow{V} \succ x \succ \overrightarrow{H} \succ \overrightarrow{J} \succ \overrightarrow{\DD_{2k-1}} \succ c$

(ii)-(2k): $c \succ \overrightarrow{\DD_{2k}} \succ \overleftarrow{J} \succ \overleftarrow{H} \succ x \succ \overleftarrow{V} \succ \overleftarrow{E} \succ \overleftarrow{F}$

\noindent Consider an ordering $\overrightarrow{\EE_1}$ over the set $E \cup V \cup x$. We define 1 voter of type (iii), as  follows:

(iii): $\overrightarrow{\EE_1} \succ \overrightarrow{F} \succ \overrightarrow{H} \succ \overrightarrow{J} \succ \overrightarrow{\DD_1} \succ c$

\noindent To get $\overrightarrow{\EE_k}$, we rotate $\overrightarrow{\DD_1}$ clockwise by $k-1$ candidates. We define $(|E|+|V|)/2$ voters of type (iv)-(2k) and (iv)-(2k+1) each as follows:

(iv)-(2k): $c \succ \overleftarrow{\DD_1} \succ \overleftarrow{J} \succ \overleftarrow{H} \succ \overleftarrow{F} \succ \overrightarrow{\EE_{2k}}$

(iv)-(2k+1): $\overrightarrow{\EE_{2k+1}} \succ \overrightarrow{F} \succ \overrightarrow{H} \succ \overrightarrow{J} \succ \overrightarrow{\DD_1} \succ c$

\noindent $\forall i \in [h]$ and $\forall k \in [q+1]$, we have one vote each of the following two types:

(v)-1: $h_{k-1}^{(i)} \succ j_{k-1}^{(i)} \succ c \succ \overrightarrow{\CC \setminus \{h_{k-1}^{(i)} \cup j_{k-1}^{(i)} \cup c\}}$

(v)-2: $\overleftarrow{\CC \setminus \{h_{k-1}^{(i)} \cup j_{k-1}^{(i)} \cup c\}} \succ j_{k-1}^{(i)} \succ h_{k-1}^{(i)} \succ c$

Due to the voters in $\VV_u$, the Copeland scores of the candidates are as follows: For each candidate $a \in V(\GG) \cup \{x\} \cup E(\GG)$, $s(a) = B+1$ and $s(c)=0$. For each candidate $b \in D \cup H \cup J$, $s(b) < B+1$. We can also see that the important pairwise vote scores with respect to $c$ are $vs(E,c) = vs(F,c) = vs(V,c) = 1$ and $vs(H,c) = vs(J,c) = 3$.

Now, it is easy to show that the given instance of \SBIS for Copeland is a \NO instance if and only if the instance of \MCI is a \YES instance.

$\Longrightarrow$: Here, we already know that the instance of \SBIS for Copeland is a \NO instance, so $x$ is a winner when a specific subset of bribed voters perform certain shifts (an unsafe shift action). A shift action of total number of shifts $B+2h(q-1)$ gives $c$ a score of $B$. If the total number of shifts were less than this, then for a bribed vote pair corresponding to some colour would have total number of shifts less than $\nicefrac{B}{h}+2(q-1)$, implying that some $\vvv \in V(\GG)$ would not be crossed by $c$ and would win. Hence, the bribery would be safe, which is a contradiction. Also, for the shift action to be unsafe, it has to cause all candidates in $V(\GG) \cup E(\GG)$ to lose a point. We claim that an unsafe shift action must use no more than $\nicefrac{B}{h} + 2(q - 1))$ unit shifts for every pair of voters with preferences $\succ_{i_1}^{\PP}$, $\succ_{i_2}^{\PP}$. If this were not true, then $c$ would defeat some candidate from $H \cup J$. Now, if $c$ were to defeat every edge candidate, then $c$ would win, rendering the election safe. In the alternative case, if an edge candidate $e \in E(\GG)$ were to beat $c$ in pairwise election, then $e$ would beat $x$, hence the election would be safe. These are contradictions. 

Further, we can assume that for each colour $i$ there is a vertex $\vvv^{(i)}_{s_i} \in V^{(i)}$ such that in $\succ^{\PP}_{i_1}$ candidate $c$ is shifted to end up right before $\vvv^{(i)}_{s_{i+1}}$ and in $\succ^{\PP}_{i_2}$
candidate $c$ is shifted to end up right before $\vvv^{(i)}_{s_i}$. We call such a vertex $\vvv^{(i)}_{s_i}$ `selected'. If for a given pair of voters $v_{i_1}$, $v_{i_2}$ neither of the vertices from $V^{(i)}$ were selected, then there would be some vertex candidate in $V^{(i)}$ that $c$ does not pass. If for some pair of voters $v_{i_1}$, $v_{i_2}$ vertex $\vvv^{(i)}_{s_i}$ is selected, then in this pair of votes $c$ does not pass the edge candidates from $E(\vvv^{(i)}_{s_i})$.

However, this ensures that in an unsafe shift action the selected vertices form an independent set of $\GG$. Else, if two vertices $\vvv^{(i)}_{s_i}$ and $\vvv^{(j)}_{s_j}$ were selected, $i \neq j$, and if there were an edge $e$ connecting them, then $c$ would not cross the candidate $e$ in either of the pairs of votes cast by $v_{i_1}$, $v_{i_2}$ or $v_{j_1}$, $v_{j_2}$. (These are the only votes where $c$ can cross $e$ without crossing a candidate in $H \cup J$, and thereby making the election safe). This would imply $e$ having $B + 1$ points, and hence defeating $x$ in tie-breaking, rendering the election unsafe. This again is a contradiction. Therefore, \MCI is a \YES instance.

$\Longleftarrow$: Assume the instance of \MCI is a \YES instance. Showing that \SBIS for Copeland is a \NO instance is equivalent to showing that a bad candidate (here $x$), can win. For $x$ to win, she must defeat all the other candidates. Consider that $\GG$ has a multicoloured independent set $I$ and, for each colour $i \in [h]$, let $\vvv^{(i)}_{s_i}$ be the vertex of colour $i$ from this set. It can be shown that $x$ wins using $B+2h(q-1)$ shifts. For each pair of voters $v_{i_1}$ and $v_{i_2}$, we shift $c$ in their votes (corresponding to $\PP$), such that in $\succ_{i_1}^{\PP}$ she ends up right before $\vvv^{(i)}_{s_{i}+1}$ (or $c$ ends up right before $h_q^{(i)}$, if $s_i = q$), and in $\succ_{i_2}^{\PP}$ she ends up right before $\vvv^{(i)}_{s_i}$. This way, $c$ passes every vertex candidate from $V^{(i)}$ and every edge candidate from $(\cup_{t\in[q]} E(\vvv^{(i)}_t)) \setminus E(\vvv^{(i)}_{s_i})$. This shift action uses $\nicefrac{B}{h}+2(q-1)$ shifts for every pair of voters, so, in total, uses exactly $B$ shifts. Furthermore, it ensures that $c$ passes every vertex candidate
so each of them has a final score of $B$. Finally, since we chose vertices from an independent set, every
edge candidate also has score at most $B$: If $c$ does not pass some edge $e$ between vertices of
colours $i$ and $j$ for a pair of voters $v_{i_1}$ and $v_{i_2}$, then $c$ certainly passes $e$ in the pair of votes cast by $v_{j_1}$ and $v_{j_2}$, because $\vvv^{(i)}_{s_i}$ and $\vvv^{(j)}_{s_j}$ are not adjacent. Therefore, \SBIS for Copeland is a \NO instance.
\end{proof}

\section{Conclusion}
In this paper, we propose and study a nuanced notion of bribery which we call safe bribery. We observe that the computational complexity of safe bribery, for both \$bribery and shift bribery, matches with the classical bribery problem for common voting rules. Hence, safety during bribery can often be achieved without incurring much additional computational overhead. However, we obtained some interesting results for $k$-approval, $k$-veto and simplified Bucklin, which were in \Pb for the shift bribery problems but hard for \$bribery problems. Our work is a natural extension of the bribery problem and can be further studied with respect to approximation algorithms and multi-winner voting rules.

\bibliographystyle{plainurl} 
\bibliography{lipics-v2019-sample-article}

\begin{thebibliography}{10}

\bibitem{BaumeisterFLR12}
Dorothea Baumeister, Piotr Faliszewski, J{\'{e}}r{\^{o}}me Lang, and J{\"{o}}rg
  Rothe.
\newblock Campaigns for lazy voters: truncated ballots.
\newblock In {\em Proc. 11th International Conference on Autonomous Agents and
  Multiagent Systems, {AAMAS} 2012, Valencia, Spain, June 4-8, 2012 {(3}
  Volumes)}, pages 577--584, 2012.

\bibitem{baumeister2019generalized}
Dorothea Baumeister, Tobias Hogrebe, and Lisa Rey.
\newblock Generalized distance bribery.
\newblock In {\em Proceedings of the AAAI Conference on Artificial
  Intelligence}, volume~33, pages 1764--1771, 2019.

\bibitem{baumeister2011computational}
Dorothea Baumeister, Magnus Roos, and J{\"o}rg Rothe.
\newblock Computational complexity of two variants of the possible winner
  problem.
\newblock In {\em The 10th International Conference on Autonomous Agents and
  Multiagent Systems-Volume 2}, pages 853--860, 2011.

\bibitem{erdelyi2009complexity}
Daniel Binkele{-}Raible, G{\'{a}}bor Erd{\'{e}}lyi, Henning Fernau, Judy
  Goldsmith, Nicholas Mattei, and J{\"{o}}rg Rothe.
\newblock The complexity of probabilistic lobbying.
\newblock In {\em Algorithmic Decision Theory}, volume~11, pages 1--21.
  Discrete Optimization, 2014.

\bibitem{bredereck2014prices}
Robert Bredereck, Jiehua Chen, Piotr Faliszewski, Andr{\'e} Nichterlein, and
  Rolf Niedermeier.
\newblock Prices matter for the parameterized complexity of shift bribery.
\newblock In {\em Proc. 28th AAAI Conference on Artificial Intelligence
  (AAAI)}, pages 1398--1404, 2014.

\bibitem{bredereck2016complexity}
Robert Bredereck, Piotr Faliszewski, Rolf Niedermeier, and Nimrod Talmon.
\newblock Complexity of shift bribery in committee elections.
\newblock In {\em Proc. 30th AAAI Conference on Artificial Intelligence
  (AAAI)}, pages 2452--2458, 2016.

\bibitem{brelsford2008approximability}
Eric Brelsford, Piotr Faliszewski, Edith Hemaspaandra, Henning Schnoor, and
  Ilka Schnoor.
\newblock Approximability of manipulating elections.
\newblock In {\em AAAI}, volume~8, pages 44--49, 2008.

\bibitem{chen2018protecting}
Lin Chen, Lei Xu, Shouhuai Xu, Zhimin Gao, Nolan Shah, Yang Lu, and Weidong
  Shi.
\newblock Protecting election from bribery: New approach and computational
  complexity characterization.
\newblock In {\em Proc. 17th International Conference on Autonomous Agents and
  MultiAgent Systems (AAMAS)}, pages 1894--1896, 2018.

\bibitem{DBLP:reference/choice/ConitzerW16}
Vincent Conitzer and Toby Walsh.
\newblock Barriers to manipulation in voting.
\newblock In {\em Handbook of Computational Social Choice}, pages 127--145.
  2016.
\newblock \href {https://doi.org/10.1017/CBO9781107446984.007}
  {\path{doi:10.1017/CBO9781107446984.007}}.

\bibitem{cormen2009introduction}
Thomas~H Cormen, Charles~E Leiserson, Ronald~L Rivest, and Clifford Stein.
\newblock {\em Introduction to algorithms}.
\newblock MIT press, 2009.

\bibitem{cygan2015parameterized}
Marek Cygan, Fedor~V Fomin, {\L}ukasz Kowalik, Daniel Lokshtanov, D{\'a}niel
  Marx, Marcin Pilipczuk, Micha{\l} Pilipczuk, and Saket Saurabh.
\newblock {\em Parameterized algorithms}, volume~5.
\newblock Springer, 2015.

\bibitem{DBLP:journals/tcs/Dey21}
Palash Dey.
\newblock Local distance constrained bribery in voting.
\newblock {\em Theor. Comput. Sci.}, 849:1--21, 2021.
\newblock \href {https://doi.org/10.1016/j.tcs.2020.10.005}
  {\path{doi:10.1016/j.tcs.2020.10.005}}.

\bibitem{DeyMN16}
Palash Dey, Neeldhara Misra, and Y.~Narahari.
\newblock Kernelization complexity of possible winner and coalitional
  manipulation problems in voting.
\newblock {\em Theor. Comput. Sci.}, 616:111--125, 2016.
\newblock \href {https://doi.org/10.1016/j.tcs.2015.12.023}
  {\path{doi:10.1016/j.tcs.2015.12.023}}.

\bibitem{frugalDeyMN16}
Palash Dey, Neeldhara Misra, and Y.~Narahari.
\newblock Frugal bribery in voting.
\newblock {\em Theor. Comput. Sci.}, 676:15--32, 2017.

\bibitem{DBLP:journals/tcs/DeyMNS21}
Palash Dey, Neeldhara Misra, Swaprava Nath, and Garima Shakya.
\newblock A parameterized perspective on protecting elections.
\newblock {\em Theor. Comput. Sci.}, 874:15--31, 2021.
\newblock \href {https://doi.org/10.1016/j.tcs.2021.05.006}
  {\path{doi:10.1016/j.tcs.2021.05.006}}.

\bibitem{dorn2014hardness}
Britta Dorn and Dominikus Kr{\"{u}}ger.
\newblock On the hardness of bribery variants in voting with cp-nets.
\newblock {\em Ann. Math. Artif. Intell.}, 77(3-4):251--279, 2016.
\newblock URL: \url{http://dx.doi.org/10.1007/s10472-015-9469-3}, \href
  {https://doi.org/10.1007/s10472-015-9469-3}
  {\path{doi:10.1007/s10472-015-9469-3}}.

\bibitem{dorn2012multivariate}
Britta Dorn and Ildik{\'o} Schlotter.
\newblock Multivariate complexity analysis of swap bribery.
\newblock {\em Algorithmica}, 64(1):126--151, 2012.

\bibitem{edmonds1972theoretical}
Jack Edmonds and Richard~M Karp.
\newblock Theoretical improvements in algorithmic efficiency for network flow
  problems.
\newblock {\em Journal of the ACM (JACM)}, 19(2):248--264, 1972.

\bibitem{elkind2010approximation}
Edith Elkind and Piotr Faliszewski.
\newblock Approximation algorithms for campaign management.
\newblock In {\em International Workshop on Internet and Network Economics},
  pages 473--482. Springer, 2010.

\bibitem{elkind2009swap}
Edith Elkind, Piotr Faliszewski, and Arkadii Slinko.
\newblock Swap bribery.
\newblock In {\em Proc. 2nd International Symposium on Algorithmic Game Theory
  (SAGT 2009)}, pages 299--310. Springer, 2009.

\bibitem{edithswap}
Edith Elkind, Piotr Faliszewski, and Arkadii Slinko.
\newblock Swap bribery.
\newblock In Marios Mavronicolas and Vicky~G. Papadopoulou, editors, {\em
  Algorithmic Game Theory}, pages 299--310, Berlin, Heidelberg, 2009. Springer
  Berlin Heidelberg.

\bibitem{erdelyi2014bribery}
Gabor Erdelyi, Edith Hemaspaandra, and Lane~A Hemaspaandra.
\newblock Bribery and voter control under voting-rule uncertainty.
\newblock In {\em Proceedings of the 2014 international conference on
  Autonomous agents and multi-agent systems}, pages 61--68, 2014.

\bibitem{faliszewski2008nonuniform}
Piotr Faliszewski.
\newblock Nonuniform bribery.
\newblock In {\em Proceedings of the 7th International Joint Conference on
  Autonomous Agents and Multiagent Systems - Volume 3}, AAMAS '08, page
  1569–1572. International Foundation for Autonomous Agents and Multiagent
  Systems, 2008.

\bibitem{faliszewski2018opinion}
Piotr Faliszewski, Rica Gonen, Martin Kouteck{\`y}, and Nimrod Talmon.
\newblock Opinion diffusion and campaigning on society graphs.
\newblock In {\em Proc. 27th International Joint Conference on Artificial
  Intelligence, {IJCAI}}, pages 219--225, 2018.

\bibitem{faliszewski2006complexity}
Piotr Faliszewski, Edith Hemaspaandra, and Lane~A Hemaspaandra.
\newblock The complexity of bribery in elections.
\newblock In {\em AAAI}, volume~6, pages 641--646, 2006.

\bibitem{faliszewski2009hard}
Piotr Faliszewski, Edith Hemaspaandra, and Lane~A Hemaspaandra.
\newblock How hard is bribery in elections?
\newblock {\em Journal of artificial intelligence research}, 35:485--532, 2009.

\bibitem{faliszewski2009llull}
Piotr Faliszewski, Edith Hemaspaandra, Lane~A Hemaspaandra, and J{\"o}rg Rothe.
\newblock Llull and copeland voting computationally resist bribery and
  constructive control.
\newblock {\em Journal of Artificial Intelligence Research}, 35:275--341, 2009.

\bibitem{faliszewski2008copeland}
Piotr Faliszewski, Edith Hemaspaandra, and Henning Schnoor.
\newblock Copeland voting: Ties matter.
\newblock In {\em Proceedings of the 7th international joint conference on
  Autonomous agents and multiagent systems-Volume 2}, pages 983--990. Citeseer,
  2008.

\bibitem{faliszewski2014complexity}
Piotr Faliszewski, Yannick Reisch, J{\"o}rg Rothe, and Lena Schend.
\newblock Complexity of manipulation, bribery, and campaign management in
  bucklin and fallback voting.
\newblock In {\em Proc. 13th International Conference on Autonomous Agents and
  Multiagent Systems (AAMAS)}, pages 1357--1358, 2014.

\bibitem{faliszewski2015complexity}
Piotr Faliszewski, Yannick Reisch, J{\"o}rg Rothe, and Lena Schend.
\newblock Complexity of manipulation, bribery, and campaign management in
  bucklin and fallback voting.
\newblock {\em Autonomous Agents and Multi-Agent Systems}, 29(6):1091--1124,
  2015.

\bibitem{DBLP:reference/choice/FaliszewskiR16}
Piotr Faliszewski and J{\"{o}}rg Rothe.
\newblock Control and bribery in voting.
\newblock In Felix Brandt, Vincent Conitzer, Ulle Endriss, J{\'{e}}r{\^{o}}me
  Lang, and Ariel~D. Procaccia, editors, {\em Handbook of Computational Social
  Choice}, pages 146--168. Cambridge University Press, 2016.
\newblock \href {https://doi.org/10.1017/CBO9781107446984.008}
  {\path{doi:10.1017/CBO9781107446984.008}}.

\bibitem{fulkerson1961out}
Delbert~R Fulkerson.
\newblock An out-of-kilter method for minimal-cost flow problems.
\newblock {\em Journal of the Society for Industrial and Applied Mathematics},
  9(1):18--27, 1961.

\bibitem{gareyjohnson}
Michael~R. Garey and David~S. Johnson.
\newblock {\em Computers and Intractability; A Guide to the Theory of
  NP-Completeness}.
\newblock W. H. Freeman \& Co., USA, 1990.

\bibitem{hazon2010complexity}
Noam Hazon and Edith Elkind.
\newblock Complexity of safe strategic voting.
\newblock In {\em International Symposium on Algorithmic Game Theory}, pages
  210--221. Springer, 2010.

\bibitem{ianovski2011complexity}
Egor Ianovski, Lan Yu, Edith Elkind, and Mark~C Wilson.
\newblock The complexity of safe manipulation under scoring rules.
\newblock In {\em Twenty-Second International Joint Conference on Artificial
  Intelligence}, 2011.

\bibitem{jackson2008consensus}
Benjamin~N Jackson, Patrick~S Schnable, and Srinivas Aluru.
\newblock Consensus genetic maps as median orders from inconsistent sources.
\newblock {\em IEEE/ACM Transactions on computational biology and
  bioinformatics}, 5(2):161--171, 2008.

\bibitem{kaczmarczyk2016algorithms}
Andrzej Kaczmarczyk and Piotr Faliszewski.
\newblock Algorithms for destructive shift bribery.
\newblock In {\em Proc. 15th International Conference on Autonomous Agents \&
  Multiagent Systems (AAMAS)}, pages 305--313, 2016.

\bibitem{DBLP:conf/aaai/KellerHH18}
Orgad Keller, Avinatan Hassidim, and Noam Hazon.
\newblock Approximating bribery in scoring rules.
\newblock In {\em Proc. 32nd International Conference on Artificial
  Intelligence (AAAI)}, pages 1121--1129, 2018.

\bibitem{Knop}
Du\v{s}an Knop, Martin Kouteck\'{y}, and Matthias Mnich.
\newblock A unifying framework for manipulation problems.
\newblock In {\em Proc. 17th International Conference on Autonomous Agents and
  MultiAgent Systems (AAMAS)}, pages 256--264, 2018.

\bibitem{MATHIESON2012179}
Luke Mathieson and Stefan Szeider.
\newblock Editing graphs to satisfy degree constraints: A parameterized
  approach.
\newblock {\em Journal of Computer and System Sciences}, 78(1):179--191, 2012.
\newblock JCSS Knowledge Representation and Reasoning.
\newblock URL:
  \url{https://www.sciencedirect.com/science/article/pii/S0022000011000067},
  \href {https://doi.org/https://doi.org/10.1016/j.jcss.2011.02.001}
  {\path{doi:https://doi.org/10.1016/j.jcss.2011.02.001}}.

\bibitem{mattei2012bribery}
Nicholas Mattei, Maria~Silvia Pini, Francesca Rossi, and Kristen~Brent Venable.
\newblock Bribery in voting over combinatorial domains is easy.
\newblock In {\em ISAIM}, 2012.

\bibitem{maushagen2018complexity}
Cynthia Maushagen, Marc Neveling, J{\"o}rg Rothe, and Ann-Kathrin Selker.
\newblock Complexity of shift bribery in iterative elections.
\newblock In {\em Proc. 17th International Conference on Autonomous Agents and
  MultiAgent Systems (AAMAS)}, pages 1567--1575, 2018.

\bibitem{DBLP:conf/aaai/ObraztsovaE12}
Svetlana Obraztsova and Edith Elkind.
\newblock Optimal manipulation of voting rules.
\newblock In {\em Proc. 26th AAAI Conference on Artificial Intelligence
  (AAAI)}, pages 2141--2147, 2012.

\bibitem{DBLP:conf/aamas/ObraztsovaE12}
Svetlana Obraztsova and Edith Elkind.
\newblock Optimal manipulation of voting rules.
\newblock In {\em Proc. 11th International Conference on Autonomous Agents and
  Multiagent Systems (AAMAS)}, pages 619--626, 2012.

\bibitem{PennockHG00}
David~M. Pennock, Eric Horvitz, and C.~Lee Giles.
\newblock Social choice theory and recommender systems: Analysis of the
  axiomatic foundations of collaborative filtering.
\newblock In {\em Proc. 17th National Conference on Artificial Intelligence and
  12th Conference on on Innovative Applications of Artificial Intelligence,
  July 30 - August 3, 2000, Austin, Texas, {USA.}}, pages 729--734, 2000.
\newblock URL: \url{http://www.aaai.org/Library/AAAI/2000/aaai00-112.php}.

\bibitem{pini2013bribery}
Maria~Silvia Pini, Francesca Rossi, and Kristen~Brent Venable.
\newblock Bribery in voting with soft constraints.
\newblock In {\em Proceedings of the AAAI Conference on Artificial
  Intelligence}, volume~27, 2013.

\bibitem{schlotter2011campaign}
Ildik{\'o} Schlotter, Piotr Faliszewski, and Edith Elkind.
\newblock Campaign management under approval-driven voting rules.
\newblock In {\em Proceedings of the AAAI Conference on Artificial
  Intelligence}, volume~25, 2011.

\bibitem{schlotter2017campaign}
Ildik{\'o} Schlotter, Piotr Faliszewski, and Edith Elkind.
\newblock Campaign management under approval-driven voting rules.
\newblock {\em Algorithmica}, 77(1):84--115, 2017.

\bibitem{slinko2008nondictatorial}
Arkadii Slinko and Shaun White.
\newblock Nondictatorial social choice rules are safely manipulable.
\newblock In {\em Proceedings of the Second International Workshop on
  Computational Social Choice (COMSOC-2008)}, pages 403--414, 2008.

\bibitem{slinko2014ever}
Arkadii Slinko and Shaun White.
\newblock Is it ever safe to vote strategically?
\newblock {\em Social Choice and Welfare}, 43(2):403--427, 2014.

\bibitem{xia2012computing}
Lirong Xia.
\newblock Computing the margin of victory for various voting rules.
\newblock In {\em Proceedings of the 13th ACM Conference on Electronic
  Commerce}, pages 982--999, 2012.

\bibitem{xia2014fixed}
Lirong Xia.
\newblock Fixed-parameter tractability of integer generalized scoring rules.
\newblock In {\em Proceedings of the 2014 international conference on
  Autonomous agents and multi-agent systems}, pages 1599--1600, 2014.

\bibitem{yang2020complexity}
Yongjie Yang.
\newblock On the complexity of destructive bribery in approval-based
  multi-winner voting.
\newblock In {\em Proceedings of the 19th International Conference on
  Autonomous Agents and MultiAgent Systems}, pages 1584--1592, 2020.

\bibitem{yang2016hard}
Yongjie Yang, Yash~Raj Shrestha, and Jiong Guo.
\newblock How hard is bribery with distance restrictions?
\newblock In {\em ECAI}, pages 363--371, 2016.

\end{thebibliography}

\end{document}